\documentclass[11pt]{article}

\usepackage{fullpage}

\usepackage{dsfont}
\usepackage{latexsym}
\usepackage{natbib}
\usepackage{amsmath}
\usepackage{amsthm}
\usepackage{amssymb}
\usepackage{amsfonts}
\usepackage{mathrsfs}
\usepackage{aliascnt}
\usepackage{pstricks}
\usepackage{pst-all}
\usepackage{pstricks-add}
\usepackage{pst-plot}
\usepackage{subfig}
\usepackage{graphicx}
\usepackage[pdfpagelabels,pdfpagemode=None]{hyperref}
\usepackage{algorithmic}
\usepackage{algorithm}

\usepackage{enumitem}
\setlist[enumerate]{label=(\alph*), ref=(\alph*)}

\delimiterfactor=1100

\setlength{\delimitershortfall}{0pt}

\newcommand{\argmax}{\operatorname{arg\,max}}

\newtheorem{theorem}{Theorem}[section]%

\newaliascnt{lemma}{theorem}
\newtheorem{lemma}[lemma]{Lemma}%
\aliascntresetthe{lemma}

\newaliascnt{claim}{theorem}
\newtheorem{claim}[claim]{Claim}%
\aliascntresetthe{claim}

\newaliascnt{corollary}{theorem}
\newtheorem{corollary}[corollary]{Corollary}%
\aliascntresetthe{corollary}

\newaliascnt{proposition}{theorem}
\newtheorem{proposition}[proposition]{Proposition}%
\aliascntresetthe{proposition}

\newaliascnt{remark}{theorem}
\theoremstyle{definition}

\aliascntresetthe{remark}

\newaliascnt{algo}{procedure}
\aliascntresetthe{algo}

\theoremstyle{definition}
\newtheorem{definition}{Definition}[section]

\newtheorem{example}{Example}[section]

%
%



\newcommand{\AutoAdjust}[3]{\mathchoice{ \left #1 #2  \right #3}{#1 #2 #3}{#1 #2 #3}{#1 #2 #3} }
\newcommand{\Xcomment}[1]{{}}

\newcommand{\InBrackets}[1]{\AutoAdjust{[}{#1}{]}}
\newcommand{\Ex}[2][]{\operatorname{\mathbf E}_{#1}\InBrackets{#2}}

\def\expect{\Ex}

\newcommand{\dd}{\mathrm{d}}  
\newcommand{\given}{\;\mid\;}

\newcommand{\rand}{\pi}
\newcommand{\ralloc}[1][\rand]{\alloc^{#1}}
\newcommand{\rprice}[1][\rand]{\price^{#1}}
\newcommand{\bid}{b}
\newcommand{\strat}{s}
\newcommand{\strats}{{\mathbf \strat}}

\newcommand{\virt}{\varphi}

\newcommand{\val}{v}
\newcommand{\vals}{{\mathbf \val}}
\newcommand{\vali}[1][i]{\val_{#1}}
\newcommand{\alloc}{x}

\newcommand{\price}{p}
\newcommand{\util}{u}

\newcommand{\capa}{C}
\newcommand{\ucapa}{\util_\capa}

\newcommand{\dist}{F}
\newcommand{\dens}{f}
\newcommand{\mech}{\mathcal M}
\DeclareMathOperator{\REV}{Rev}
\DeclareMathOperator{\WEL}{Welfare}

\newcommand{\qq}{\alloc}
\newcommand{\qc}{\qq_{\capa}}
\newcommand{\qv}{\qq_{\mathrm{val}}}

\newcommand{\lowval}{\val^{-}}
\newcommand{\highval}{\val^{+}}



\newcommand{\pricern}{\price^{\mathrm{RN}}}

\newcommand{\pricevc}{(\val - \capa)^+}
\newcommand{\virtplus}{\virt^+}

\newcommand{\pricebar}{\bar{\price}}

\newcommand{\qcbar}{\bar\qq_{\capa}}
\newcommand{\qvbar}{\bar\qq_{\mathrm{val}}}
\newcommand{\qqbar}{\bar{\qq}}
\newcommand{\crate}{r}

\newcommand{\pI}{\pricebar^{\mathrm{I}}}
\newcommand{\pII}{\pricebar^{\mathrm{II}}}
\newcommand{\pIII}{\pricebar^{\mathrm{III}}}



\newcommand{\valzero}{\val_0}


\newcommand{\bcap}{{\bid^{\capa}}}

\newcommand{\pricecap}{{\price^{\capa}}}
\newcommand{\dpricecap}{\tfrac{\dd}{\dd \val}\pricecap}
\newcommand{\dalloc}{\tfrac{\dd }{\dd \val}\alloc}

\newcommand{\ddright}{\dd_+}
\newcommand{\dpricecapright}{\tfrac{\ddright}{\ddright \val}\pricecap}

\newcommand{\pvc}{\price^{\mathrm{VC}}}
\newcommand{\dpvc}{\frac{\dd \pvc}{\dd \val}}

\newcommand{\dpricern}{\tfrac{\dd }{\dd \val}\pricern}
\newcommand{\utilreport}{U}
\newcommand{\urnreport}{\utilreport^{\mathrm{RN}}}
\newcommand{\ucapareport}{\utilreport^{\capa}}

\newcommand{\eps}{\epsilon}
\newcommand{\npart}{N}
\newcommand{\riemsum}{S}
\newcommand{\lowqc}{y}

\newcommand{\stoccom}[1]{#1}
\newcommand{\STOC}[1]{}
\newcommand{\NOTSTOC}[1]{#1}
\newcommand{\IFSTOCELSE}[2]{#2}

\title{Prior-independent Auctions for Risk-averse Agents}

\author{ Hu Fu
\and Jason Hartline
\and Darrell Hoy}

\begin{document}

\begin{titlepage}
\maketitle


We study simple and approximately optimal auctions for agents with a
particular form of risk-averse preferences.  We show that, for
symmetric agents, the optimal revenue (given a prior distribution over
the agent preferences) can be approximated by the first-price auction
(which is prior independent), and, for asymmetric agents, the optimal
revenue can be approximated by an auction with simple form.  These
results are based on two technical methods.  The first is for
upper-bounding the revenue from a risk-averse agent.  The second gives
a payment identity for mechanisms with pay-your-bid semantics.

\thispagestyle{empty}
\end{titlepage}

\newpage

\section{Introduction}
\label{sec:intro}

We study optimal and approximately optimal auctions for agents with
risk-averse preferences.  The economics literature on this subject is
largely focused on either comparative statics, i.e., is the
first-price or second-price auction better when agents are risk
averse, or deriving the optimal auction, e.g., using techniques from
optimal control, for specific distributions of agent preferences.  The
former says nothing about optimality but considers realistic
prior-independent auctions; the latter says nothing about realistic
and prior-independent auctions.  Our goal is to study approximately
optimal auctions for risk-averse agents that are realistic and not
dependent on assumptions on the specific form of the distribution of
agent preferences.  One of our main conclusions is that, while the
second-price auction can be very far from optimal for risk-averse
agents, the first-price auction is approximately optimal for an
interesting class of risk-averse preferences.

%
%
The microeconomic treatment of risk aversion in auction theory
suggests that the form of the optimal auction is very dependent on
precise modeling details of the preferences of agents, see, e.g.,
\citet{MR84} and \citet{M84}.  The resulting auctions are unrealistic
because of their reliance on the prior assumption and because they are
complex \citep[cf.][]{wil-87}.  Approximation can address both issues.
There may be a class of mechanisms that is simple, natural, and much
less dependent on exact properties of the distribution.  As an example
of this agenda for risk neutral agents, \citet{HR09} showed that for a
large class of distributional assumptions the second-price auction
with a reserve is a constant approximation to the optimal single-item
auction.  This implies that the only information about the
distribution of preferences that is necessary for a good approximation
is a single number, i.e., a good reserve price.  Often from this sort
of ``simple versus optimal'' result it is possible to do away with the
reserve price entirely.  \citet{DRY10} and \citet{RTY12} show that
simple and natural mechanisms are approximately optimal quite broadly.
We extend this agenda to auction theory for risk-averse agents.

%
%
The least controversial approach for modeling risk-averse agent
preferences is to assume agents are endowed with a concave function
that maps their wealth to a utility.  This introduces a non-linearity
into the incentive constraints of the agents which in most cases makes
auction design analytically intractable.  We therefore restrict
attention to a very specific form of risk aversion that is both
computationally and analytically tractable: utility functions that are
linear up to a given capacity and then flat. Importantly, an agent
with such a utility function will not trade off a higher probability
of winning for a lower price when the utility from such a lower price
is greater than her capacity.  While capacitated utility functions are
unrealistic, they form a basis for general concave utility functions.
In our analyses we will endow the benchmark optimal auction with
knowledge of the agents' value distribution and capacity; however,
some of the mechanisms we design to approximate this benchmark will be
oblivious to them.

%
%
As an illustrative example, consider the problem of maximizing welfare
by a single-item auction when agents have known capacitated utility
functions (but unknown values).  Recall that for risk-neutral agents
the second-price auction is welfare-optimal as the payments are
transfers from the agents to the mechanism and cancel from the
objective welfare which is thus equal to value of the winner.
(The auctioneer is assumed to have linear utility.)
For
agents with capacitated utility, the second-price auction can be far
from optimal.  For instance, when the difference between the highest
and second highest bid is much larger than the capacity then the
excess value (beyond the capacity) that is received by the winner does
not translate to extra utility because it is truncated at the
capacity.  Instead, a variant of the second-price auction, where the
highest bidder wins and is charged the maximum of 
the second highest bid and her bid less her capacity,
obtains the optimal welfare.
Unfortunately, this auction is parameterized by the form of the
utility function of the agents.  
There is, however, an auction, 
not dependent on specific knowledge of the utility functions or prior distribution, 
that is also welfare optimal:  If the agents values are
drawn i.i.d.\@ from a common prior distribution then the first-price
auction is welfare-optimal.  To see this: (a) standard analyses show
that at equilibrium the highest-valued agent wins, and (b) no agent
will shade her bid more than her capacity as she receives no
increased utility from such a lower payment but her probability of
winning strictly decreases.  

%
%
Our main goal is to duplicate the above observation for the objective
of revenue.  It is easy to see that the gap between the optimal
revenues for risk-neutral and capacitated agents can be of the same
order as the gap between the optimal welfare and the optimal revenue
(which can be unbounded).  When the capacities are small the revenue
of the welfare-optimal auction for capacitated utilities is close to
its welfare (the winners utility is at most her capacity).  Of course,
when capacities are infinite or very large then the risk-neutral
optimal revenue is close to the capacitated optimal revenue (the
capacities are not binding).  One of our main technical results shows
that even for mid-range capacities one of these two mechanisms that
are optimal at the extremes is close to optimal.

%
%
As a first step towards understanding profit maximization for
capacitated agents, we characterize the optimal auction for agents
with capacitated utility functions.  We then give a ``simple versus
optimal'' result showing that either the revenue-optimal auction
for risk-neutral agents or the above welfare-optimal auction for
capacitated agents is a good approximation to the revenue-optimal auction for
capacitated agents.  The Bulow-Klemperer \citeyearpar{BK96} Theorem
implies that with enough competition (and mild distributional
assumptions) welfare-optimal auctions are approximately
revenue-optimal.  Of course, the first-price auction is
welfare-optimal and prior-independent; therefore we conclude that it
is approximately revenue-optimal for capacitated agents.  

%
%
Our ``simple versus optimal'' result comes from an upper bound on the
expected payment of an agent in terms of her allocation rule
\citep[cf.][]{M81}.  This upper bound is the most technical result in
the paper; the difficulties that must be overcome by our analysis are
exemplified by the following observations.  First, unlike in
risk-neutral mechanism design, Bayes-Nash equilibrium does not imply
monotonicity of allocation rules.  There are mechanisms where an agent
with a high value would prefer less overall probabability of service
than she would have obtained if she had a lower value
(\autoref{ex:nonmono} in \autoref{sec:optimal}).  Second, even in the
case where the capacity is higher than the maximum value of any agent,
the optimal mechanism for risk-averse agents can generally obtain more
revenue than the optimal mechanism for risk-neutral agents
(\autoref{ex:v<C} in \autoref{sec:pricebound}).  This may be
surprising because, in such a case, the revenue-optimal mechanism for
risk-neutral agents would give any agent a wealth that is within the
linear part of her utility function.  Finally, while our upper bound
on risk-averse payments implies that this relative improvement is
bounded by a factor of two for large capacities, it can be arbitraily
large for small capacities (\autoref{ex:er-gap} in
\autoref{sec:pricebound}).

%
%
It is natural to conjecture that the first-price auction will continue
to perform nearly optimally well beyond our simple model (capacitated
utility) of risk-averse preferences.  It is a relatively
straightforward calculation to see that for a large class of
risk-averse utility functions from the literature \citep[e.g.,][]{M84}
the first-price auction is approximately optimal at extremal risk
parameters (risk-neutral or extremely risk-averse).  
We leave to future work the
extension of our analysis to mid-range risk parameters for these other
families of risk-averse utility functions.

%
%
It is significant and deliberate that our main theorem is about the
first-price auction which is well known to not have a truthtelling
equilibrium.  Our goal is a prior-independent mechanism.  In
particular, we would like our mechanism to be parameterized neither by the
distribution on agent preference nor by the capacity that governs the
agents utility function.  While it is standard in mechanism design and
analysis to invoke the {\em revelation principle} \citep[cf.][]{M81}
and restrict attention to auctions with truthtelling as equilibrium,
this principle cannot be applied in prior-independent auction design.
An auction with good equilibrium can be implemented by one with 
truthtelling as an equilibrium if the agent strategies can be
simulated by the auction.  In a Bayesian environment, agent strategies
are parameterized by the prior distribution and therefore the
suggested revelation mechanism is not generally prior independent.

\paragraph{Risk Aversion, Universal Truthfulness, and Truthfulness in Expectation\stoccom{.}}
Our results have an important implication on a prevailing and
questionable perspective that is explicit and implicit broadly in the
field of algorithmic mechanism design.  Two standard solution concepts
from algorithmic mechanism design are ``universal truthfulness'' and
``truthfulness in expectation.''  A mechanism is universally truthful
if an agent's optimal (and dominant) strategy is to reveal her values
for the various outcomes of the mechanism regardless of the reports of
other agents or random coins flipped by the mechanism.  In contrast,
in a truthful-in-expectation mechanism, revealing truthfully her values
only maximizes the agent's utility in expectation over the random
coins tossed by the mechanism.  Therefore, a risk-averse agent modeled
by a non-linear utility function may not bid truthfully in a
truthful-in-expectation mechanism designed for risk-neutral agents,
whereas in a universally truthful mechanism an agent behaves the same
regardless of her risk attitude.  For this reason, the above-mentioned
perspective sees universally truthful mechanisms superior because the
performance guarantees shown for risk-neutral agents seem to apply to
risk-averse agents as well.

This perspective is incorrect because the optimal performance possible
by a mechanism is different for risk-neutral and risk-averse agents.
In some cases, a mechanism may exploit the risk attitude of the agents
to achieve objectives better than the optimal possible for
risk-neutral agents; 
in other cases, the objective itself relies on the utility functions
(e.g.\@ social welfare maximization), and therefore the same outcome
has a different objective value.  In all these situations, the
performance guarantee of universally truthful mechanisms measured by
the risk-neutral optimality loses its meaning.  We have already
discussed above two examples for capacitated agents that illustrate
this point: for welfare maximization the second-price auction is not
optimal, for revenue maximization the risk-neutral revenue-optimal
auction can be far from optimal.

The conclusion of the discussion above is that the universally
truthful mechanisms from the literature are not generally good when
agents are risk averse; therefore, the solution concept of universal
truthfulness buys no additional guarantees over truthfulness in
expectation.  Nonetheless, our results suggest that it may be possible
to develop a general theory for prior-independent mechanisms for
risk-averse agents.  By necessity, though, this theory will look
different from the existing theory of algorithmic mechanism design.

\paragraph{Summary of Results\stoccom{.}}

Our main theorem is that the first-price auction is a
prior-independent $5$-approximation for revenue for two or more
agents with i.i.d.\@ values and risk-averse preferences (given by a
common capacity).  The technical results that enable this theorem are
as follows:

\begin{itemize}
\item The optimal auction for agents with capacitated utilities is a
  two-priced mechanism where a winning agent either pays her full
  value or her value less her capacity.

\item The expected revenue of an agent with capacitated utility and
  regular value distribution can be bounded in terms of an expected
  (risk-averse) virtual surplus, where the (risk-averse) virtual value
  is twice the risk-neutral virtual value plus the value minus
  capacity (if positive).

\item Either the mechanism that optimizes value minus capacity (and
  charges the Clarke payments or value minus capacity, whichever is
  higher) or the risk-neutral revenue optimal mechanism is a
  3-approximation to the revenue optimal auction for capacitated
  utilities.  

\item We characterize the Bayes-Nash equilibria of auctions with
  capacitated agents where each bidder's payment when served is a
  deterministic function of her value.  An example of this is the
  first-price auction.
The BNE strategies of the capacitated agents can be calculated
formulaically from the BNE strategies of risk-neutral agents.
\end{itemize}

Some of these results extend beyond single-item auctions.  In
particular, the characterization of equilibrium in the first-price
auction holds for position auction environments (i.e., where agents
are greedily by bid assigned to positions with decreasing
probabilities of service and charged their bid if served).  Our
simple-versus-optimal 3-approximation holds generally for
downward-closed environments, non-identical distributions, and
non-identical capacities.

\paragraph{Related Work\stoccom{.}}
The comparative performance of first- and second-price auctions 
in the presence of risk aversion has been well studied in the 
Economics literature. From a revenue perspective, 
first-price auctions are shown to outperform second-price auctions very 
broadly. \citet{RS81} and \citet{H80} show this for symmetric settings where bidders have the same concave utility
function. \citet{MR84} show this for more general preferences.

\citet{M87} shows that in addition to the revenue dominance, bidders whose 
risk attitudes exhibit \emph{constant absolute risk aversion (CARA)} are indifferent 
between first- and second-price auctions, even though they pay more in 
expectation in the first-price auction. \citet{HMZ10} considers the optimal 
reserve prices to set in each, and shows that the optimal reserve in the 
first price auction is less than that in the second price auction.  Interestingly, under 
light conditions on the utility functions, as risk aversion increases, the 
optimal first-price reserve price decreases.

\citet{Matthews83} and \citet{MR84} have considered optimal mechanisms for a single item, with symmetric bidders
(i.i.d.\@ values and identical utility function), for CARA and more general preferences.

Recently, \citet{DP12} have shown that by insuring bidders against
uncertainty, any truthful-in-expectation mechanism for risk-neutral
agents can be converted into a dominant-strategy incentive compatible
mechanism for risk-averse buyers with no loss of revenue. However,
there is potentially much to gain---mechanisms for risk-averse buyers
can achieve unboundedly more welfare and revenue than mechanisms for
risk-neutral bidders, as we show in \autoref{ex:er-gap} of
\autoref{sec:pricebound}.




\section{Preliminaries}
\label{sec:prelim}
\paragraph{Risk-averse Agents\stoccom{.}}  

Consider selling an item to an agent who has a private
valuation~$\val$ drawn from a known distribution~$\dist$.  Denote the
outcome by $(\alloc, \price)$, where $\alloc \in \{0, 1\}$ indicates
whether the agent gets the item, and $\price$ is the payment made.
The agent obtains a {\em wealth} of $\val\alloc - \price$ for such an
outcome and the agent's utility is given by a concave utility function
$\util(\cdot)$ that maps her wealth to utility, i.e., her utility for
outcome $(\alloc,\price)$ is $\util(\val\alloc - \price)$.  Concave
utility functions are a standard approach for modeling
risk-aversion.\footnote{There are other definitions of risk aversion;
  this one is the least controversial.  See \citet{MWG95} for a thorough exposition of expected utility theory.}

A {\em capacitated} utility function is $\ucapa(z) = \min (z,\capa)$
for a given $\capa$ which we refer to as the {\em capacity}.
Intuitively, small $\capa$ corresponds to severe risk aversion; large
$\capa$ corresponds to mild risk aversion; and $C = \infty$
corresponds to risk neutrality.  An agent views an auction as a
deterministic rule that maps a random source and the (possibly random)
reports of other agents which we summarize by $\rand$, and the report
$\bid$ of the agent, to an allocation and payment.  We denote these
coupled allocation and payment rules as $\ralloc(\bid)$ and
$\rprice(\bid)$, respectively.  The agent wishes to maximize her
expected utility which is given by $\expect[\rand]{\ucapa(\val
  \ralloc(\bid) - \rprice(\bid))}$, i.e., she is a von
Neumann-Morgenstern utility maximizer.

\paragraph{Incentives\stoccom{.}}
A strategy profile of agents is $\strats = (\strat_1,\ldots,\strat_n)$
mapping values to reports.  Such a strategy profile is in {\em
  Bayes-Nash equilibrium} (BNE) if each
agent~$i$ maximizes her utility by reporting $\strat_i(\val_i)$.
I.e., for all $i$, $\val_i$, and~$z$:
$$
\expect[\rand]{\util(\val_i \ralloc_i(\strat_i(\val_i)) - \rprice_i(\strat_i(\val_i)))}
 \geq 
\expect[\rand]{\util(\val_i \ralloc_i(z) - \rprice_i(z))}
$$ where $\rand$ denotes the random bits accessed by the mechanism as
well as the random inputs $\strat_j(\val_j)$ for $j \neq i$ and
$\val_j \sim \dist_j$.  A mechanism is {\em Bayesian incentive
  compatible} (BIC) if truthtelling is a Bayes-Nash equilibrium:
for all $i$, $\val_i$, and $z$
\begin{align}
\label{eq:IC}
\expect[\rand]{\util(\val_i \ralloc_i(\val_i) - \rprice_i(\val_i))}
 \geq 
\expect[\rand]{\util(\val_i \ralloc_i(z) - \rprice_i(z))}
\tag{IC}
\end{align} where $\rand$ denotes the random bits accessed by the mechanism as
well as the random inputs $\val_j \sim \dist_j$ for $j \neq i$.  

We will consider only mechanisms where losers have no payments, and
winners pay at most their bids.  These constraints imply ex post {\em
  individual rationality} (IR).  Formulaically, for all $i$, $\vali$, and
$\pi$, $\rprice_i(\val_i) \leq \vali$ when $\ralloc_i(\val_i) = 1$ and
$\rprice_i(\val_i) = 0$ when $\ralloc_i(\val_i) = 0$.




%


\paragraph{Auctions and Objectives\stoccom{.}}

%
%
The revenue of an auction $\mech$ is the total payment of all agents;
its expected revenue for implicit distribution $\dist$ and Bayes-Nash
equilibrium is denoted $\REV(\mech) = \expect[\rand,\vals]{\sum_i
  \rprice_i(\val_i)}$.
The welfare
of an auction $\mech$ is the total utility of all participants
including the auctioneer; its expected welfare is denoted $\WEL(\mech)
= \REV(\mech) + \expect[\rand,\vals]{\sum_i \util(\val_i
  \ralloc_i(\val_i) - \rprice_i(\val_i))}$.

%
%

%
%
Some examples of auctions are:
the {\em first-price auction} (FPA) serves the agent with the highest bid and
charges her her bid; the {\em second-price auction} (SPA) serves the
agent with the highest bid and charges her the second-highest bid.
The second price auction is incentive compatible regardless of agents'
risk attitudes.  The {\em capacitated second-price auction} (CSP)
serves the agent with the highest bid and charges her the maximum of
her value less her capacity and the second highest bid.  The
second-price auction for capacitated agents is incentive compatible
for capacitated agents because, relative to the second-price auction,
the utility an agent receives for truthtelling is unaffected and the
utility she receives for any misreport is only (weakly) lower.

\paragraph{Two-Priced Auctions\stoccom{.}}
The following class of auctions will be relevant for agents with
capacitated utility functions.
\begin{definition}
\label{def:two-price}
A mechanism~$\mech$ is \emph{two-priced} if, whenever $\mech$ serves
an agent with capacity $\capa$ and value $\val$, the agent's payment
is either $\val$ or $\val-\capa$; and otherwise (when not served) her
payment is zero.  Denote by $\qv(\val)$ and $\qc(\val)$  probability of
paying $\val$ and $\val - \capa$, respectively.
\end{definition}
\noindent
Note that from an agent's perspective the outcome of a two-priced
mechanism is fully described by a $\qc$ and $\qv$.



\paragraph{Auction Theory for Risk-neutral Agents\stoccom{.}}

For risk neutral agents, i.e., with $\util(\cdot)$ equal to the
identity function, only the probability of winning and expected
payment are relevant.  The {\em interim allocation rule} and {\em
  interim payment rule} are given by the expectation of $\ralloc$ and
$\rprice$ over~$\rand$ and denoted as $\alloc(\bid) =
\Ex[\rand]{\ralloc(\bid)}$ and $\price(\bid) =
\Ex[\rand]{\rprice(\bid)}$, respectively (recall that $\rand$ encodes
the randomization of the mechanism and the reports of other agents).

For risk-neutral agents, \citet{M81} characterized interim allocation
and payment rules that arise in BNE and solved for the revenue optimal
auction.  
These results are summarized in the following theorem.
\begin{theorem}[\citealp{M81}]
\label{thm:myerson}
For risk neutral bidders with valuations drawn independently and
identically from $\dist$,
\begin{enumerate}
\item (monotonicity) 
\label{thmpart:monotone}
The allocation rule $\alloc(\val)$ for each agent is monotone
  non-decreasing in $\val$.
\item 
\label{thmpart:payment}
(payment identity) The payment rule satisfies $\price(\val) = \val \alloc(\val) - \int_0^\val\alloc(z) \dd z$.
\item 
\label{thmpart:virt}
(virtual value) The ex ante expected payment of an agent is
  $\expect[\val]{\price(\val)} = \expect[\val]{\virt(\val)\alloc(\val)}$ where
  $\virt(\val) = \val - \frac{1-\dist(\val)}{\dens(\val)}$ is the {\em virtual value} for value $\val$.
\item
\label{thmpart:opt}
(optimality) When the distribution $\dist$ is {\em regular}, i.e., $\virt(\val)$ is
monotone, the second-price auction with reserve $\virt^{-1}(0)$ is
revenue-optimal.
\end{enumerate}
\end{theorem}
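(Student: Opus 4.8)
The statement is Myerson's theorem, and the plan is to establish its four parts in sequence, each feeding the next. Fix one agent; since she is risk neutral, her interim utility from reporting $z$ at true value $\val$ is the affine-in-$\val$ quantity $\val\,\alloc(z) - \price(z)$, and the incentive constraint \eqref{eq:IC} says that the choice $z = \val$ maximizes it. For part \ref{thmpart:monotone}, I would add the IC inequality at $\val$ (tested against report $\val'$) to the IC inequality at $\val'$ (tested against $\val$); the payment terms cancel and what remains rearranges to $(\val - \val')\,(\alloc(\val) - \alloc(\val')) \ge 0$, i.e., $\alloc$ is non-decreasing.

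For part \ref{thmpart:payment}, set $\utilreport(\val) = \val\,\alloc(\val) - \price(\val)$, the interim utility of a truthful agent. By \eqref{eq:IC} it equals $\sup_z\,(\val\,\alloc(z) - \price(z))$, a supremum of affine functions of $\val$, hence convex and locally Lipschitz. The two one-sided IC constraints sandwich the difference quotient, $\alloc(\val) \le \tfrac{\utilreport(\val') - \utilreport(\val)}{\val' - \val} \le \alloc(\val')$ for $\val' > \val$, so $\utilreport' = \alloc$ almost everywhere; integrating and using $\utilreport(0) = 0$ (from individual rationality, which pins the constant) gives $\utilreport(\val) = \int_0^\val \alloc(z)\,\dd z$, equivalently the stated payment identity. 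Part \ref{thmpart:virt} then follows by taking expectations over $\val \sim \dist$ and swapping the order of integration in $\int \dens(\val) \int_0^\val \alloc(z)\,\dd z\,\dd\val$; the inner integral over $\val$ contributes the factor $1 - \dist(z)$, and collecting terms turns $\expect[\val]{\price(\val)}$ into $\expect[\val]{\alloc(\val)\,(\val - \tfrac{1 - \dist(\val)}{\dens(\val)})} = \expect[\val]{\virt(\val)\,\alloc(\val)}$.

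For part \ref{thmpart:opt}, sum part \ref{thmpart:virt} over the $n$ i.i.d.\@ agents to write expected revenue as $\expect[\vals]{\sum_i \virt(\vali)\,\alloci(\vali)}$ and maximize the integrand pointwise under the single-item feasibility constraint $\sum_i \alloci(\vals) \le 1$: award the item to an agent of highest virtual value when that is positive, otherwise to no one. When $\dist$ is regular $\virt$ is monotone, so this rule awards to the agent of highest value among those above $\virt^{-1}(0)$ — a monotone, hence (by part \ref{thmpart:monotone}) implementable allocation rule, whose Myerson payment (part \ref{thmpart:payment}) for the winner is exactly the larger of the second-highest value and $\virt^{-1}(0)$. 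Thus the pointwise upper bound is attained by a genuine BIC mechanism, and the second-price auction with reserve $\virt^{-1}(0)$ is revenue optimal.

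The step I expect to require the most care is the envelope argument in part \ref{thmpart:payment}: $\utilreport$ is a priori only convex, hence differentiable merely almost everywhere, so one must invoke its absolute continuity to recover it from $\utilreport' = \alloc$ by integration, and separately pin down the constant of integration from individual rationality. The remaining manipulations — the cancellation in \ref{thmpart:monotone}, the Fubini swap in \ref{thmpart:virt}, and identifying the pointwise optimizer in \ref{thmpart:opt} with second-price-with-reserve — are routine once that identity is in hand.
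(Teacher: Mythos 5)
This theorem is stated in the paper as a citation of \citet{M81}; the paper gives no proof of its own, so there is nothing to compare against except the standard literature argument. Your proposal is exactly that standard argument --- the two-sided IC cancellation for monotonicity, the envelope/convexity sandwich for the payment identity, the Fubini swap for the virtual-value form, and pointwise virtual-surplus maximization plus regularity for optimality --- and it is correct. Two small points worth tightening if you were to write it out in full: pinning $\utilreport(0)=0$ uses both individual rationality ($\utilreport(0)\geq 0$) and the no-positive-transfers convention $\price(0)\geq 0$ (the paper's ``losers have no payments'' assumption), not IR alone; and in part (d) the direction you need is the converse of part (a), namely that a monotone allocation rule equipped with the payment rule of part (b) \emph{is} BIC, which is a separate (easy) verification rather than a consequence of monotonicity being necessary.
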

\noindent The payment identity in \autoref{thmpart:payment} implies
the {\em revenue equivalence} between any two auctions with the same
BNE allocation rule.


A well-known result by \citeauthor{BK96} shows that, in \autoref{thmpart:opt} of \autoref{thm:myerson}, instead of
having a reserve price to make the second-price auction optimal, one may as well add in another identical bidder to get at least as much revenue.

\begin{theorem}[\citealp{BK96}]
\label{thm:bk}
For risk neutral bidders with valuations drawn i.i.d.\@ from a regular distribution, the revenue from the second-price auction with $n+1$
bidders is at least that of the optimal auction for $n$~bidders.
\end{theorem}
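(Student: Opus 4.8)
The plan is to run the classical ``marginal revenue'' argument, which reduces \autoref{thm:bk} to two observations. First, a multi-bidder form of the virtual-value identity of \autoref{thm:myerson}: in any BNE of any BIC, IR mechanism for $n$ i.i.d.\ regular bidders, the expected revenue equals the expected virtual surplus $\Ex[\vals]{\sum_i \virt(\vali)\,\alloci(\vals)}$, where $\alloci$ is bidder $i$'s allocation rule. This follows by applying the payment identity and virtual-value formula of \autoref{thm:myerson} to each agent's interim allocation and payment rules---which, in any BNE, satisfy that single-agent characterization---and summing over $i$, using that $\virt(\vali)$ is measurable with respect to $\vali$ to replace interim allocations by the ex post allocation inside the expectation. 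Second: among all BIC, IR mechanisms for $n+1$ i.i.d.\ regular bidders that always allocate the item (i.e.\ with $\sum_i\alloci(\vals)=1$ for all $\vals$), the $(n+1)$-bidder second-price auction with no reserve maximizes revenue. Indeed, regularity makes $\virt$ monotone, so the pointwise maximizer of $\sum_i\virt(\vali)\alloci(\vals)$ subject to $\sum_i\alloci(\vals)=1$ awards the item to the highest-valued bidder; the second-price auction is a genuine (truthful, individually rational) mechanism implementing exactly this allocation, so by the first observation it attains the largest possible expected virtual surplus, hence revenue, within this class.

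Granting these, I would introduce an auxiliary $(n+1)$-bidder mechanism $\mech'$: run the revenue-optimal $n$-bidder auction (the second-price auction with reserve $\virt^{-1}(0)$, by \autoref{thm:myerson}) on bidders $1,\dots,n$; if it leaves the item unsold---that is, if every $\vali$ with $i\le n$ is below the reserve---give the item to bidder $n+1$ at price zero. Bidder $n+1$ never pays and never affects the others, so $\mech'$ is BIC and IR with $\REV(\mech')$ equal to the optimal $n$-bidder revenue, and $\mech'$ always allocates the item. Applying the second observation to $\mech'$ then gives $\REV(\text{second-price}_{n+1}) \;\ge\; \REV(\mech') \;=\; \REV(\text{optimal}_{n})$, which is the assertion of \autoref{thm:bk}.

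The step that needs the most care---and the main obstacle---is the first observation, since the excerpt states \autoref{thm:myerson} only for a single agent and only in terms of interim rules. One must argue that in any multi-agent BNE each agent's interim allocation rule is monotone and each interim payment rule obeys the Myerson payment identity, so that the per-agent identity $\Ex[\vali]{\price(\vali)}=\Ex[\vali]{\virt(\vali)\,\alloci(\vali)}$ applies, and then pass correctly---via the tower rule---from the interim allocation $\alloci(\vali)$ to the ex post allocation $\alloci(\vals)$ when summing over agents. The remaining ingredients---that $\mech'$ is well defined, BIC, IR, and always-allocating, and that pointwise maximization of virtual surplus under the always-allocate constraint is realized by the honest second-price auction rather than merely by some abstract allocation rule---are routine once stated carefully.
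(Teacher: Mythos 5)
The paper does not prove \autoref{thm:bk} at all: it is imported verbatim from \citet{BK96} as a known result, so there is no in-paper argument to compare yours against. Judged on its own, your proof is correct and is the standard modern ``marginal revenue'' derivation of the Bulow--Klemperer theorem: reduce revenue to expected virtual surplus, observe that under regularity the second-price auction pointwise maximizes virtual surplus among always-allocating mechanisms, and exhibit an always-allocating $(n+1)$-bidder mechanism $\mech'$ whose revenue equals the optimal $n$-bidder revenue. The one place where a sentence of extra care is warranted is the very identity you flag: as stated in \autoref{thm:myerson} the formula $\Ex[\val]{\price(\val)} = \Ex[\val]{\virt(\val)\alloc(\val)}$ normalizes the lowest type's utility to zero, whereas in $\mech'$ bidder $n+1$ receives the item for free with positive probability and so her lowest type may earn strictly positive utility when the support does not start at $0$. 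The general identity is $\Ex{\price} = \Ex{\virt\cdot\alloc} - \util(\underline{\val})$ with $\util(\underline{\val})\ge 0$ by IR, so you only obtain $\REV(\mech') \le \Ex{\text{virtual surplus of }\mech'}$ rather than equality --- but that inequality is exactly the direction your chain $\REV(\text{optimal}_n) = \REV(\mech') \le \Ex{\text{virt.\ surplus of }\mech'} \le \Ex{\text{virt.\ surplus of SPA}_{n+1}} = \REV(\text{SPA}_{n+1})$ requires, so the argument stands. The remaining steps (monotonicity of interim rules in BNE, the tower-rule passage from interim to ex post allocations, and that $\mech'$ is BIC because bidder $n+1$ never influences the outcome or pays) are all routine and correctly identified.
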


\section{The Optimal Auctions}
\label{sec:optimal}
In this section we study the form of optimal mechanisms for
capacitated agents.  In \autoref{sec:optimal-two-price}, we show that
it is without loss of generality to consider two-priced auctions, and
in \autoref{sec:two-price-BIC} we characterize the incentive
constraints of two-priced auctions.  In \autoref{sec:polytime} we use
this characterization to show that the optimal auction (in discrete
type spaces) can be computed in polynomial time in the number of
types.

\subsection{Two-priced Auctions Are Optimal}
\label{sec:optimal-two-price}

Recall a two-priced auction is one where when any agent is served she is
either charged her value or her value minus her capacity.  We show
below that restricting our attention to two-priced auctions is without
loss for the objective of revenue.


\begin{theorem}
\label{thm:optimal-two-price}
For any auction on capacitated agents there is a two-priced auction
with no lower revenue.
\end{theorem}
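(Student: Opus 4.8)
The plan is to reduce to Bayesian incentive compatible (BIC) direct mechanisms and then apply a local ``price-smoothing'' transformation, analyzed realization by realization. For the reduction, given any auction $\mech$ together with a Bayes--Nash equilibrium, invoke the revelation principle: pass to the direct mechanism that, on a report profile, internally simulates the equilibrium strategies and runs $\mech$. This mechanism is BIC, realizes the same equilibrium outcome distribution (hence the same revenue), and inherits our standing IR properties (losers pay nothing, winners pay at most their value). This step is genuinely needed, since ``two-priced'' is a condition indexed by an agent's \emph{value}, which the original auction need not observe; so from now on assume $\mech$ is BIC.

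Define $\mech'$ to run $\mech$ and then, whenever $\mech$ would serve an agent who reports $\val$ at price $\price \in [0,\val]$, instead charge her $\val-\capa$ with probability $t := \min\{(\val-\price)/\capa,\,1\}$ and $\val$ with the remaining probability, using fresh randomness; allocations and losers' zero payments are untouched. Equivalently, $\mech'$ replaces the served agent's deterministic wealth $\val-\price$ by a lottery over the wealth levels $0$ and $\capa$ with mean equal to the truncated wealth $\min\{\val-\price,\capa\}$. By construction $\mech'$ is two-priced in the sense of \autoref{def:two-price} (if $\capa > \val$ the price $\val-\capa$ is negative, which the definition permits), and it inherits IR.

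It remains to compare $\mech'$ with $\mech$, and both comparisons reduce to a statement conditioned on all of $\mech$'s internal coins and the other agents' reports. First, revenue does not fall: in any such realization where a truthful agent of value $\val$ is served at price $\price$, her expected payment under $\mech'$ is $t(\val-\capa)+(1-t)\val = \val - \min\{\val-\price,\capa\} = \max\{\price,\val-\capa\} \ge \price$, so $\REV(\mech') \ge \REV(\mech)$. Second, $\mech'$ is BIC: for a truthful agent the substituted wealth lottery has expected $\ucapa$-utility $t\capa = \min\{\val-\price,\capa\} = \ucapa(\val-\price)$, so her utility is unchanged; for a misreporting agent with true value $\val$ and report $\val'$, the same payment lottery shifts her wealth lottery additively by $\val-\val'$, so by concavity of $\ucapa$ (Jensen) her $\mech'$-utility from the misreport is at most $\ucapa\big((\val-\val')+\min\{\val'-\price,\capa\}\big)$, which by monotonicity of $\ucapa$ together with $\min\{\val'-\price,\capa\} \le \val'-\price$ is at most $\ucapa(\val-\price)$ --- exactly her $\mech$-utility from the same misreport. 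Hence truthful utilities are preserved and every deviation is weakly less attractive in $\mech'$, so the incentive constraints for $\mech'$ follow from those for $\mech$.

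The crux is this incentive check, and it rests entirely on $\ucapa$ being concave and non-decreasing: the transformation is engineered so that the smeared wealth lottery is \emph{exactly} utility-equivalent to the original deterministic wealth for the truthful agent, while Jensen's inequality makes that same lottery only less tempting under the value shift a misreport induces. The second point requiring care is that the construction must be purely local --- per agent, per realization --- so that it composes over all agents without re-solving any equilibrium; the realization-by-realization bookkeeping above is exactly what ensures this.
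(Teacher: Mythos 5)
Your proof is correct and follows essentially the same route as the paper's: truncate the winner's wealth at $\capa$ and replace it by the two-point lottery on wealths $\{0,\capa\}$ with the same (truncated) mean, so the truthful type is exactly indifferent while Jensen's inequality plus monotonicity of $\ucapa$ makes every misreport weakly less attractive. You merge the paper's two steps into one per-realization formula and make explicit the revelation-principle reduction the paper leaves implicit, but these are presentational rather than substantive differences.
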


\begin{proof}
We prove this theorem in two steps.  In the first step we show, quite
simply, that if an agent with a particular value received more wealth
than $\capa$ then we can truncate her wealth to $\capa$ (by charging
her more).  With her given value she is indifferent to this change,
and for all other values this change makes misreporting this value
(weakly) less desirable.  Therefore, such a change would not induce
misreporting and only (weakly) increases revenue.  This first step
gives a mechanism wherein every agent's wealth is in the linear part
of her utility function.  The second step is to show that we can
transform the distribution of wealth into a two point distribution.
Whenever an agent with value~$\val$ is offered a price that results in a wealth $w \in [0, \capa]$, we instead offer her a price
of $\val - \capa$ with probability $w / \capa$, and a price of $\val$ with the remaining probability.  Both the expected
revenue and the utility of a truthful bidder is unchanged.  The expected utility of other types to misreport~$\val$, however,
weakly decreases by the concavity of $\ucapa$, because mixing over endpoints of an interval on a concave function gives
less value than mixing over internal points with the same expectation.
\end{proof}

\subsection{Characterization of Two-Priced Auctions}
\label{sec:two-price-BIC}

In this section we characterize the incentive constraints of
two-priced auctions.  We focus on the induced two-priced mechanism for a single agent given
the randomization $\rand$ of other agent values and the mechanism.
The interim two-priced allocation rule of this agent is denoted by
$\qq(\val) = \qv(\val) + \qc(\val)$.

\begin{lemma} 
\label{lem:two-price-BIC}
A mechanism with two-price allocation rule $\qq  = \qv + \qc$ is BIC if
and only if for all $\val$ and $\highval$ such that $\val < \highval
\leq \val + \capa$,
\begin{equation}
\label{eq:near-bic}
\frac{\qv(\val)}{\capa} \leq \frac{ \qc(\highval) - \qc(\val)}{\highval - \val} \leq \frac{\qq(\highval)}{\capa}.
\end{equation}
\end{lemma}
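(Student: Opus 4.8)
The plan is to derive the inequality in \eqref{eq:near-bic} directly from the incentive constraint \eqref{eq:IC}, using the fact that from the agent's perspective a two-priced mechanism is fully described by the pair $(\qv, \qc)$. Fix a true value $\val$ and a misreport $z$; when the agent with value $\val$ reports $z$, she pays $z$ with probability $\qv(z)$ (for a wealth of $\val - z$), pays $z - \capa$ with probability $\qc(z)$ (for a wealth of $\val - z + \capa$), and pays nothing with the remaining probability (wealth $0$). Applying $\ucapa(w) = \min(w, \capa)$, the expected utility from reporting $z$ is
\begin{equation}
\label{eq:util-report}
\utilreport(\val, z) = \qv(z)\,\min(\val - z, \capa) + \qc(z)\,\min(\val - z + \capa, \capa).
\end{equation}
The BIC condition is $\utilreport(\val,\val) \geq \utilreport(\val, z)$ for all $z$. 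The key structural observation is that the $\min$'s interact with the sign of $\val - z$: for $z \geq \val$ we have $\min(\val-z,\capa) = \val - z \le 0$, while for $z \le \val$ the picture differs depending on whether $z$ is within $\capa$ of $\val$. The two inequalities in \eqref{eq:near-bic} will come from considering, respectively, overreports (to get the upper bound) and a particular regime of underreports (to get the lower bound), in both cases restricting to $|z - \val| \le \capa$ so the $\min$'s simplify cleanly.

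**Deriving the two bounds.**

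For the upper bound, take $\val = \val$ and consider the overreport $z = \highval$ with $\val < \highval \le \val + \capa$. Then $\val - \highval \in [-\capa, 0]$, so $\min(\val - \highval, \capa) = \val - \highval$ and $\min(\val - \highval + \capa, \capa) = \val - \highval + \capa \ge 0$. Meanwhile $\utilreport(\val,\val) = \qv(\val)\capa \cdot \tfrac{1}{1}$... more precisely $\min(0,\capa)=0$ and $\min(\capa,\capa) = \capa$, so $\utilreport(\val,\val) = \qc(\val)\,\capa$. Plugging into $\utilreport(\val,\val) \ge \utilreport(\val,\highval)$ and rearranging gives
\[
\qc(\val)\,\capa \;\ge\; \qv(\highval)(\val - \highval) + \qc(\highval)(\val - \highval + \capa),
\]
and since $\qv(\highval)(\val - \highval) \le 0$ we can drop it only in the wrong direction, so instead we keep it and collect terms: the right side equals $\qq(\highval)(\val - \highval) + \qc(\highval)\capa$, giving $\capa(\qc(\val) - \qc(\highval)) \ge (\val - \highval)\qq(\highval)$, i.e. $\dfrac{\qc(\highval) - \qc(\val)}{\highval - \val} \le \dfrac{\qq(\highval)}{\capa}$, which is the upper bound. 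For the lower bound, run the same computation with the roles swapped: let the true value be $\highval$ and consider the underreport $z = \val$, again with $\highval - \val \le \capa$. Now $\highval - \val \in [0,\capa]$, so $\min(\highval - \val, \capa) = \highval - \val$ and $\min(\highval - \val + \capa, \capa) = \capa$, while $\utilreport(\highval,\highval) = \qc(\highval)\capa$. The constraint $\utilreport(\highval,\highval)\ge\utilreport(\highval,\val)$ becomes $\qc(\highval)\capa \ge \qv(\val)(\highval - \val) + \qc(\val)\capa$, i.e. $\capa(\qc(\highval) - \qc(\val)) \ge \qv(\val)(\highval - \val)$, which rearranges to the lower bound $\dfrac{\qv(\val)}{\capa} \le \dfrac{\qc(\highval) - \qc(\val)}{\highval - \val}$.

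**The converse and the main obstacle.**

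For the ``if'' direction I would show that \eqref{eq:near-bic}, which a priori only constrains pairs with $0 < \highval - \val \le \capa$, actually implies the full BIC condition $\utilreport(\val,\val)\ge\utilreport(\val,z)$ for \emph{every} report $z$, including overreports by more than $\capa$ and underreports by more than $\capa$. The idea is that \eqref{eq:near-bic} already forces $\qc$ to be nondecreasing (the middle quantity is nonnegative since $\qq, \qv \ge 0$) and bounds its local growth rate, and one then integrates/telescopes these local inequalities along a chain of intermediate values spaced at most $\capa$ apart to reach distant reports; for reports $z$ with $|z - \val| > \capa$ the utility $\utilreport(\val,z)$ involves $\min$'s that have ``saturated'' ($\min(\val - z + \capa,\capa) = \val - z + \capa$ when $z > \val$, etc.), and these saturated cases turn out to be dominated by the boundary case $|z-\val| = \capa$ together with monotonicity of $\qc$ and the bound $\qv \le \capa \cdot(\text{slope of }\qc)$, which says shading cannot be rewarded. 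The main obstacle is this converse: one must handle several sign/saturation regimes for $z$ relative to $\val$ and $\val \pm \capa$ and verify that the ``near-BIC'' constraints chain together correctly to dominate all of them. I expect the overreport-by-more-than-$\capa$ case to be the most delicate, since there the wealth $\val - z$ is very negative and one needs the telescoped upper-bound inequalities to show the agent gains nothing by such a report; the underreport regime is easier because $\ucapa$ is flat there and monotonicity of $\qc$ does most of the work.
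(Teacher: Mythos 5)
Your ``only if'' direction is exactly the paper's argument: you compute the truthtelling utility $\capa\,\qc(\val)$ and the utilities of the two within-$\capa$ deviations (value $\val$ overreporting $\highval$, value $\highval$ underreporting $\val$), and the two inequalities of \eqref{eq:near-bic} fall out by rearranging; your algebra is correct and matches the paper's. Since that derivation is reversible, the within-$\capa$ deviations are also covered in the ``if'' direction, so the only genuinely open part of your proposal is the treatment of reports $z$ with $|z-\val|>\capa$, which you leave as a sketch.

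There you have the hard and easy cases reversed, and this is the one substantive gap. An overreport by more than $\capa$ is \emph{trivial}, not delicate: if an agent with value $\val$ reports $z>\val+\capa$, both possible wealths $\val-z$ and $\val-z+\capa$ are strictly negative, so her expected utility is $\qv(z)(\val-z)+\qc(z)(\val-z+\capa)\le 0\le\capa\,\qc(\val)$, with no use of \eqref{eq:near-bic} at all --- no telescoping of the upper-bound inequalities is needed or relevant. The case that actually requires an argument is the underreport by more than $\capa$: an agent with value $\highval$ reporting $\val<\highval-\capa$ is capacitated in every winning outcome and gets utility $\capa\,\qq(\val)$, so one must show $\qq(\val)\le\qc(\highval)$. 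This follows by applying the left inequality of \eqref{eq:near-bic} at the single pair $(\val,\val+\capa)$ to get $\qq(\val)=\qv(\val)+\qc(\val)\le\qc(\val+\capa)$, and then the monotonicity of $\qc$ (which the left inequality gives, since $\qv\ge 0$) to get $\qc(\val+\capa)\le\qc(\highval)$. You do gesture at exactly these two ingredients for the underreport case, so the fix is small, but as written your plan directs the effort at a case that needs none and treats as ``easier'' the one case where the chaining argument is actually required.
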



Equation \eqref{eq:near-bic} can be equivalently written as the
following two linear constraints on $\qc$, for all $\lowval \leq \val \leq
\highval \in [\val -\capa,\val + \capa]$:
\begin{align}
\qc(\highval) &\geq \qc(\val) + \frac{\highval-\val}{\capa}\cdot \qv(\val), \label{eq:bic-underbidding}\\ 
\qc(\lowval) &\geq \qc(\val) - \frac{\val-\lowval}{\capa}\cdot \qq(\val).\label{eq:bic-overbidding}
\end{align}
Equations \eqref{eq:bic-underbidding} and \eqref{eq:bic-overbidding}
are illustrated in \autoref{fig:opt-bic}.  For a fixed $\val$,
\eqref{eq:bic-underbidding} with $\highval=\val+\capa$ yields a lower
bounding line segment from $(\val, \qc(\val))$ to $(\val + \capa,
\qc(\val) + \qv(\val))$, and \eqref{eq:bic-overbidding} with $\lowval
= \val - \capa$ gives a lower bounding line segment from $(\val,
\qc(\val))$ to $(\val - \capa, \qc(\val)-\qq(\val))$.  Note that
\eqref{eq:bic-underbidding} implies that $\qc$ is monotone.

In the special case when $\qc$ is differentiable, by taking $\highval$
approaching $\val$ in \eqref{eq:near-bic}, we have
$\tfrac{\qv(\val)}{\capa} \leq \qc'(\val) \leq
\tfrac{\qq(\val)}{\capa}$ for all~$\val$.  In general, we have the
following condition in the integral form (see \autoref{sec:opt-app}
for a proof).
\begin{corollary}
\label{cor:int-bic}
The allocation rule $\qq = \qv + \qc$ of a BIC two-priced mechanism
for all $\val < \highval$ satisfies:
\begin{align}
\label{eq:int-bic}
 \int_{\val}^{\highval} \frac{\qv(z)}{\capa} \: \dd z \leq \qc(\highval) - \qc(\val) \leq \int_{\val}^{\highval}
\frac{\qq(z)}{\capa} \: \dd z. 
\end{align}
\end{corollary}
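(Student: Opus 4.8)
The plan is to obtain \eqref{eq:int-bic} by chaining the local incentive constraints of \autoref{lem:two-price-BIC} over a fine partition and passing to a Riemann-sum limit. The point is that \eqref{eq:bic-underbidding} and \eqref{eq:bic-overbidding} only relate values at distance at most $\capa$, whereas \eqref{eq:int-bic} must hold for arbitrary $\val < \highval$; the bridge is that an interval of any length can be cut into cells of width at most $\capa$, on each of which a local constraint applies, after which the telescoping of $\qc$ does the rest.

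For the lower bound, fix $\val < \highval$ and a partition $\val = t_0 < t_1 < \cdots < t_N = \highval$ with every cell width $t_{k+1} - t_k \le \capa$. Applying \eqref{eq:bic-underbidding} with the pair $(t_k, t_{k+1})$ in place of $(\val, \highval)$ gives $\qc(t_{k+1}) - \qc(t_k) \ge \frac{t_{k+1} - t_k}{\capa}\,\qv(t_k)$. Summing over $k$, the left-hand side telescopes exactly to $\qc(\highval) - \qc(\val)$, while the right-hand side is a left-endpoint Riemann sum for $\frac{1}{\capa}\int_\val^\highval \qv(z)\,\dd z$. Refining the partition and taking the limit yields $\qc(\highval) - \qc(\val) \ge \frac{1}{\capa}\int_\val^\highval \qv(z)\,\dd z$, the left inequality of \eqref{eq:int-bic}.

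For the upper bound the argument is symmetric, using \eqref{eq:bic-overbidding} in the direction that produces $\qq$: with $(t_{k+1}, t_k)$ in place of $(\val, \lowval)$, \eqref{eq:bic-overbidding} reads $\qc(t_k) \ge \qc(t_{k+1}) - \frac{t_{k+1}-t_k}{\capa}\,\qq(t_{k+1})$, i.e.\ $\qc(t_{k+1}) - \qc(t_k) \le \frac{t_{k+1}-t_k}{\capa}\,\qq(t_{k+1})$. Summing and telescoping the left side, then passing to the limit of these (right-endpoint) Riemann sums, gives $\qc(\highval) - \qc(\val) \le \frac{1}{\capa}\int_\val^\highval \qq(z)\,\dd z$.

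I do not expect a conceptual obstacle: the content is just the observation that the width-$\capa$ constraints chain together. The one point needing care is the passage from finite sums to integrals. Here it helps that \eqref{eq:bic-underbidding} already forces $\qc$ to be monotone and that the allocation probabilities $\qv, \qq$ lie in $[0,1]$, so the telescoped left side is partition-independent and the right sides converge to the stated integrals under the standing integrability assumptions on allocation rules. One should also observe that a partition with all cell widths at most $\capa$ always exists and may be refined freely, so no restriction $\highval \le \val + \capa$ is needed in the conclusion. Finally, the bookkeeping must be kept straight so that the lower bound carries $\qv$ (evaluated at left endpoints) and the upper bound carries $\qq$ (evaluated at right endpoints), matching the two sides of \eqref{eq:int-bic}.
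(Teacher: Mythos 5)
Your chaining-and-telescoping skeleton is the right idea and is, in spirit, what the paper does: cut $[\val,\highval]$ into cells of width at most $\capa$, apply the local constraints \eqref{eq:bic-underbidding} and \eqref{eq:bic-overbidding} on each cell, and telescope $\qc$. The monotonicity of $\qc$, the handling of $\highval>\val+\capa$ by subdivision, and the left-tag/right-tag bookkeeping are all correct.

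The gap is in the final step, ``refining the partition and taking the limit yields'' the integral. The interim rules $\qv$ and $\qq$ are only bounded measurable functions: $\qv$ is not monotone, and $\qq$ need not even be monotone (\autoref{ex:nonmono}), so neither is Riemann integrable in general. For a merely Lebesgue-integrable integrand, endpoint-tagged Riemann sums over a refining sequence of partitions need not converge to the Lebesgue integral (take $\qv=\indic_E$ for a suitable measurable $E$ and tags landing off $E$), and the reasons you cite --- monotonicity of $\qc$ and boundedness of $\qv,\qq$ --- do not repair this. What your argument actually delivers is that $\qc(\highval)-\qc(\val)$ dominates the supremum of the left-tagged sums of $\qv/\capa$ (resp.\ is dominated by the infimum of the right-tagged sums of $\qq/\capa$), and relating these to the Lebesgue integrals requires choosing partitions \emph{adapted to the function} (e.g.\ via inner regularity: place the cut points so that tags land in a compact subset of each level set), not merely refining an arbitrary sequence. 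This is exactly the technical point the paper engineers around: its proof of \autoref{cor:int-bic} replaces $\qv$ by the running supremum $\crate(y)=\tfrac{1}{\capa}\sup_{z\le y}\qv(z)$, which is monotone and hence Riemann integrable, defines the convex minorant $\qcbar(z)=\qc(\val)+\int_\val^z\crate$, and proves $\qcbar\le\qc$ by the lagged piecewise-linear induction in the proof of \autoref{thm:mechbar}; the corollary then follows from the pointwise bound $\qv/\capa\le\crate$. Your route is viable and arguably more direct when $\qv$ and $\qq$ are Riemann integrable (e.g.\ piecewise continuous), but for general measurable allocation rules the limit passage as written is unjustified and needs either the adapted-partition argument or the paper's monotone-envelope device.
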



\begin{figure}[!ht]

\begin{center}
\begin{pspicture}(-1, -1)(6,4)
	\psset{yunit=3cm,xunit=1.6cm}
	\psaxes[Dy=1, Dx=1, labels=none]{->}(0, 0)(0,0)(4, 1.2)%
{\psset{linecolor=gray}%
	\psline[linewidth=0.5pt, linearc=0]{-}(0.8,0)(0.8, 0.5)
	\psbezier[linewidth=0.5pt]{->}(0.8,0.5)(1.6,0.9)(1.6,1)(3.8,1)
	\rput(1, 0.75){\gray $\qq(\val)$}
	\psline[linewidth=0.5pt]{-}(.8,0)(1.8, 0.5)
	\psbezier[linewidth=0.5pt]{->}(1.8,0.5)(2.6,0.9)(2.8,1)(3.8,1) 
}%
	\rput(1.3,0.45){\gray $\qc(\val)$}
	\rput(1,-0.12){$\val - \capa$}
	\rput(2,-0.12){$\val$}
	\rput(3.05,-0.12){$\val + \capa$}	
	
	\psline[linewidth=1pt, linestyle=dashed]{-}(2, 0.59)(3,0.95)
	\psline[linewidth=1pt, linestyle=dashed]{-}(1,-0.3)(2, 0.59)
	\psdots*(1,-0.3)(2, 0.59)(3,0.95)
		
	\psline[linewidth=1pt]{|-|}(3.2,0.6)(3.2, 0.95)
	\rput(3.6,0.75){$\qv(\val)$}
	\psline[linewidth=1pt]{|-|}(0.55, -0.3)(0.55,0.6)
	\rput(.30,0.2){$\qq(\val)$}

\end{pspicture}
\end{center}
\caption{Fixing $\qq(\val) = \qv(\val) + \qc(\val)$, the dashed
  line between points $(\val - \capa, \qc(\val)-\qq(\val))$, $(\val,
  \qc(\val))$, and $(\val + \capa, \qc(\val) + \qv(\val))$ (denoted by
  ``$\bullet$'') depicts the lower bounds from
  \eqref{eq:bic-underbidding} and \eqref{eq:bic-overbidding} on $\qc$
  for values in $[\val - \capa,\val+\capa]$.\label{fig:opt-bic}}
\end{figure}


Importantly, the equilbrium characterization of two-priced mechanisms
does not imply monotonicity of the allocation rule $\alloc$.  This is
in contrast with mechanisms for risk-neutral agents, where incentive
compatibility requires a monotone allocation rule
(\autoref{thm:myerson}, \autoref{thmpart:monotone}).  This
non-monotonicity is exhibited in the following example.

\begin{example}
\label{ex:nonmono}
There is a single-agent two-priced mechanism with a non-monotone
allocation rule.  Our agent has two possible values $\val = 3$ and
$\val = 4$, and capacity $\capa$ of $2$.  We give a two price
mechanism.  Recall that $\qc(\val)$ is the probability with which the
mechanism sells the item and charges $\val - \capa$; $\qv(\val)$ is
the probability with which the mechanism sells the item and charges
$\val$; and $\qq(\val) = \qc(\val) + \qv(\val)$.  The mechanism and its outcome are summarized in the following table.

\begin{center}
\begin{tabular}{|c|c|c|c|c|c|}
\hline
\val & $\qq$ & $\qc$ & $\qv$ & \shortstack{utility from \\truthful reporting} & \shortstack{utility from\\ misreporting} \\ \hline
3 & 5/6 & 1/2 & 1/3  & 1 & 2/3 \\ \hline
4 & 2/3 & 2/3 & 0 & 4/3 & 4/3 \\ \hline
\end{tabular}
\end{center}
\end{example}

\subsection{Optimal Auction Computation}
\label{sec:polytime}

Solving for the optimal mechanism is computationally tractable for any
discrete (explicitly given) type space $T$.  Given a discrete
valuation distribution on support~$T$, one can use $2\left|T\right|$ variables to
represent the allocation rule of any two-priced mechanism, and the
expected revenue is a linear sum of these variables.
\autoref{lem:two-price-BIC} shows that one can use $O(|T|^2)$ linear
constraints to express all BIC allocations, and hence the revenue
optimization for a single bidder can be solved by a $O(|T|^2)$-sized
linear program.  Furthermore, using techniques developed by
\citet{CDW12} and \citet{AFHHM12}, in particular the ``token-passing''
characterization of single-item auctions by \citet{AFHHM12}, we
obtain:

\begin{theorem}
\label{thm:polytime}
For $n$~bidders with independent valuations with type spaces $T_1, \cdots, T_n$ and capacities $\capa_1, \cdots,
\capa_n$, one can solve for the optimal single-item auction with a linear program of size $O\left(\left(\sum_i |T_i| \right)^2\right)$.
\end{theorem}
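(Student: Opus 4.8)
The plan is to set up a linear program whose variables describe the interim two-priced allocation rules of all bidders and whose feasible region is exactly the set of interim rules realizable by some BIC single-item mechanism on capacitated agents. For each bidder $i$ and each type $\val \in T_i$ we introduce two variables, $\qv^i(\val)$ and $\qc^i(\val)$, representing the probability (over the mechanism's randomness and the other bidders' reports) that bidder~$i$ is served and charged $\val$, respectively $\val-\capa_i$; write $\qq^i(\val)=\qv^i(\val)+\qc^i(\val)$. By \autoref{thm:optimal-two-price} it is without loss to restrict to two-priced mechanisms, so the optimal revenue is $\max \sum_i \sum_{\val\in T_i}\dist_i(\val)\bigl(\val\,\qv^i(\val)+(\val-\capa_i)\,\qc^i(\val)\bigr)$, a linear objective in $O(\sum_i |T_i|)$ variables. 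The incentive constraints are handled per bidder by \autoref{lem:two-price-BIC}: for each~$i$ and each pair $\val<\highval\le\val+\capa_i$ in $T_i$, impose the two linear inequalities \eqref{eq:bic-underbidding} and \eqref{eq:bic-overbidding}; this is $O(|T_i|^2)$ constraints, hence $O(\sum_i|T_i|^2)$ in total, which is within the claimed bound.

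The remaining, and main, issue is to characterize which profiles of interim allocation rules $(\qq^1,\dots,\qq^n)$ are simultaneously feasible for a single-item auction — i.e., the analogue of Border's theorem, but now for the pair of probabilities $(\qv^i,\qc^i)$ rather than a single service probability. Here I would invoke the ``token-passing'' characterization of single-item interim allocation rules of \citet{AFHHM12} together with the reduction techniques of \citet{CDW12}. The key observation that makes this go through is that the price charged to a served bidder is, in a two-priced mechanism, an internal bookkeeping choice that does not affect feasibility of the \emph{service} rule: a profile of service probabilities $(\qq^1,\dots,\qq^n)$ is implementable iff it lies in the single-item feasibility polytope, and then any split of each $\qq^i(\val)$ into $\qv^i(\val)+\qc^i(\val)$ can be realized by choosing, conditioned on serving~$i$ at type~$\val$, to charge $\val-\capa_i$ with probability $\qc^i(\val)/\qq^i(\val)$. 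So the constraints on $\qq$ are exactly those of \citet{AFHHM12}, which they express with an auxiliary flow/token network of size polynomial in $\sum_i|T_i|$; I would cite their result for the precise statement that this network yields a description with $O((\sum_i|T_i|)^2)$ constraints and variables.

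The final step is to assemble: variables $\{\qv^i(\val),\qc^i(\val)\}$ plus the auxiliary flow variables of the token-passing network; constraints consisting of (i) the per-bidder BIC inequalities from \autoref{lem:two-price-BIC}, (ii) the flow-conservation/capacity constraints of \citet{AFHHM12} tying $\qq^i(\val)=\qv^i(\val)+\qc^i(\val)$ to a feasible single-item service rule, and (iii) nonnegativity; objective the expected revenue above. Correctness follows from \autoref{thm:optimal-two-price} (two-priced WLOG), \autoref{lem:two-price-BIC} (BIC $\Leftrightarrow$ the stated inequalities), and the single-item feasibility characterization; the size bound follows by counting: each ingredient contributes $O((\sum_i|T_i|)^2)$. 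I expect the only genuine subtlety to be confirming that the \citet{AFHHM12} construction has the stated quadratic size and that the price-splitting argument legitimately decouples the payment choice from service feasibility; both are routine given their token-passing lemma.
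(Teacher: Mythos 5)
Your proposal follows essentially the same route as the paper, which itself only sketches this result in the paragraph preceding the theorem: restrict to two-priced mechanisms via \autoref{thm:optimal-two-price}, encode BIC per bidder with the $O(|T_i|^2)$ linear constraints of \autoref{lem:two-price-BIC}, and impose interim feasibility of the service probabilities $\qq^i$ via the token-passing characterization of \citet{AFHHM12}. Your explicit observation that the split of $\qq^i(\val)$ into $\qv^i(\val)+\qc^i(\val)$ is a per-winner randomization that does not affect feasibility of the service rule is exactly the (unstated) glue the paper relies on, so your write-up is if anything more complete than the original.
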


\section{An Upper Bound on Two-Priced Expected Payment}
\label{sec:pricebound}

In this section we will prove an upper bound on the expected payment
from any capacitated agent in a two-priced mechanism.  This upper bound
is analogous in purpose to the identity between expected risk-neutral
payments and expected virtual surplus of \citet{M81} from which
optimal auctions for risk-neutral agents are derived.  We use this
bound in \autoref{sec:firstprice} and \autoref{sec:one-v-two} to
derive approximately optimal mechanisms.

As before, we focus on the induced two-priced mechanism for a single agent given
the randomization $\rand$ of other agent values and the mechanism.
The expected payment of a bidder of value~$\val$ under allocation rule $\qq(\val) = \qc(\val) + \qv(\val)$
is $\price(\val) = \val\cdot \qv(\val) + (\val-\capa)\cdot \qc(\val) =
\val \cdot \qq(\val) - \capa \cdot \qc(\val)$.

Recall from \autoref{thm:myerson} that the (risk-neutral) virtual
value for an agent with value drawn from distribution $\dist$ is
$\virt(\val) = \val - \frac{1-\dist(\val)}{\dens(\val)}$ and that the
expected risk-neutral payment for allocation rule $\alloc(\cdot)$ is
$\Ex[\val]{\virt(\val)\alloc(\val)}$.  Denote $\max(0,\virt(\val))$ by
$\virtplus(\val)$ and $\max(\val - \capa,0)$ by $\pricevc$.




\begin{theorem}
\label{thm:pricebound}
For any agent with value $\val \sim \dist$, capacity~$\capa$, and two-priced allocation rule $\qq(\val) = \qv(\val) + \qc(\val)$, 
\begin{align*}
\Ex[\val]{\price(\val)} \leq &
\Ex[\val]{\virtplus(\val) \cdot \qq(\val)}
+ \Ex[\val]{\virtplus(\val) \cdot \qc(\val)}  \\
& + \Ex[\val]{\pricevc \cdot \qc(\val)}.
\end{align*}
\end{theorem}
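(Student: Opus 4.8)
The plan is to convert the expected payment into a virtual‑surplus expression by a Myerson‑style integration by parts that, unlike the risk‑neutral case, must be carried out without assuming $\qq$ is monotone, and then to control the resulting information‑rent term using the incentive constraints of \autoref{lem:two-price-BIC}. Concretely, start from the payment identity $\price(\val) = \val\,\qq(\val) - \capa\,\qc(\val)$ recorded above. For any integrable allocation rule one has $\Ex[\val]{\val\,\qq(\val)} = \Ex[\val]{\virt(\val)\,\qq(\val)} + \Ex[\val]{\int_0^\val \qq(z)\,\dd z}$ — this follows from $\virt(\val) = \val - \tfrac{1-\dist(\val)}{\dens(\val)}$ and Fubini applied to $\int_0^\infty (1-\dist(\val))\,\qq(\val)\,\dd\val$, and needs no monotonicity. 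Substituting, and using $\qq\ge 0$ to bound $\virt(\val)\qq(\val) \le \virtplus(\val)\qq(\val)$, it suffices to prove
\[
\Ex[\val]{\textstyle\int_0^\val \qq(z)\,\dd z} - \capa\,\Ex[\val]{\qc(\val)}
\;\le\;
\Ex[\val]{\bigl(\virtplus(\val)-\virt(\val)\bigr)\,\qq(\val)} + \Ex[\val]{\virtplus(\val)\,\qc(\val)} + \Ex[\val]{\pricevc\,\qc(\val)} .
\]
The left side is the would‑be information rent net of the surplus $\capa\,\qc$ that actually remains with a truthful agent (a truthful type $\val$ retains utility $\capa\,\qc(\val)$ in a two‑priced mechanism).

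Next I would turn this into a purely linear comparison between the ``full‑price'' part $\qv$ and the ``discounted'' part $\qc$ of the allocation. Applying Fubini to the left side ($\Ex[\val]{\int_0^\val \qq} = \int_0^\infty \qq(z)(1-\dist(z))\,\dd z$), splitting $\qq = \qv + \qc$, and rewriting every term via $1-\dist(z) = (z-\virt(z))\,\dens(z)$ so that all terms share a factor $\dens(z)$, the displayed inequality collapses to
\[
\int_0^\infty \qv(z)\,\bigl(z-\virtplus(z)\bigr)\,\dens(z)\,\dd z
\;\le\;
\int_0^\infty \qc(z)\,\bigl[\,2\,\virtplus(z) + (\capa - z)^+\,\bigr]\,\dens(z)\,\dd z ,
\]
in which both weight functions are nonnegative (note $\virtplus(z)\le z$). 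This is where incentives must be used: the underbidding half of \eqref{eq:near-bic}, equivalently \autoref{cor:int-bic}, gives $\int_s^t \qv(z)\,\dd z \le \capa\,\bigl(\qc(t)-\qc(s)\bigr)$ for all $0\le s<t$ — and, taking $\highval=\val+\capa$ there, the pointwise bound $\qv(\val)\le \qc(\val+\capa)-\qc(\val)$ — so the growth of the monotone function $\qc$ ``pays for'' the mass of $\qv$.

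Proving this last inequality is the crux and the main obstacle. Since it is a linear inequality that must hold over the whole convex set of incentive‑compatible pairs $(\qv,\qc)$, it should be derivable by exhibiting a nonnegative combination of the constraints $\capa(\qc(t)-\qc(s)) - \int_s^t \qv \ge 0$ (together with $\qv,\qc\ge 0$) that dominates it; so the plan is to produce explicit nonnegative weights on these constraints — I anticipate they concentrate on the ``binding'' subfamily $t=s+\capa$, i.e.\@ a single weight function $g(s)\ge 0$ — whose aggregate coefficient on $\qv(z)$ is exactly $(z-\virtplus(z))\,\dens(z)$ and whose aggregate coefficient on $\qc(z)$ is at most $[\,2\virtplus(z)+(\capa-z)^+\,]\,\dens(z)$; with the diagonal ansatz this reduces to finding $g\ge 0$ with $\int_{(z-\capa)^+}^{z} g(s)\,\dd s \ge (z-\virtplus(z))\,\dens(z)$ and $g(z-\capa)-g(z)\le \tfrac1\capa[\,2\virtplus(z)+(\capa-z)^+\,]\,\dens(z)$. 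The delicate points I expect to fight with are (i) the range where $\virt<0$, on which the right‑hand weight nearly vanishes, so the certificate must push the $\qc$‑contribution there forward into the region $\virt\ge 0$ — this is where regularity of $\dist$ is needed; and (ii) the edge of the support, where $\qv(\val)\le\qc(\val+\capa)-\qc(\val)$ references $\qc$ outside $\operatorname{supp}\dist$ and must instead be handled with the interval form of the constraint. I would keep \autoref{ex:v<C} and \autoref{ex:er-gap} as sanity checks that the two $\virtplus$ terms and the $\pricevc$ term are each genuinely required.
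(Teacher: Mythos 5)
Your reduction is correct as far as it goes: the identity $\Ex[\val]{\val\,\qq(\val)} = \Ex[\val]{\virt(\val)\qq(\val)} + \Ex[\val]{\int_0^\val \qq(z)\,\dd z}$ needs no monotonicity, and your algebra collapsing the theorem to
$\int \qv(z)\,(z-\virtplus(z))\,\dens(z)\,\dd z \leq \int \qc(z)\,[\,2\virtplus(z) + (\capa-z)^+\,]\,\dens(z)\,\dd z$
checks out (this inequality is in fact \emph{equivalent} to the theorem). But that inequality is the entire content of the result, and you have not proved it: you have only stated a program (find a nonnegative dual certificate on the IC constraints) and flagged the places where it will be hard. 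Worse, the specific diagonal ansatz you propose appears infeasible. For the equal-revenue distribution on $[1,h]$ one has $\virt \equiv 0$ on the interior, so your first condition at the bottom of the support demands $\int_{(1-\capa)^+}^{1} g(s)\,\dd s \geq (1-\virtplus(1))\dens(1) = 1$, forcing unit mass of $g$ below the support; your second condition then reads $g(z-\capa) \leq g(z)$ throughout the interior (since $2\virtplus + (\capa-z)^+ = 0$ there for $z>\capa$), so that mass must be transported all the way to the top point mass, and it is not at all clear the books balance. You also invoke regularity of $\dist$, which the theorem does not assume and which the paper's proof of this theorem does not use (regularity enters only in \autoref{cor:revbound}).

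The idea you are missing is the paper's convexification step. The paper replaces $\qc$ on $[0,\capa]$ by $\qcbar(\val) = \int_0^\val \tfrac{1}{\capa}\sup_{z\leq y}\qv(z)\,\dd y$ — the smallest function consistent with the underbidding constraint \eqref{eq:bic-underbidding} — which is convex on $[0,\capa]$, satisfies $\qcbar \leq \qc$, and only raises the payment. The IC constraint then gives the pointwise bound $\pricebar(\val) \leq \val\qqbar(\val) - \int_0^\val \qqbar(z)\,\dd z + \int_0^\val \qcbar(z)\,\dd z$, and the troublesome last integral is split at $\capa$: on $[0,\capa]$ convexity gives $\int_0^\val \qcbar \leq \tfrac12 \val\,\qcbar(\val)$, i.e.\@ the ``information rent'' of the fictitious allocation rule $\qcbar$ is at most its risk-neutral payment, which yields the second $\Ex[\val]{\virtplus(\val)\qc(\val)}$ term; above $\capa$ monotonicity of $\qc$ gives the $\Ex[\val]{\pricevc\,\qc(\val)}$ term. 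This local, constructive argument is what substitutes for the global duality certificate you were hoping to exhibit, and it requires no regularity. As written, your proposal has a genuine gap at its central step.
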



\begin{corollary}
\label{cor:revbound}
When bidders have regular distributions and a common capacity, either
the risk-neutral optimal auction or the capacitated second price
auction (whichever has higher revenue) gives a 3-approximation to the
optimal revenue for capacitated agents.
\end{corollary}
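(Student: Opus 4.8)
\emph{Proof idea.} The plan is to bound the revenue of the optimal capacitated auction by the three terms of \autoref{thm:pricebound} and then identify each term, up to a constant, with the revenue of one of the two candidate auctions. Let $\mech^*$ be a revenue-optimal auction for the capacitated bidders. By \autoref{thm:optimal-two-price} we may take $\mech^*$ to be two-priced; since that transformation alters only payments (and leaves truthtelling an equilibrium), $\mech^*$ is still a feasible single-item auction, and for each bidder~$i$ it induces a two-priced allocation rule $\qq_i = \qv_i + \qc_i$, the expectation being over the other (equilibrium) bids and the auction's coins. Applying \autoref{thm:pricebound} to each bidder and summing over~$i$ yields
\begin{align*}
\REV(\mech^*) \;\le\; \sum_i \Ex[\val_i]{\virtplus(\val_i)\,\qq_i(\val_i)} \;+\; \sum_i \Ex[\val_i]{\virtplus(\val_i)\,\qc_i(\val_i)} \;+\; \sum_i \Ex[\val_i]{(\val_i-\capa)^+\,\qc_i(\val_i)} .
\end{align*}

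For the first sum, interchanging the order of expectation rewrites $\sum_i \Ex[\val_i]{\virtplus(\val_i)\,\qq_i(\val_i)}$ as $\Ex[\vals]{\sum_i \virtplus(\val_i)\,\alloc_i(\vals)}$, where $\alloc_i(\vals)$ is the ex-post probability that $\mech^*$ serves bidder~$i$; since $\virtplus\ge 0$ and $\sum_i \alloc_i(\vals)\le 1$ this is at most $\Ex[\vals]{\max_i \virtplus(\val_i)}$, which for regular distributions equals the revenue $\revrn$ of the risk-neutral optimal auction (\autoref{thm:myerson}, \autoref{thmpart:opt}). Because $\qc_i\le\qq_i$ pointwise, the identical argument bounds the second sum by $\revrn$. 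For the third sum, the ex-post probabilities of selling to bidder~$i$ at price $\val_i-\capa$ also sum to at most~$1$ across~$i$, so $\sum_i \Ex[\val_i]{(\val_i-\capa)^+\,\qc_i(\val_i)} \le \Ex[\vals]{\max_i (\val_i-\capa)^+}$; here the common capacity~$\capa$ enters, since $\max_i (\val_i-\capa)^+ = (\max_i \val_i - \capa)^+$, which never exceeds the winner's charge $\max\big(\max_i\val_i - \capa,\ \text{second-highest value}\big)$ in the capacitated second-price auction. Hence the third sum is at most $\REV(\text{CSP})$, the equilibrium revenue of that (capacitated-incentive-compatible) auction.

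Combining the three bounds gives $\REV(\mech^*) \le 2\,\revrn + \REV(\text{CSP}) \le 3\max\{\revrn,\ \REV(\text{CSP})\}$. Since for regular distributions the risk-neutral optimal auction is a second-price auction with reserves, it is dominant-strategy truthful, so capacitated bidders still play it truthfully and it earns $\revrn$ from them; thus both $\revrn$ and $\REV(\text{CSP})$ are revenues actually attainable on capacitated bidders, and one of these two named auctions recovers at least a third of $\REV(\mech^*)$. Given \autoref{thm:pricebound} the argument is largely bookkeeping; the step needing the most care is the first one---being precise that the interim rules $\qq_i$ and $\qc_i$ extracted from $\mech^*$ come from a genuinely feasible single-item allocation, so that $\sum_i \virtplus(\val_i)\,\alloc_i(\vals)\le\max_i\virtplus(\val_i)$ and its $\qc$-analogue are legitimate---together with the use of the common-capacity hypothesis in the third term.
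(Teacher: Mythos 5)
Your proposal is correct and follows essentially the same route as the paper: decompose the optimal (WLOG two-priced) revenue via \autoref{thm:pricebound}, observe that the first two terms are virtual surpluses of feasible allocations and hence bounded by the risk-neutral optimal revenue (for regular distributions), and that the third term is bounded by the revenue of the auction maximizing $(\val-\capa)^+$, which for a common capacity is the capacitated second-price auction. Your added observations---that the interim rules come from a feasible allocation so the pointwise bound $\sum_i \virtplus(\val_i)\alloc_i(\vals)\le\max_i\virtplus(\val_i)$ is legitimate, and that both benchmark auctions remain truthful for capacitated bidders---are correct and make explicit points the paper leaves implicit.
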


\begin{proof}
For each of the three parts of the revenue upper bound of
\autoref{thm:pricebound}, there is a simple auction that optimizes the
expectation of the part across all agents.  For the first two parts,
the allocation rules across agents (both for $\qq(\cdot)$ and
$\qc(\cdot)$) are feasible.  When the distributions of agent values
are regular (i.e., the virtual value functions are monotone), the
risk-neutral revenue-optimal auction optimizes virtual surplus across
all feasible allocations (i.e., expected virtual value of the agent
served); therefore, its expected revenue upper bounds the first and
second parts of the bound in \autoref{thm:pricebound}.  The revenue of
the third part is again the expectation of a monotone function (in
this case $\pricevc$) times the service probability.  The auction that
serves the agent with the highest (positive) ``value minus capacity''
(and charges the winner the maximum of her ``minimum winning bid,'' i.e., the second-price payment rule, and her ``value minus
capacity'') optimizes such an expression over all feasible
allocations; therefore, its revenue upper bounds this third part of
the bound in \autoref{thm:pricebound}.  When capacities are identical,
this auction is the capacitated second price auction.
\end{proof}


Before proving \autoref{thm:pricebound}, we give two examples.  The
first shows that the gap between the revenue of the capacitated
second-price auction and the risk-neutral revenue-optimal auction
(i.e., the two auctions from \autoref{cor:revbound}) can be
arbitrarily large.  This means that there is no hope that an auction
for risk-neutral agents always obtains a good revenue for risk-averse
agents.  The second example shows that even when all values are
bounded from above by the capacity (and therefore, capacities are
never binding in a risk-neutral auction) an auction for risk-averse
agents can still take advantage of risk aversion to generate higher
revenue.  Consequently, the fact that we have two risk-neutral revenue
terms in the bound of \autoref{thm:pricebound} is necessary (as the
``value minus capacity'' term is zero in this case).

\begin{example}
\label{ex:er-gap}
The {\em equal revenue} distribution on interval $[1,h]$ has
distribution function $\dist(z) = 1-1/z$ (with a point mass at $h$).
The distribution gets its name because such an agent would accept
any offer price of $\price$ with probability $1/\price$ and generate
an expected revenue of one.  With one such agent the optimal
risk-neutral revenue is one.  Of course, an agent with capacity $\capa
= 1$ would happily pay her value minus her capacity to win all the
time (i.e., $\qq(\val) = \qc(\val) = 1$).  The revenue of this auction
is $\Ex{\val} - 1 = \ln h$.  For large~$h$, this is unboundedly larger
than the revenue we can obtain from a risk-neutral agent with the
same distribution.
\end{example}

\begin{example}
\label{ex:v<C}
The revenue from a two-priced mechanism can be better than the optimal
risk-neutral revenue even when all values are no more than the
capacity. Consider selling to an agent with capacity of $\capa = 1000$ and
value drawn from the equal revenue distribution from
\autoref{ex:er-gap} with $h=1000$.

The following two-priced rule is BIC and generates revenue of
approximately $1.55$ when selling to such a bidder. Let $\qc(\val) =
\frac{0.6}{1000}(\val-1)$, $\qq(\val) = \min(\qc(\val) + 0.6, 1)$, and
$\qv(\val) = \qq(\val) - \qc(\val)$ (shown in
\autoref{fig:ex-eqrevunderC}). Recall that the expected payment from
an agent with value $\val$ can be written as $\val\qq(\val) - \capa
\qc(\val)$; for small values, this will be approximately $0.6$; for
large values this will increase to $400$. The expected revenue is
$\int_1^{1000} \left( z \cdot \qq(z) -1000\ \qc(z) \right) \dens (z)
\dd z + \frac{1}{1000}(1000 \cdot \qv(1000)) \approx 1.15 + 0.4
\approx 1.55$, an improvement over the optimal risk-neutral revenue of
$1$.
\end{example}

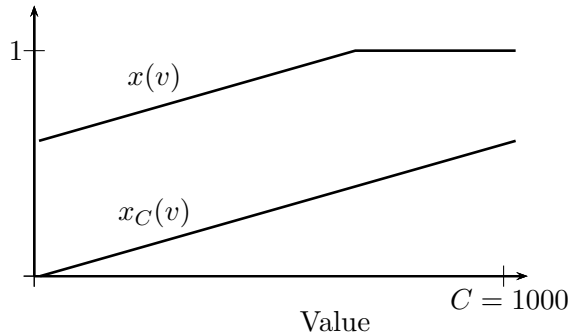
\begin{figure}[!hb]
			\centering
		\begin{pspicture}(-0.5,-1)(6.5,4)
			\psset{yunit=3cm,xunit=1.6cm}
			\psaxes[Dy=1, Dx=3.9, labels=none]{->}(0, 0)(0,0)(4.1, 1.2)
			\psline[linewidth=1pt]{-}(0.04,0.6)(2.668, 1)(4, 1)
			\psline[linewidth=1pt]{-}(0.05, 0)(4, 0.6)
			
			\rput(1, 0.88){$\qq(\val)$}
			\rput(1, 0.28){$\qc(\val)$}
			\rput(3.95, -.1){$C=1000$}
			\rput(-0.15, 1){$1$}
			\rput(2.5,-0.2){Value}
		\end{pspicture}		
	\caption{With $\capa=1000$ and values from the equal revenue
          distribution on $[1,1000]$, this two-priced mechanism is BIC
          and achieves $1.55$ times the revenue of the optimal
          risk-neutral mechanism.\label{fig:ex-eqrevunderC}}
\end{figure}

In the remainder of this section we instantiate the following outline
for the proof of \autoref{thm:pricebound}.  First, we transform any
given two-priced allocation rule $\qq = \qv + \qc$ into a new two-priced rule
$\qqbar(\val) = \qcbar(\val) + \qvbar(\val)$ (for which the
expected payment is $\pricebar(\val) = \val \qqbar(\val) - \capa
\qcbar(\val)$).  While this transformation may violate some incentive
constraints (from \autoref{lem:two-price-BIC}), it enforces convexity
of $\qcbar(\val)$ on $\val \in [0,\capa]$ and (weakly) improves
revenue.  Second, we derive a simple upper bound on the payment rule
$\pricebar(\cdot)$.  Finally, we use the enforced convexity property
of $\qcbar(\cdot)$ and the revenue upper bound to partition the
expected payment $\Ex[\val]{\pricebar(\val)}$ by the three terms that
can each be attained by simple mechanisms.

\subsection{Two-Priced Allocation Construction}
\label{sec:mechbar}

We now construct a two-priced allocation rule $\qqbar = \qvbar + \qcbar$
from $\qq = \qv + \qc$ for which (a) revenue is improved, i.e.,
$\pricebar(\val) \geq \price(\val)$, and (b) the probability the agent
pays her value minus capacity, $\qcbar(\val)$, is convex for $\val \in
[0,\capa]$.  In fact, given $\qv$, $\qcbar$ is the smallest function
for which IC constraint \eqref{eq:bic-underbidding} holds; and in the
special case when $\qv$ is monotone, the left-hand side of
\eqref{eq:int-bic} is tight for $\qcbar$ on $[0, \capa]$.  Other
incentive constraints may be violated by $\qqbar$, but we use it only
as an upper bound for revenue.

\begin{definition}[$\qqbar$]
\label{def:mechbar}
We define $\qqbar = \qcbar + \qvbar$ as follows:
\begin{enumerate}
\item $\qvbar(\val) = \qv(\val)$;
\item Let $\crate(\val)$ be $\tfrac{1}{\capa} \sup_{z \leq \val} \qv(z)$, and let
\begin{align}
\label{eq:qcbar}
\qcbar(\val) = \left\{ \begin{array}{ll}
\int_0^{\val} \crate(y) \: \dd y, & \val \in [0, \capa]; \\
\qc(\val), & \val > \capa.
\end{array}
\right.
\end{align}
\end{enumerate}
\end{definition}

\begin{lemma}[Properties of $\qqbar$]
\label{thm:mechbar}
\ \\
\begin{enumerate}
\item \label{prop:qcbar-conv} On $\val \in [0,\capa]$, $\qcbar(\cdot)$
  is a convex, monotone increasing function.

\item \label{lem:qcbar}
On all $\val$, $\qcbar(\val) \leq \qc(\val)$.

\item 
\label{thmpart:qcbar-bic}
The incentive
constraint from the left-hand side of \eqref{eq:int-bic} holds for
$\qcbar$: 
$\frac{1}{\capa}\int_\val^{\highval} \qvbar(z)\: \dd z \leq \qcbar(\highval) -
\qcbar(\val)$ for all $\val < \highval$.

\item \label{cor:qcbar} On all $\val$, $\qcbar(\val) \leq \qc(\val)$,
  $\qqbar(\val) \leq \qq(\val)$, and $\pricebar(\val) \geq
  \price(\val)$.
\end{enumerate}
\end{lemma}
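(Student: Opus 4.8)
The plan is to treat the four properties in the order (a), (b), (c), (d), noting that (b) contains the first assertion of (d) and that both (c) and (d) rest on the inequality $\qcbar\le\qc$ from (b). Throughout, $\qc,\qv,\qq$ come from a BIC two-priced mechanism, so the incentive inequalities \eqref{eq:bic-underbidding} and (via \autoref{cor:int-bic}) \eqref{eq:int-bic} are available. Property (a) is immediate from \autoref{def:mechbar}: the map $\val\mapsto\crate(\val)=\frac1\capa\sup_{z\le\val}\qv(z)$ is a supremum over a nested family of sets, hence nondecreasing, and it is nonnegative since $\qv$ is a probability; on $[0,\capa]$ the function $\qcbar$ is the integral of $\crate$ started at $0$, hence nondecreasing (nonnegative integrand) and convex (nondecreasing integrand).

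For (b), the case $\val>\capa$ is the definitional equality, so I would fix $\val\in[0,\capa]$ and show $\qcbar(\val)=\frac1\capa\int_0^\val\sup_{z\le y}\qv(z)\,\dd y\le\qc(\val)$. The tool is to telescope \eqref{eq:bic-underbidding}: for any increasing chain $0=z_0\le z_1\le\cdots\le z_k=\val$, applying \eqref{eq:bic-underbidding} to each consecutive pair $(z_{j-1},z_j)$ --- legitimate since $z_j-z_{j-1}\le z_j\le\capa$ --- and summing gives $\qc(\val)\ge\qc(0)+\frac1\capa\sum_{j=1}^k\qv(z_{j-1})(z_j-z_{j-1})\ge\frac1\capa\sum_{j=1}^k\qv(z_{j-1})(z_j-z_{j-1})$, using $\qc(0)\ge0$. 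Taking the supremum over chains, it suffices to show $\sup_{\text{chains}}\sum_j\qv(z_{j-1})(z_j-z_{j-1})=\int_0^\val\sup_{z\le y}\qv(z)\,\dd y$. The ``$\le$'' direction is easy: $\qv(z_{j-1})\le\sup_{z\le y}\qv(z)$ for $y\in[z_{j-1},z_j)$, so each chain sum is dominated by the integral. The reverse direction --- that chains recover, in the limit, the area under the running supremum $\psi(y):=\sup_{z\le y}\qv(z)$ --- is the hard part: $\psi$ is nondecreasing, hence an increasing limit of step functions, and one would approximate it by placing chain points near the locations where $\qv$ attains its successive running maxima, but these near-maximizing locations need not be monotone in $y$ (exactly the non-monotonicity of \autoref{ex:nonmono}), so a careful limiting argument on the chain mesh and the approximation error is required. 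Equivalently, this shows $\qcbar|_{[0,\capa]}$ is the pointwise-smallest function obeying the lower bounds \eqref{eq:bic-underbidding} for the given $\qv$, so $\qc$, which obeys them, dominates it. I expect this reverse inequality to be the main obstacle in the lemma.

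For (c), with (b) available, I would verify $\frac1\capa\int_\val^\highval\qv(z)\,\dd z\le\qcbar(\highval)-\qcbar(\val)$ in three regimes. If $\val<\highval\le\capa$, then $\qcbar(\highval)-\qcbar(\val)=\int_\val^\highval\crate(y)\,\dd y\ge\frac1\capa\int_\val^\highval\qv(y)\,\dd y$ because $\crate(y)\ge\qv(y)/\capa$. If $\capa\le\val<\highval$, then $\qcbar$ agrees with $\qc$ beyond $\capa$ and the claim is the left half of \eqref{eq:int-bic} (if $\val=\capa$ one first uses $\qcbar(\capa)=\int_0^\capa\crate\le\qc(\capa)$ from (b)). In the crossover case $\val<\capa<\highval$, split $\int_\val^\highval\qv=\int_\val^\capa\qv+\int_\capa^\highval\qv$, bound the first piece by $\int_\val^\capa\crate=\qcbar(\capa)-\qcbar(\val)$ and the second by $\qc(\highval)-\qc(\capa)$ via \autoref{cor:int-bic}, and then use $\qcbar(\capa)=\int_0^\capa\crate\le\qc(\capa)$ from (b) to collapse the telescope to $\qc(\highval)-\qcbar(\val)=\qcbar(\highval)-\qcbar(\val)$.

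For (d): the first assertion is (b). Next, $\qqbar-\qq=(\qcbar+\qvbar)-(\qc+\qv)=\qcbar-\qc\le0$, so $\qqbar\le\qq$; and since $\pricebar(\val)=\val\qqbar(\val)-\capa\qcbar(\val)$ and $\price(\val)=\val\qq(\val)-\capa\qc(\val)$, subtracting gives $\pricebar(\val)-\price(\val)=(\val-\capa)\bigl(\qcbar(\val)-\qc(\val)\bigr)$, which is $\ge0$ because for $\val\le\capa$ both factors are nonpositive while for $\val>\capa$ the second factor vanishes. Hence $\pricebar(\val)\ge\price(\val)$ for all $\val$; apart from the reverse inequality inside (b), the rest is routine manipulation of the incentive inequalities already in hand.
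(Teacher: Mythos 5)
Your parts (a), (c), and (d) are correct and follow essentially the paper's own route: (a) is the integral-of-a-monotone-nonnegative-function observation; (c) is the same three-regime case split, including the use of $\qcbar(\capa)\le\qc(\capa)$ to bridge the crossover at $\capa$; and (d) is the same algebra, where your identity $\pricebar(\val)-\price(\val)=(\val-\capa)\bigl(\qcbar(\val)-\qc(\val)\bigr)$ makes the sign argument transparent. Your reduction of part (b) to the claim $\sup_{\text{chains}}\sum_j\qv(z_{j-1})(z_j-z_{j-1})\ge\int_0^\val\sup_{z\le y}\qv(z)\,\dd y$, via telescoping \eqref{eq:bic-underbidding} along a chain and using $\qc(0)\ge 0$, is also sound.

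The gap is that this last inequality \emph{is} part (b): it is the one step to which the paper devotes essentially all of its effort (the appendix proof of \autoref{lem:qcbar}), and you leave it at ``a careful limiting argument \dots is required.'' Two remarks. First, the obstacle you name---non-monotonicity of near-maximizers---is not the real difficulty: writing $\psi(y)=\sup_{z\le y}\qv(z)$, if $\psi(y')>\psi(y)$ for $y'>y$ then $\sup_{z\in(y,y']}\qv(z)=\psi(y')$, so a near-maximizer for the level at $y'$ can always be chosen inside $(y,y']$, i.e., to the right of any previously chosen chain point; monotone chains of near-maximizers always exist. Second, the bookkeeping still has to be done. One way: if $w\le y$ and $\qv(w)>\psi(y)-\eps$, then by monotonicity $\psi$ lies within $\eps$ of $\qv(w)$ on all of $[w,y]$, so a single chain segment with left endpoint $w$ recovers $\int\psi$ over that stretch up to $O(\eps)$ per unit length; since $\psi\le 1$, at most $1/\eps$ such segments (one per $\eps$-band of $\psi$) are needed, and letting $\eps\to0$ gives the claim. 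The paper instead runs an induction over a uniform mesh of width $\Delta$, comparing $\qc$ to a piecewise-linear minorant of $\qcbar$ that lags by one mesh step, and in each step applies \eqref{eq:bic-underbidding} once, anchored at the current running argmax of $\qv$. Either way, this is the technical substance of the lemma and cannot be waved through; as written, your proof of part (b) (and hence of the first claim in part (d)) is incomplete.
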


%
The proof of \autoref{lem:qcbar} is technical, and we give a sketch
here.  Recall that, for each $\val$, the IC constraint
\eqref{eq:bic-underbidding} gives a linear constraint lower
bounding~$\qc(\highval)$ for every $\highval > \val$.  If one
decreases $\qc(\val)$, the lower bound it imposes on~$\qc(\highval)$
is simply ``pulled down'' and is less binding.  The definition
of~$\qcbar$ simply lands $\qcbar(\val)$ on the most binding lower
bound, and therefore not only makes $\qcbar(\val)$ at most
$\qc(\val)$, but also lowers the linear constraint that $\val$ imposes
on larger values.  If the number of values is countable or if $\qv$ is
piecewise constant, the lemma is easy to see by induction.  A full
proof for the general case of \autoref{lem:qcbar}, along with the
proofs of the other more direct parts of \autoref{thm:mechbar}, is
given in \autoref{sec:pricebound-app}.

\subsection{Payment Upper Bound}
\label{sec:pricebar}


	\begin{figure}[!hb]
			\small
			\centering
			\hspace*{\fill}
			\subfloat[][Shaded region is the expected payment from an agent of value $\val$. \label{fig:paymentblockregion} ]{
				\begin{pspicture}(-1, -1)(6,4)
				\psset{yunit=3cm,xunit=1.6cm}
				\psaxes[Dy=1, Dx=1, labels=none, ticks=y]{->}(0, 0)(0,0)(4, 1.2)
				\pspolygon[fillstyle=solid, fillcolor=lightgray](0,0)(0,0.93)(2,0.93)(2,0.6)(1,0.6)(1,0)				
				\psline[linewidth=1pt, linearc=0]{-}(0.8,0)(0.8, 0.5)
				\psbezier[linewidth=1pt]{->}(0.8,0.5)(1.6,0.9)(1.6,1)(3.8,1)
				\rput(2.5, 1.1){$\qqbar(\val)$}
				\psline[linewidth=1pt]{-}(.8,0)(1.8, 0.5)
				\psbezier[linewidth=1pt]{->}(1.8,0.5)(2.6,0.9)(2.8,1)(3.8,1) 
				\rput(2.5,0.6){$\qcbar(\val)$}
				\rput(1,-0.1){$\val - \capa$}
				\rput(2,-0.1){$\val$}
				\rput(3.5,-0.2){Value}
				\psline(0, -0.04)(0, 0.04)
				\psline(1, -0.04)(1, 0.04)
				\psline(2, -0.04)(2, 0.04)
		\end{pspicture}
		}\hfill
			\subfloat[][Shaded region upper bounds expected payment from an agent with value $\val$, shown in
\autoref{lem:pricebar}. \label{fig:paymentupperbound}]{
			\begin{pspicture}(-1,-1)(6,4)
			\psset{yunit=3cm,xunit=1.6cm}
			\psaxes[Dy=1, Dx=1, labels=none, ticks=y]{->}(0, 0)(0,0)(4, 1.2)
			\pspolygon[fillstyle=solid, fillcolor=lightgray, linewidth=0pt](0,0)(0,0.93)(2,0.93)(2,0)				
			
			\pscustom[linewidth=0pt,fillstyle=solid,fillcolor=white]
			{ 	
				\psline[linewidth=1pt, linearc=0]{-}(0.8,0)(0.8, 0.5)
				\psbezier[liftpen=1,linewidth=1pt]{-}(0.8,0.5)(1.6,0.9)(1.6,1)(3.8,1)
				\psbezier[liftpen=1, linewidth=1pt, linestyle=dashed]{-}(3.8,1)(2.8,1)(2.6,0.9)(1.8,0.5) 
				\psline[liftpen=1,linewidth=1pt, linestyle=dashed]{-}(1.8, 0.5)(.8,0)
			}
			\rput(2.5, 1.1){$\qqbar(x)$}
			\rput(2.5,0.6){$\qcbar(x)$}
			\rput(2,-0.1){$\val$}
			\rput(3.5,-0.2){Value}
			\psline(0, -0.04)(0, 0.04)
			\psline(2, -0.04)(2, 0.04)		
		\end{pspicture}
			}	
			\hspace*{\fill}		
\caption{ \label{fig:opt-geometry}}

		\end{figure}


Recall that $\pricebar(\val)$ is the expected payment corresponding
with two-priced allocation rule~$\qqbar(\val)$.  We now give an upper
bound on~$\pricebar(\val)$.  

\begin{lemma}
\label{lem:pricebar}
The payment $\pricebar(\val)$ for $\val$ and two-priced rule
$\qqbar(\val)$ satisfies 
\begin{align}
\label{eq:pricebound}
\pricebar(\val) &\leq \val \qqbar(\val) -
\int_0^{\val} \qqbar(z) \: \dd z + \int_0^{\val} \qcbar(z) \: \dd z.
\end{align}
\end{lemma}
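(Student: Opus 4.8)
The plan is to reduce \eqref{eq:pricebound} to a single integral inequality about $\qcbar$ and then read that inequality off from the properties of $\qqbar$ already established in \autoref{thm:mechbar}. Recall that, by construction, the expected payment of a value-$\val$ agent under the two-priced rule $\qqbar = \qvbar + \qcbar$ is $\pricebar(\val) = \val\,\qqbar(\val) - \capa\,\qcbar(\val)$. Substituting this expression into the left-hand side of \eqref{eq:pricebound} and cancelling the common term $\val\,\qqbar(\val)$, the claim becomes
\begin{equation*}
\int_0^{\val} \qqbar(z)\,\dd z - \int_0^{\val} \qcbar(z)\,\dd z \;\leq\; \capa\,\qcbar(\val),
\end{equation*}
and since $\qqbar = \qvbar + \qcbar$ the left-hand side is exactly $\int_0^{\val}\qvbar(z)\,\dd z$. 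So it suffices to prove $\int_0^{\val}\qvbar(z)\,\dd z \leq \capa\,\qcbar(\val)$ for every $\val$.

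The main step is to observe that this is precisely the left-hand inequality of \eqref{eq:int-bic} for $\qcbar$, which is recorded as \autoref{thmpart:qcbar-bic} of \autoref{thm:mechbar}: $\tfrac{1}{\capa}\int_{\val'}^{\highval}\qvbar(z)\,\dd z \leq \qcbar(\highval) - \qcbar(\val')$ for all $\val' < \highval$. Applying this with $\val' = 0$ and $\highval = \val$, and using that $\qcbar(0) = \int_0^0 \crate(y)\,\dd y = 0$ by \autoref{def:mechbar}, yields $\tfrac{1}{\capa}\int_0^{\val}\qvbar(z)\,\dd z \leq \qcbar(\val)$; multiplying through by $\capa$ finishes the argument. (If one prefers to avoid \autoref{thm:mechbar} altogether, the case $\val \le \capa$ is immediate from the definition, since $\capa\,\crate(y) = \sup_{z \le y}\qvbar(z) \ge \qvbar(y)$ gives $\capa\,\qcbar(\val) = \int_0^\val \capa\,\crate(y)\,\dd y \ge \int_0^\val \qvbar(y)\,\dd y$, and the case $\val > \capa$ reduces to it by applying \autoref{thmpart:qcbar-bic} of \autoref{thm:mechbar} to the pair $\capa$ and $\val$.)

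Geometrically this matches \autoref{fig:opt-geometry}: the notch of width $\capa$ and height $\qqbar(\val)-\qcbar(\val)$ that is carved out of the rectangle in \autoref{fig:paymentblockregion} to give $\pricebar(\val)$ has area $\capa\,\qvbar(\val)$, whereas the region between the curves $\qcbar$ and $\qqbar$ on $[0,\val]$ that is carved out of the rectangle in \autoref{fig:paymentupperbound} has area $\int_0^{\val}\qvbar(z)\,\dd z$, and the lemma is the statement that the latter never exceeds the former. I do not expect any real obstacle here: the content of the lemma is entirely carried by the construction of $\qqbar$ (in particular the integrated incentive constraint \eqref{eq:bic-underbidding} that it enforces, together with $\qcbar(0)=0$), and all that remains is the routine cancellation above. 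The harder work — showing that this construction both satisfies that constraint and only increases revenue — has already been done in \autoref{thm:mechbar}.
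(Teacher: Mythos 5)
Your proof is correct and is essentially identical to the paper's: both arguments come down to the single inequality $\capa\,\qcbar(\val) \geq \int_0^{\val}\qvbar(z)\,\dd z$, obtained from \autoref{thmpart:qcbar-bic} of \autoref{thm:mechbar} applied on $[0,\val]$ together with $\qcbar(0)=0$ and $\qvbar = \qqbar - \qcbar$. The only difference is presentational (you reduce the claim to that inequality, while the paper derives the bound on the rebate $\capa\,\qcbar(\val)$ first).
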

	
\begin{proof}
View a two-priced mechanism $\qqbar = \qvbar + \qcbar$ as charging
$\val$ with probability $\qqbar(\val)$ and giving a rebate of $\capa$
with probability $\qcbar(\val)$.  We bound this rebate as follows
(which proves the lemma):
\begin{align*}
\capa \cdot \qcbar(\val) &\geq \capa \cdot \qcbar(0) + \int_0^\val \qvbar(z) \: \dd z \\
                   &\geq \int_0^{\val} \qqbar(z) \: \dd z - \int_0^{\val} \qcbar(z) \: \dd z.
\end{align*}
The first inequality is from \autoref{thmpart:qcbar-bic} of \autoref{thm:mechbar}.  The second inequality
is from the definition of $\qcbar(0) = 0$ in \eqref{eq:qcbar} and
$\qvbar(\val) = \qqbar(\val) - \qcbar(\val)$. See \autoref{fig:opt-geometry} for an illustration.
\end{proof}


 
\subsection{Three-part Payment Decomposition}
\label{sec:revbar}

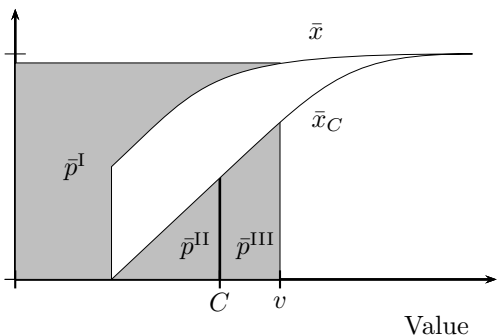
\begin{figure}[!hb]
			\small
			\centering
		\begin{pspicture}(-0.5,-1)(6.5,5)
			\psset{yunit=3cm,xunit=1.6cm}
			\psaxes[Dy=1, Dx=1, labels=none, ticks=y]{->}(0, 0)(0,0)(4, 1.2)
			\pspolygon[fillstyle=solid, fillcolor=lightgray, linewidth=0pt](0,0)(0,0.96)(2.2,0.96)(2.2,0)				
			\psline[linewidth=1pt, linearc=0](1.7,0)(1.7,0.6)

			\rput(0.5, 0.5){$\pI$}
			\rput(1.5, 0.15){$\pII$}
			\rput(2.0, 0.15){$\pIII$}
							
			\pscustom[linewidth=0pt,fillstyle=solid,fillcolor=white]
			{ 	
				\psline[linewidth=1pt, linearc=0]{-}(0.8,0)(0.8, 0.5)
				\psbezier[liftpen=1,linewidth=1pt]{-}(0.8,0.5)(1.6,0.9)(1.6,1)(3.8,1)
				\psbezier[liftpen=1, linewidth=1pt, linestyle=dashed]{-}(3.8,1)(2.8,1)(2.6,0.9)(1.8,0.5) 
				\psline[liftpen=1,linewidth=1pt, linestyle=dashed]{-}(1.8, 0.5)(.8,0)
			}
			\rput(2.5, 1.1){$\qqbar$}
			\rput(2.6,0.7){$\qcbar$}
			\rput(2.2,-0.1){$\val$}
			\rput(1.7,-0.1){$\capa$}
			\rput(3.5,-0.2){Value}
			\psline(0, -0.04)(0, 0.04)
			\psline(2.2, -0.04)(2.2, 0.04)
			\psline(1.7, -0.04)(1.7, 0.04)
		\end{pspicture}		
	\caption{Breakdown of the expected payment upper bound in a two-priced auction.\label{fig:payment-breakdown}}
\end{figure}

Below, we bound $\pricebar(\cdot)$ (and hence $\price(\cdot)$) in
terms of the expected payment of three natural mechanisms.  As seen
geometrically in \autoref{fig:payment-breakdown}, the bound given in
\autoref{lem:pricebar} can be broken into two parts: the area above
$\qqbar(\cdot)$, and the area below $\qcbar(\cdot)$.  We refer to the
former as $\pI(\cdot)$; we further split the latter quantity into two
parts: $\pII(\cdot)$, the area corresponding to $\val \in [0, \capa]$, and
$\pIII(\cdot)$, that corresponding to $\val \in [\capa, \val]$.  We
define these quantities formally below:

\begin{align}
\label{eq:pI}
\pI(\val) &= \qqbar(\val) \val - \int_0^{\val} \qqbar(z) \: \dd z,\\
\label{eq:pII}
\pII(\val) &= \int_0^{\min\{\val, \capa\}} \qcbar(z) \: \dd z \\
\label{eq:pIII}
\pIII(\val) &= \begin{cases}
0, & \val \leq \capa; \\
\int_{\capa}^{\val} \qcbar(z) \: \dd z, & \val > \capa.
\end{cases}
\end{align}

\begin{proof}[\stoccom{Proof} of \autoref{thm:pricebound}\stoccom{.}]
We now bound the revenue from each of the three parts of the payment
decomposition.  These bounds, combined with \autoref{cor:qcbar} of
\autoref{thm:mechbar} and \autoref{lem:pricebar}, immediately give
\autoref{thm:pricebound}.
\begin{description}
\item[Part~1.]
$\Ex[\val]{\pI(\val)} = \Ex[\val]{\virt(\val) \cdot \qqbar(\val)} \leq \Ex[\val]{\virtplus(\val) \cdot \qq(\val)}$.

Formulaically, $\pI(\cdot)$ corresponds to the risk-neutral payment
identity for $\qqbar(\cdot)$ as specified by 
\autoref{thmpart:payment} of \autoref{thm:myerson}; by 
\autoref{thmpart:virt} of \autoref{thm:myerson}, in expectation over
$\val$, this payment is equal to the expected virtual surplus
$\Ex[\val]{\virt(\val) \cdot \qqbar(\val)}$.\footnote{Note:  This
  equality does not require monotonicity of the allocation rule
  $\qqbar(\cdot)$; as long as \autoref{thmpart:payment} of
  \autoref{thm:myerson} formulaically holds, \autoref{thmpart:virt}
  follows from integration by parts.}  The inequality follows as terms
$\virt(\val)$ and $\qqbar(\val)$ in this expectation are point-wise
upper bounded by $\virtplus(\val) = \max(\virt(\val),0)$ and
$\qq(\val)$, respectively, the latter by \autoref{cor:qcbar} of \autoref{thm:mechbar}.

\item[Part~2.]
$\Ex[\val]{\pII(\val)} \leq \Ex[\val]{\virt(\val) \cdot \qcbar(\val)} \leq
\Ex[\val]{\virtplus(\val) \cdot \qc(\val)}$.

By definition of $\pII(\cdot)$ in \eqref{eq:pII}, if the statement of
the lemma holds for $\val = \capa$ it holds for $\val > \capa$; so we
argue it only for $\val \in [0,\capa]$.  Formulaically, with respect
to a risk-neutral agent with allocation rule $\qcbar(\cdot)$, the
risk-neutral payment is $\val \cdot \qcbar(\val) - \int_0^\val
\qcbar(z) \: \dd z$, the surplus is $\val \cdot \qcbar(\val)$, and the
risk-neutral agent's utility (the difference between the surplus and
payment) is $\int_0^\val \qcbar(z) \: \dd z = \pII(\val)$.  Convexity
of $\qcbar(\cdot)$, from \autoref{prop:qcbar-conv} of \autoref{thm:mechbar}, implies that the
risk-neutral payment is at least half the surplus, and so is at least
the risk-neutral utility. The lemma follows, then, by the same
argument as in the previous part.

\item[Part~3.]
$\Ex[\val]{\pIII(\val)} \leq \Ex[\val]{\pricevc \cdot \qcbar(\val)}
   = \Ex[\val]{\pricevc \cdot \qc(\val)}$.

The statement is trivial for $\val \leq \capa$ so assume $\val \geq
\capa$.  By definition $\qcbar(\val) = \qc(\val)$ for $\val > \capa$.
By \eqref{eq:bic-underbidding}, $\qc(\cdot)$ is monotone
non-decreasing.  Hence, for $\val > \capa$, $\pIII(\val) =
\int_{\capa}^{\val} \qc(z) \: \dd z\leq\int_{\capa}^{\val} \qc(\val)
\: \dd z = (\val-\capa) \cdot \qc(\val)$.
Plugging in $\pricevc = \max(\val - \capa,0)$ and taking expectation over~$\val$,
we obtain the bound.

\end{description}
\end{proof}

\section{Approximation Mechanisms and a Payment Identity}
\label{sec:payid}

In this section we first give a payment identity for Bayes-Nash
equilibria in mechanisms that charge agents a deterministic amount
upon winning (and zero upon losing).  Such one-priced payment schemes
are not optimal for capacitated agents; however, we will show that
they are approximately optimal.  When agents are symmetric (with
identical distribution and capacity) we use this payment identity to
prove that the first-price auction is approximately optimal.  When
agents are asymmetric we give a simple direct-revelation one-priced
mechanism that is BIC and approximately optimal.

\subsection{A One-price Payment Identity}

For risk-neutral agents, the Bayes-Nash equilibrium conditions entail
a payment identity: given an interim allocation rule, the payment rule
is fixed (\autoref{thm:myerson}, \autoref{thmpart:payment}).  For
risk-averse agents there is no such payment identity: there are
mechanisms with the identical BNE allocation rules but distinct BNE
payment rules.  We restrict attention to auctions wherein an agent's
payment is a deterministic function of her value (if she wins) and
zero if she loses.
We call these {\em one-priced} mechanisms; for these mechanisms there
is a (partial) payment identity.

Payment identities are an interim phenomenon.  We consider a single
agent and the induced allocation rule she faces from a Bayesian
incentive compatible auction (or, by the revelation principle, any BNE of any mechanism).
This allocation rule internalizes randomization in the environment and
the auction, and specifies the agents' probability of winning,
$\alloc(\val)$, as a function of her value. Given allocation
rule~$\alloc(\val)$, the risk-neutral expected payment is
$\pricern(\val) = \val \cdot \alloc(\val) - \int_0^{\val} \alloc(z) \:
\dd z$ (\autoref{thm:myerson}, \autoref{thmpart:payment}).
Given an allocation rule $\alloc(\val)$, a one-priced mechanism with
payment rule $\price(\val)$ would charge the agent
$\price(\val)/\alloc(\val)$ upon winning and zero otherwise (for an
expected payment of $\price(\val)$).  Define $\pvc(\val) = (\val -
\capa) \cdot \alloc(\val)$ which, intuitively, gives a lower bound on a
capacitated agent's willingness to trade-off decreased probability of
winning for a cheaper price.


\begin{theorem}
\label{thm:payid}
An allocation rule $\alloc$ and payment rule $\price$ are the BNE of a
one-priced mechanism if and only if (a) $\alloc$ is monotone
non-decreasing and (b) if $\price(\val) \geq \pvc(\val)$ for all $\val$
then $\price = \pricecap$ is defined as
\begin{align}
\pricecap(0) & = 0, 					\label{eq:payid-zero}\\
\pricecap(\val) & = \max 
	\left( \pvc(\val),\ 
		\sup_{\lowval < \val} \left\{ \pricecap(\lowval) + 
		(\pricern(\val) - \pricern(\lowval))\right\} 
	\right).								\label{eq:payid-max}
\end{align}
Moreover, if $\alloc$ is strictly increasing then $\price(\val) \geq
\pvc(\val)$ for all $\val$ and $\price = \pricecap$ is the unique
equilibrium payment rule.
\end{theorem}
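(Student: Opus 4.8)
The plan is to reduce everything to the interim view of a single agent. Fix the randomization of the mechanism and of the other agents; since the mechanism is one-priced, an agent of true value~$\val$ who reports~$z$ wins with probability $\alloc(z)$ and, exactly when she wins, pays the deterministic amount $\price(z)/\alloc(z)$. Hence her expected utility from reporting~$z$ is
$\alloc(z)\,\ucapa\!\big(\val - \price(z)/\alloc(z)\big) = \min\big(\val\alloc(z) - \price(z),\ \capa\alloc(z)\big)$.
Write $\utilreport(\val) := \val\alloc(\val) - \price(\val)$ for the truthtelling surplus; the standing individual-rationality assumption (losers pay $0$, winners pay at most their value, payments nonnegative) gives $\price(0)=0$ and $\utilreport(0)=0$, matching $\pricecap(0)=0$ in \eqref{eq:payid-zero}. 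In the case $\price(\val)\ge\pvc(\val)$ for all~$\val$ — equivalently $\utilreport(\val)\le\capa\alloc(\val)$, i.e.\@ the truthtelling agent always sits on the linear part of~$\ucapa$ — the Bayes-Nash conditions become: for all $\val$ and all $z$, $\utilreport(\val)\ge\min\big(\utilreport(z)+(\val-z)\alloc(z),\ \capa\alloc(z)\big)$, and the entire proof consists of unpacking this family of constraints.

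\emph{Necessity.} First I would establish monotonicity of $\alloc$ by the usual exchange argument: for $\val<\highval$, add the two truthtelling inequalities (neither agent wants to mimic the other), and split into cases according to whether the capacity binds for the mimicking agent; individual rationality ($\price(z)/\alloc(z)\le z$) controls the boundary terms, and since $s\mapsto\min(s-t,\capa)$ is nondecreasing and $1$-Lipschitz a short case analysis yields $\alloc(\val)\le\alloc(\highval)$. Next I extract the payment identity. The upward deviations ($z=\highval>\val$) give $\utilreport(\highval)-\utilreport(\val)\le(\highval-\val)\alloc(\highval)$; summing this over a refining partition of $[\lowval,\val]$ (a right-endpoint Riemann sum, convergent because $\alloc$ is monotone) yields $\utilreport(\val)\le\utilreport(\lowval)+\int_\lowval^\val\alloc(z)\,\dd z$ for every $\lowval<\val$. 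Together with $\utilreport(\val)\le\capa\alloc(\val)$ this is exactly $\price(\val)\ge\pricecap(\val)$. For the reverse inequality I use the downward deviations: on the region where the capacity does \emph{not} bind at truthtelling, $\utilreport(\val')<\capa\alloc(\val')$, so for $\highval'$ close to $\val'$ the $\min$ in the downward constraint is attained by its first term, giving $\utilreport(\highval')-\utilreport(\val')\ge(\highval'-\val')\alloc(\val')$; combined with the upward bound this forces $\utilreport$ to satisfy $\utilreport(\highval')-\utilreport(\val')=\int_{\val'}^{\highval'}\alloc$ throughout the non-capped region (and to have no downward jumps there). Letting $\lowval^{*}$ be the last point $\le\val$ at which the capacity binds (or $\lowval^{*}=0$, where $\utilreport=0$), this gives $\utilreport(\val)=\capa\alloc(\lowval^{*})+\int_{\lowval^{*}}^{\val}\alloc \ \ge\ \inf_{\lowval\le\val}\big\{\capa\alloc(\lowval)+\int_{\lowval}^{\val}\alloc\big\}$, which is $\price(\val)\le\pricecap(\val)$. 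One checks that the closed form $\pricecap(\val)=\sup_{\lowval\le\val}\{\pvc(\lowval)+\pricern(\val)-\pricern(\lowval)\}$ solves the recursion \eqref{eq:payid-zero}--\eqref{eq:payid-max} and is the same statement, using the risk-neutral payment identity $\pricern(\val)=\val\alloc(\val)-\int_0^\val\alloc$ of \autoref{thm:myerson}.

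\emph{Sufficiency and the ``moreover'' clause.} Given monotone $\alloc$ and $\price=\pricecap$, form the one-priced mechanism that serves according to $\alloc$ and charges a winner $\price(\val)/\alloc(\val)$. Using the closed form (equivalently $\utilreport(\val)=\inf_{\lowval\le\val}\{\capa\alloc(\lowval)+\int_{\lowval}^{\val}\alloc\}$, intersected with $\int_0^\val\alloc$) together with monotonicity, I verify directly that $\utilreport(\val)\ge\min\big(\utilreport(z)+(\val-z)\alloc(z),\ \capa\alloc(z)\big)$ for all $\val,z$: for $z>\val$ the $\min$ is the first term and this is the standard Myerson comparison applied to $\pricern$ plus the cap; for $z<\val$ one compares against whichever of the two terms is smaller, the capacity term being handled because $\utilreport(\val)\le\capa\alloc(\val')$ fails only above a capped point, where the $\inf$ form of $\utilreport$ supplies the bound. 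Finally, for the ``moreover'' statement: if $\alloc$ is strictly increasing but $\price(\val_0)<\pvc(\val_0)$ for some $\val_0$, then the truthtelling agent $\val_0$ is strictly inside the flat part of $\ucapa$, so reporting a slightly larger value — which strictly raises $\alloc$ while, by continuity of a strictly monotone function, keeping her strictly inside the flat part — strictly raises her utility, contradicting the equilibrium property; hence $\price\ge\pvc$ everywhere, and then the necessity argument pins $\price=\pricecap$ uniquely.

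The main obstacle is the reverse inequality $\price\le\pricecap$ in the necessity direction. The one-priced structure supplies only a one-sided useful family of incentive constraints — the ``grab more allocation while remaining capped'' deviations play the role that the local utility lower bound plays in the risk-neutral case — so one must argue carefully that on the regions where the capacity is slack the downward-deviation constraints force $\utilreport$ up to the interpolation from the last capped point, while correctly accounting for discontinuities of $\alloc$ (at which the $\min$ with $\capa\alloc(\val)$, and hence the $\max$ with $\pvc(\val)$ in \eqref{eq:payid-max}, genuinely matters) and for the normalization $\utilreport(0)=0$. The reformulation, the monotonicity exchange argument, the sufficiency verification, and the uniqueness argument are all routine once this core is in place.
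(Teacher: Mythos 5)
Your reformulation in terms of the truthtelling surplus $\utilreport(\val)=\val\alloc(\val)-\price(\val)$ and the single constraint $\utilreport(\val)\ge\min\bigl(\utilreport(z)+(\val-z)\alloc(z),\ \capa\alloc(z)\bigr)$ is correct, and your two-sided derivation of the payment identity from it (upward deviations giving $\price(\val)\ge\pricecap(\val)$ via an upper Riemann sum, nearby downward deviations pinning $\utilreport'=\alloc$ on the uncapped region) is in substance the paper's Lemma on the payment identity, organized somewhat more cleanly. However, two of your steps do not go through as described. First, monotonicity: the summed ``exchange'' inequality $\ucapareport(\val,\val)+\ucapareport(\highval,\highval)\ge\ucapareport(\val,\highval)+\ucapareport(\highval,\val)$ is genuinely too weak here --- the paper explicitly warns that the add-and-cancel argument fails for capacitated agents. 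Concretely, take $\capa=0.1$, $\alloc(1)=1$, $\price(1)=0$, $\alloc(2)=1/2$, and any $\price(2)\le 0.45$: the summed inequality holds with equality, yet type $2$ strictly prefers to report $1$ (she is capacitated at report $1$ and gets $\capa\cdot 1>\capa\cdot\tfrac12\ge$ her truthful utility). So you must work with the individual constraints and a case analysis on who is capacitated upon which report, as in the paper's proof of necessity of monotonicity; a ``short case analysis'' hung off the summed inequality cannot detect this violation.

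Second, the ``moreover'' clause: your deviation goes the wrong way. If $\price(\val_0)<\pvc(\val_0)$ and type $\val_0$ reports $\val_0+\eps$, you have no control over $\price(\val_0+\eps)$ --- it may legally jump to $(\val_0+\eps)\alloc(\val_0+\eps)$, in which case the deviator's wealth upon winning is negative and the deviation is strictly worse; also ``continuity of a strictly monotone function'' is false in general and in any case does not constrain the payment rule at the higher report. The correct argument is the paper's: a slightly \emph{lower} type $\val_0-\eps$ mimics $\val_0$; for small $\eps$ she is still strictly capacitated at that report (her wealth there exceeds $\capa$), so she obtains the maximal winning utility $\capa$ at a strictly higher winning probability than truthtelling affords, contradicting equilibrium. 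One further small point: your closed form $\sup_{\lowval\le\val}\{\pvc(\lowval)+\pricern(\val)-\pricern(\lowval)\}$ drops the anchor $\pricecap(0)=0$, which contributes the term $\pricern(\val)$ and matters whenever $\alloc(0)>0$; your $\utilreport$-version (intersecting with $\int_0^\val\alloc$) has it, so keep the two consistent.
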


The payment rule should be thought of in terms of two ``regimes'':
when $\pricecap = \pvc$, and when $\pricecap > \pvc$, corresponding 
to the first and second terms in the $\max$ argument of 
\eqref{eq:payid-max} respectively. In the latter regime, 
\eqref{eq:payid-max} necessitates that 
$\dpricecap(\val) = \dpricern(\val)$; for nearby such points $\val$
and $\val+\epsilon$, the $\lowval$ involved in the supremum will 
be the same, and thus $\pricecap(\val+\epsilon) - \pricecap(\val) = \pricern(\val+\epsilon) - \pricern(\val)$. 

The proof is relegated to \autoref{sec:payid-app}. The main intuition for this characterization is that risk-neutral
payments are ``memoryless'' in the following sense.  Suppose we fix
$\pricern(\val)$ for a~$\val$ and ignore the incentive of an agent with value
$\highval > \val$ to prefer reporting $\lowval < \val$, then the risk-neutral
payment for all $\highval > \val$ is $\pricern(\highval) = \price(\val) +
\int_\val^{\highval} (\alloc(\highval) - \alloc(z)) \, \dd z$.  This memorylessness is
simply the manifestation of the fact that the risk-neutral payment
identity imposes local constraints on the derivatives of the payment, i.e., $\dpricern(\val) = \val \cdot
\dalloc(\val)$.



There is a simple algorithm for constructing the risk-averse payment
rule $\pricecap$ from the risk-neutral payment rule $\pricern$ (for
the same allocation rule $\alloc$).
\begin{enumerate}
\item[0.] For $\val < \capa$,
$\pricecap(\val) = \pricern(\val)$. 
\item \label{step:p^C=>p^VC} The $\pricecap(\val) =
  \pricern(\val)$ identity continues until the value $\val'$ where
  $\pricecap(\val') = \pvc(\val')$, and 
  $\pricecap(\val)$ switches to follow $\pvc(\val)$.
\item When $\val$
increases to the value $\val''$ where $\dpricern(\val'') =
\dpvc(\val'')$ then $\pricecap(\val)$ switches to follow
$\pricern(\val)$ shifted up by the difference $\pvc(\val'') - \pricern(\val'')$ (i.e., its derivative $\dpricecap(\val)$ follows
$\dpricern(\val)$).  
\item Repeat this process from Step~\ref{step:p^C=>p^VC}.
\end{enumerate}

\begin{lemma}
\label{remark:payid}
The one-priced BIC allocation rule $\alloc$ and payment rule
$\pricecap$ satisfy the following
\begin{enumerate}
\item 
\label{thmpart:lbpc}
For all $\val$, $\pricecap(\val) \geq \max(\pricern(\val),\pvc(\val))$.
\item 
\label{thmpart:px-mon}
Both $\pricecap(\val)$ and $\pricecap(\val) / \alloc(\val)$ are monotone non-decreasing.
\end{enumerate}
\end{lemma}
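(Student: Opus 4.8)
The plan is to argue everything directly from the defining recursion \eqref{eq:payid-zero}--\eqref{eq:payid-max}, using only two elementary consequences of the monotonicity of $\alloc$ (which holds by \autoref{thm:payid}) for the risk-neutral payment $\pricern(\val) = \val\,\alloc(\val) - \int_0^\val \alloc(z)\,\dd z$: (i) $0 \le \pricern(\val) \le \val\,\alloc(\val)$, and (ii) $\pricern(\val) - \pricern(\lowval) \ge \lowval\,(\alloc(\val) - \alloc(\lowval)) \ge 0$ whenever $\lowval \le \val$ (both because $\int_\lowval^\val \alloc(z)\,\dd z \le (\val-\lowval)\,\alloc(\val)$ and $\int_0^\val \alloc(z)\,\dd z \le \val\,\alloc(\val)$). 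With these, \autoref{thmpart:lbpc} is almost immediate: $\pricecap(\val) \ge \pvc(\val)$ is one argument of the outer $\max$ in \eqref{eq:payid-max}; and $\pricecap(\val) \ge \pricern(\val)$ follows because, restricting the supremum in \eqref{eq:payid-max} to a single index $\lowval$, we get $\pricecap(\val) - \pricern(\val) \ge \pricecap(\lowval) - \pricern(\lowval)$ for every $\lowval < \val$, so $\val \mapsto \pricecap(\val) - \pricern(\val)$ is non-decreasing, and it equals $\pricecap(0) - \pricern(0) = 0$ at $\val = 0$ by \eqref{eq:payid-zero}.

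For \autoref{thmpart:px-mon}, monotonicity of $\pricecap$ is immediate: for $\lowval < \val$, keeping the $\lowval$ term in \eqref{eq:payid-max} gives $\pricecap(\val) \ge \pricecap(\lowval) + \pricern(\val) - \pricern(\lowval) \ge \pricecap(\lowval)$ by (ii). The substantive part is monotonicity of the per-win price $\pricecap/\alloc$, and the key auxiliary fact is the individual-rationality bound $\pricecap(\val) \le \val\,\alloc(\val)$. I would prove this by induction on $\val$ from \eqref{eq:payid-max}: the $\pvc$ term is $\le \val\,\alloc(\val)$ since $\capa \ge 0$; and for each $\lowval < \val$, the inductive hypothesis $\pricecap(\lowval) \le \lowval\,\alloc(\lowval)$ together with $\lowval\,\alloc(\lowval) - \pricern(\lowval) = \int_0^\lowval \alloc(z)\,\dd z$ gives $\pricecap(\lowval) + \pricern(\val) - \pricern(\lowval) \le \pricern(\val) + \int_0^\lowval \alloc(z)\,\dd z \le \pricern(\val) + \int_0^\val \alloc(z)\,\dd z = \val\,\alloc(\val)$; hence both arguments of the $\max$ are $\le \val\,\alloc(\val)$. (Equivalently, one can first unwind \eqref{eq:payid-max} into the closed form $\pricecap(\val) = \pricern(\val) + \max\big(0,\ \sup_{z \le \val}(\pvc(z) - \pricern(z))\big)$ and use $\pvc(z) - \pricern(z) = \int_0^z \alloc(y)\,\dd y - \capa\,\alloc(z) \le \int_0^z \alloc(y)\,\dd y$.) Granting the bound, fix $\lowval < \val$ with $\alloc(\lowval) > 0$, so $\alloc(\val) \ge \alloc(\lowval) > 0$; dividing $\pricecap(\val) \ge \pricecap(\lowval) + \pricern(\val) - \pricern(\lowval)$ by $\alloc(\val)$ reduces the claimed $\pricecap(\val)/\alloc(\val) \ge \pricecap(\lowval)/\alloc(\lowval)$ to $\alloc(\lowval)\,(\pricern(\val) - \pricern(\lowval)) \ge \pricecap(\lowval)\,(\alloc(\val) - \alloc(\lowval))$, which holds since by (ii) the left side is $\ge \lowval\,\alloc(\lowval)\,(\alloc(\val) - \alloc(\lowval))$, while by individual rationality $\pricecap(\lowval) \le \lowval\,\alloc(\lowval)$ and $\alloc(\val) - \alloc(\lowval) \ge 0$, so the right side is $\le \lowval\,\alloc(\lowval)\,(\alloc(\val) - \alloc(\lowval))$.

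The main obstacle is the individual-rationality bound $\pricecap(\val) \le \val\,\alloc(\val)$: the monotonicity of $\pricecap/\alloc$ collapses to one line once it is available, but establishing it requires unwinding the self-referential definition \eqref{eq:payid-max} (an induction over the value continuum, justified by the explicit construction of $\pricecap$ in Steps~0--3, or the closed-form rewrite above). One minor point to dispatch: on the set $\{\val : \alloc(\val) = 0\}$ the ratio $\pricecap/\alloc$ is the indeterminate $0/0$, but there $\pricecap(\val) = 0$ (from the IR bound and $\pricecap \ge \pricern \ge 0$, established in \autoref{thmpart:lbpc} and (i)), and since $\alloc$ is non-decreasing $\{\alloc > 0\}$ is a subinterval on which the argument above delivers the asserted monotonicity of $\pricecap/\alloc$.
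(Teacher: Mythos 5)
Your proof is correct, but it takes a genuinely different route from the paper's. The paper obtains \autoref{thmpart:px-mon} as a by-product of the incentive analysis: inside the proof of \autoref{lem:mono+payid=>bne}, monotonicity of $\pricecap/\alloc$ is argued from equilibrium --- if the per-win price ever decreased, an agent whose truthful wealth on winning is strictly below $\capa$ (hence locally risk-neutral) would strictly prefer to over-report, contradicting the no-over-reporting step, while the regime $\pricecap = \pvc$ is handled separately since there $\pricecap/\alloc = \val - \capa$ is trivially increasing. You instead work purely algebraically from the recursion \eqref{eq:payid-zero}--\eqref{eq:payid-max}: the new ingredient is the individual-rationality bound $\pricecap(\val) \leq \val\,\alloc(\val)$, which together with $\pricern(\val) - \pricern(\lowval) \geq \lowval\,(\alloc(\val) - \alloc(\lowval))$ turns the ratio monotonicity into a one-line cross-multiplication. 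Your route is self-contained (it never invokes the BIC characterization, so there is no appearance of circularity with \autoref{lem:mono+payid=>bne}), and it also makes the monotonicity of $\pricecap$ itself explicit, which the paper leaves implicit; the paper's route costs nothing extra given that the sufficiency proof must be carried out anyway. One point to tighten: ``induction on $\val$'' over a continuum is not a proof, so you should lean entirely on your closed form $\pricecap(\val) = \pricern(\val) + \max\bigl(0,\ \sup_{z \leq \val}(\pvc(z) - \pricern(z))\bigr)$, and add a line verifying that this is the (minimal) solution of \eqref{eq:payid-max}, i.e., the function produced by the algorithmic construction in Steps~0--3 --- the recursion by itself does not literally single out a unique function, and the IR bound can fail for larger solutions. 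Your handling of the set $\{\val : \alloc(\val) = 0\}$, where the ratio is $0/0$, is the right fix and is a detail the paper glosses over.
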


The proof of \autoref{thmpart:px-mon} is contained in the proof of \autoref{lem:mono+payid=>bne} in \autoref{sec:payid-app}, and
\autoref{thmpart:lbpc} follows directly from equations \eqref{eq:payid-zero} and \eqref{eq:payid-max}.



\subsection{Approximate Optimality of First-price Auction}
\label{sec:firstprice}
We show herein that for agents with a common capacity and values drawn
i.i.d.\@ from a continuous, regular distribution $\dist$ with strictly
positive density the first-price auction is approximately optimal.

It is easy to solve for a symmetric equilibrium in the first-price
auction with identical agents.  First, guess that in BNE the agent
with the highest value wins.  When the agents are i.i.d.\@ draws from
distribution $\dist$, the implied allocation rule is $\alloc(\val) =
\dist^{n-1}(\val)$.  \autoref{thm:payid} then gives the necessary
equilibrium payment rule $\pricecap(\val)$ from which the bid function
$\bcap(\val) = \pricecap(\val) / \alloc(\val)$ can be calculated.  We
verify that the initial guess is correct as \autoref{remark:payid}
implies that the bid function is symmetric and monotone.  There is no
other symmetric equilibrium.\footnote{Any other symmetric equilibrum
  must have an allocation rule that is increasing but not always
  strictly so.  For this to occur the bid function must not be
  strictly increasing implying a point mass in the distribution of
  bids.  Of course, a point mass in a symmetric equilibrium bid
  function implies that a tie is not a measure zero event.  Any agent
  has a best response to such an equilibrium of bidding just higher
  than this pointmass so at essentially the same payment, she always
  ``wins'' the tie.}

\begin{proposition}
\label{prop:fpa-unique}
The first-price auction for identical (capacity and value
distribution) agents has a unique symmetric BNE wherein the highest
valued agent wins.
\end{proposition}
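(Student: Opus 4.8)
The plan is to verify that the candidate strategy profile described just before the statement—where each agent bids $\bcap(\val) = \pricecap(\val)/\alloc(\val)$ with $\alloc(\val) = \dist^{n-1}(\val)$ and $\pricecap$ given by \autoref{thm:payid}—is indeed a symmetric BNE in which the highest-valued agent wins, and then to argue uniqueness. The existence half is essentially the calculation sketched in the text: having guessed that the highest value wins, the interim allocation rule of any agent is forced to be $\alloc(\val) = \dist^{n-1}(\val)$; since $\dist$ is continuous with strictly positive density, $\alloc$ is strictly increasing, so by \autoref{thm:payid} the equilibrium payment rule of a one-priced mechanism with this allocation rule is uniquely $\pricecap$, and the induced bid function $\bcap(\val) = \pricecap(\val)/\alloc(\val)$ is well-defined and, by \autoref{thmpart:px-mon} of \autoref{remark:payid}, monotone non-decreasing. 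Because every agent uses the same monotone $\bcap$, the highest value does win (ties occur with probability zero as $\dist$ is continuous and $\bcap$ is monotone), which closes the loop: the guessed allocation rule is self-consistent, so we have an equilibrium of exactly the claimed form.

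For uniqueness, I would suppose $\strat$ is any symmetric BNE of the first-price auction and show its allocation rule must be $\dist^{n-1}$. First, a symmetric equilibrium bid function $\strat$ must be monotone non-decreasing: this is the standard argument that an agent with a higher value has a higher marginal willingness to pay for probability of service, and in fact the one-priced incentive constraints of \autoref{thm:payid} already force monotonicity of the interim allocation rule (part (a)). The only way the induced allocation rule can differ from $\dist^{n-1}$ is if $\strat$ is monotone but not strictly increasing on some interval, which forces a point mass in the distribution of equilibrium bids. As the footnote in the text observes, a point mass in a symmetric bid function means a tie among top bidders occurs with positive probability, and then an agent bidding just above the mass strictly improves her probability of winning at essentially the same payment (the first-price payment is continuous in the bid and the allocation jumps by the mass of competitors at that bid), contradicting equilibrium. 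Hence $\strat$ is strictly increasing, the allocation rule is $\dist^{n-1}$, and by the uniqueness clause of \autoref{thm:payid} the payment rule and therefore $\strat = \bcap$ is pinned down.

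The main obstacle is the uniqueness argument, and specifically making the ``bid just above the point mass'' deviation rigorous for a risk-averse (capacitated) agent: one must check that the jump in winning probability from breaking the tie is bounded below (it is the measure of competitors whose bids equal the mass point, which is positive by symmetry) while the extra payment and the loss from the concavity/truncation of $\ucapa$ go to zero as the overbid $\to 0$, so the expected-utility gain is strictly positive for a small enough overbid. This is where the capacitated utility function has to be handled with slightly more care than in the risk-neutral case, but since $\ucapa(z) = \min(z,\capa)$ is Lipschitz, the continuity-in-bid estimates all go through. Everything else—existence via self-consistency of the guessed allocation rule, monotonicity of $\bcap$, absence of ties—follows directly from \autoref{thm:payid} and \autoref{remark:payid}, which we are free to invoke.
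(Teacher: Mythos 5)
Your proposal is correct and follows essentially the same route as the paper: existence by the guess-and-verify argument (guess the highest value wins, derive $\alloc = \dist^{n-1}$, invoke \autoref{thm:payid} and \autoref{remark:payid} for the payment and bid function), and uniqueness via the point-mass/tie-breaking deviation argument that the paper relegates to a footnote. Your added remark that the Lipschitz property of $\ucapa$ makes the overbidding deviation rigorous for capacitated agents is a sensible elaboration of the same idea, not a different approach.
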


The expected revenue at this equilibrium is $n
\expect[\val]{\pricecap(\val)}$.  \autoref{remark:payid} implies that
$\pricecap$ is at least $\pricern$ and $\pvc$.

\begin{corollary} 
\label{cor:fpa-rev}
The expected revenue of the first-price auction for identical
(capacity and value distribution) agents is at least that of the
capacitated second-price auction and at least that of the second-price
auction.
\end{corollary}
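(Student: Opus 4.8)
The plan is to exploit the decomposition of the first-price revenue as $n\,\Ex[\val]{\pricecap(\val)}$ together with the pointwise lower bounds on $\pricecap$ supplied by \autoref{remark:payid}. First I would recall that by \autoref{prop:fpa-unique} the unique symmetric BNE of the first-price auction has the highest-valued agent winning, so the interim allocation rule faced by each agent is $\alloc(\val) = \dist^{n-1}(\val)$, and the equilibrium payment rule is the $\pricecap$ of \autoref{thm:payid} built from this $\alloc$ and the corresponding risk-neutral payment rule $\pricern$. By revenue equivalence for risk-neutral agents (\autoref{thmpart:payment} of \autoref{thm:myerson}), $n\,\Ex[\val]{\pricern(\val)}$ is exactly the revenue of the second-price auction, since the second-price auction induces the same allocation rule $\dist^{n-1}(\val)$; and the capacitated second-price auction (CSP) induces the same allocation rule as well (the highest bidder still wins), charging the winner $\max(\val-\capa,\,\text{second-highest value})$ when her value is $\val$.

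The key step is then a pointwise comparison of expected interim payments. By \autoref{thmpart:lbpc} of \autoref{remark:payid}, $\pricecap(\val) \geq \pricern(\val)$ for all $\val$; taking expectations over $\val\sim\dist$ and multiplying by $n$ gives that the first-price revenue is at least the second-price revenue. For the comparison with the CSP, I would argue that the CSP's interim payment rule, call it $\pricecap_{\mathrm{CSP}}(\val)$, is also a valid BNE payment rule for a one-priced mechanism with the same allocation rule $\alloc$ (the CSP charges a deterministic price $\max(\val-\capa,\,\text{2nd highest})$ only on the event of winning, and zero on losing — but note the price depends on the realized second-highest value, so strictly speaking one must first take the interim expectation). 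The cleaner route: observe that in the CSP the winner of value $\val$ pays at least $(\val-\capa)^+$ whenever she wins, so her interim payment is at least $(\val-\capa)^+\alloc(\val) \ge \pvc(\val)$, and it is also at most $\pricern(\val)$ only in the risk-neutral world — that's the wrong direction. Instead I would use that $\pricecap$ is, by \eqref{eq:payid-max}, the pointwise-largest payment rule consistent with one-priced BNE incentive constraints that dominates $\pvc$; since the CSP's interim payment rule is itself one-priced-BNE-consistent with this allocation and dominates $\pvc$ (which requires checking the incentive inequality, equivalently monotonicity of price and price-per-unit), maximality forces $\pricecap \geq \pricecap_{\mathrm{CSP}}$ pointwise. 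Integrating and multiplying by $n$ yields the claim.

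The main obstacle is the CSP comparison: one must verify that the interim payment rule of the capacitated second-price auction genuinely arises as a BNE of a one-priced mechanism with allocation rule $\alloc(\val)=\dist^{n-1}(\val)$, so that the maximality characterization of $\pricecap$ in \autoref{thm:payid} applies. This amounts to checking that the induced per-unit price $\pricecap_{\mathrm{CSP}}(\val)/\alloc(\val)$ is monotone and that $\pricecap_{\mathrm{CSP}}(\val)\ge\pvc(\val)$ everywhere — the latter is immediate, and the former should follow from the fact that truthtelling is a dominant strategy in the CSP for capacitated agents (stated in the Preliminaries), hence certainly a BNE, and by \autoref{thm:payid} any one-priced BNE payment rule dominating $\pvc$ equals the canonical $\pricecap$, which is therefore an upper bound. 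Alternatively, and perhaps most simply, one invokes \autoref{remark:payid} directly: $\pricecap \ge \max(\pricern,\pvc)$ pointwise, and since the CSP winner's payment is the pointwise maximum of the second-price payment and value-minus-capacity, its interim expectation is bounded above by $\max(\pricern(\val),\,(\val-\capa)^+\alloc(\val)) = \max(\pricern(\val),\pvc(\val)) \le \pricecap(\val)$ — using that $\Ex[\text{2nd highest} \mid \text{win}] \cdot \alloc(\val)$ relates to $\pricern$; I would need to be careful here that the expectation of a max is at most the max only when... actually it is not, so the maximality argument via \autoref{thm:payid} is the safe one and is the route I would commit to in the final writeup.
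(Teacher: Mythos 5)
The first half of your argument (FPA revenue at least SPA revenue) is exactly the paper's: the symmetric BNE has allocation rule $\alloc(\val)=\dist^{n-1}(\val)$, \autoref{remark:payid} gives $\pricecap\geq\pricern$ pointwise, and $n\Ex[\val]{\pricern(\val)}$ is the second-price revenue. That part is fine.

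The CSP half has a genuine gap, which you half-diagnosed yourself before committing to the wrong repair. The ``maximality'' route fails for two reasons. First, \autoref{thm:payid} is a \emph{uniqueness} statement among \emph{one-priced} BNE payment rules for a given allocation; it is not a maximality principle over all BIC payment rules with that allocation, and the CSP is not one-priced --- the winner's payment $\max(\val-\capa,\text{2nd highest})$ depends on the realized second-highest value, so its interim payment rule is simply outside the scope of that characterization. Second, the pointwise inequality you are trying to force, $\pricecap(\val)\geq \price^{\mathrm{CSP}}(\val)$, is false. Take $n=2$, $\dist$ uniform on $[0,1]$, $\capa=1/2$: then $\pvc(\val)=(\val-\tfrac12)\val\leq \val^2/2=\pricern(\val)$ everywhere, so $\pricecap=\pricern$ and $\pricecap(1)=1/2$, whereas the CSP charges a value-$1$ winner $\Ex{\max(1/2,\val_2)}=5/8$. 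Worse, in this instance the total CSP revenue is $\Ex{\max(\val_{(1)}-\tfrac12,\val_{(2)})}=3/8$ while the FPA revenue equals the SPA revenue $1/3$, so the CSP comparison in the corollary cannot be rescued by any argument. What the paper's one-line justification (and the downstream proof of \autoref{thm:fpa-approx}) actually uses is only the weaker consequence of \autoref{remark:payid}: $\pricecap(\val)\geq\max\bigl(\pricern(\val),\,\pvc(\val),\,0\bigr)$, hence the FPA revenue is at least $\max\bigl(\text{SPA revenue},\ \Ex{(\val_{(1)}-\capa)_+}\bigr)$, the second term being the ``value-minus-capacity'' part of the \autoref{thm:pricebound} bound. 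If you restate the CSP clause in that form, your expectation argument goes through verbatim; as literally stated, the clause does not.
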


Our main theorem then follows by combining \autoref{cor:fpa-rev} with
the revenue bound in \autoref{thm:pricebound} and \autoref{thm:bk} by
\citet{BK96}.

\begin{theorem} 
\label{thm:fpa-approx}
For $n\geq 2$ agents with common capacity and values drawn i.i.d.\@ from a
regular distribution, the revenue in the first price auction (FPA) in
the symmetric Bayes-Nash equilibrium is a 5-approximation to the
optimal revenue.
\end{theorem}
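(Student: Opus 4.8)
The plan is to chain together the three main ingredients the paper has assembled: the payment upper bound of \autoref{thm:pricebound}, the revenue lower bounds for the first-price auction in \autoref{cor:fpa-rev}, and the Bulow--Klemperer theorem (\autoref{thm:bk}). Let $\mech^*$ denote the optimal auction for $n$ capacitated agents with common capacity $\capa$ and common regular distribution $\dist$. By \autoref{thm:optimal-two-price} we may assume $\mech^*$ is two-priced, so its revenue is the sum over agents of interim expected payments $\Ex[\val]{\price_i(\val)}$, each of which is bounded by \autoref{thm:pricebound} in terms of three expectations involving $\virtplus(\val)\qq_i(\val)$, $\virtplus(\val)\qc_i(\val)$, and $\pricevc\,\qc_i(\val)$. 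As explained in the proof of \autoref{cor:revbound}, summing across agents, the first two terms are each at most the revenue $\revrn$ of the risk-neutral optimal auction for $n$ bidders (since the $\qq_i$'s and the $\qc_i$'s are each feasible allocation profiles, and $\virtplus \le \virt^+$ optimized over feasible allocations is exactly Myerson's optimal revenue), and the third term is at most the revenue $\revvc$ of the capacitated second-price auction (the mechanism optimizing $\pricevc$ times service probability over feasible allocations, which for identical capacities is CSP). Hence $\REV(\mech^*) \le 2\,\revrn + \revvc$.

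Next I would bound each of these two quantities by the first-price auction revenue $\revfp$. For $\revvc$: \autoref{cor:fpa-rev} states directly that $\revfp \ge \revvc$ (the CSP revenue). For $\revrn$: here I invoke Bulow--Klemperer. By \autoref{thm:bk}, the revenue of the second-price auction with $n+1$ i.i.d.\ regular bidders is at least $\revrn$, the optimal $n$-bidder risk-neutral revenue. But we only have $n$ bidders. The resolution is the standard one: the second-price auction revenue is nondecreasing in the number of bidders, so SPA revenue with $n \ge 2$ bidders is at least half the SPA revenue with $2n \ge n+1$ bidders, which is at least half of $\revrn$ — wait, more carefully, the cleanest route is: with $n$ bidders, running SPA on the $n$ bidders yields revenue that, by a duplication/averaging argument, is at least $\tfrac{1}{2}$ of the SPA revenue on $2n$ bidders, and since $2n \ge n+1$ for $n \ge 1$, the latter is at least $\revrn$ by \autoref{thm:bk} (SPA revenue is monotone in the number of bidders because adding a bidder only raises the second-highest value). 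Thus $\revrn \le 2\,\mathrm{SPA}_n \le 2\,\revfp$, the last step by \autoref{cor:fpa-rev} which says $\revfp$ is at least the second-price auction revenue on the $n$ bidders.

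Combining: $\REV(\mech^*) \le 2\,\revrn + \revvc \le 2\cdot 2\,\revfp + \revfp = 5\,\revfp$, which is exactly the claimed $5$-approximation. By \autoref{prop:fpa-unique}, the first-price auction has a unique symmetric BNE in which the highest-valued agent wins, so $\revfp$ is well-defined as the revenue at this equilibrium, and the argument is complete.

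The main obstacle — and the only place requiring genuine care rather than bookkeeping — is the factor-of-two loss in the Bulow--Klemperer step, specifically arguing cleanly that $n$-bidder second-price revenue is at least half the $(n+1)$-bidder optimal risk-neutral revenue for all $n \ge 2$. One must be careful that the "duplicate and split" argument applies: with $2n$ i.i.d.\ bidders, the second-highest among all $2n$ is, with probability at least $1/2$... actually the precise inequality is that the SPA revenue on $k$ bidders is nondecreasing in $k$, so $\mathrm{SPA}_{2n} \ge \mathrm{SPA}_{n+1} \ge \revrn$, and separately $2\,\mathrm{SPA}_n \ge \mathrm{SPA}_{2n}$ by splitting the $2n$ bidders into two groups of $n$ and noting the overall second price is at most the larger of the two within-group second prices (hence at most their sum) — this needs the observation that the global top-two values, whichever group they fall in, make the revenue of at least one group equal to the global second price, so the sum of the two group revenues is at least the global revenue. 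Everything else is a direct citation of results already proven in the excerpt.
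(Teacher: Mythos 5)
Your overall architecture is exactly the paper's: decompose the optimal capacitated revenue via \autoref{thm:pricebound} (as in \autoref{cor:revbound}) into two risk-neutral virtual-surplus terms plus a value-minus-capacity term, bound the latter by the capacitated second-price revenue and hence by the first-price revenue via \autoref{cor:fpa-rev}, and bound the risk-neutral optimal revenue by twice the $n$-bidder second-price revenue via Bulow--Klemperer. However, the one step you yourself flag as requiring genuine care is where your argument breaks. The inequality $2\,\mathrm{SPA}_n \geq \mathrm{SPA}_{2n}$ that you derive by splitting $2n$ bidders into two groups of $n$ is not established by your argument, and the pointwise claim behind it is false: with values $10,9,1,1$ and groups $\{10,1\}$ and $\{9,1\}$, the within-group second prices are $1$ and $1$, summing to $2$, while the global second price is $9$. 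Your supporting observation that ``the global top-two values, whichever group they fall in, make the revenue of at least one group equal to the global second price'' holds only when both land in the same group. Nor can the in-expectation version be waved through: for the equal-revenue distribution one computes $\mathrm{SPA}_k = k$, so $2\,\mathrm{SPA}_n = \mathrm{SPA}_{2n}$ exactly, i.e., the inequality is tight at the extreme of the regular class and proving it in general is essentially as hard as the fact you are trying to establish.

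The fact you actually need, $\revrn \leq 2\,\mathrm{SPA}_n$ for $n\geq 2$, is true and is what the paper invokes as ``an immediate consequence'' of \autoref{thm:bk}; the standard derivation avoids duplication entirely. By \autoref{thm:bk}, $\mathrm{SPA}_n \geq \mathrm{OPT}_{n-1}$, the optimal risk-neutral revenue for $n-1$ bidders. Moreover $\mathrm{OPT}_{n-1}\geq \frac{n-1}{n}\,\mathrm{OPT}_n$: run the symmetric optimal $n$-bidder auction with the $n$-th bidder simulated internally by the mechanism (a fresh draw from $\dist$), discarding her allocation and payment; this is a feasible BIC, IR mechanism for $n-1$ bidders and, by symmetry of the optimal auction, retains a $\frac{n-1}{n}$ fraction of its revenue. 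For $n\geq 2$ this yields $\mathrm{SPA}_n \geq \tfrac{1}{2}\,\mathrm{OPT}_n = \tfrac{1}{2}\revrn$. With this substitution the remainder of your proof goes through verbatim and gives $\REV(\mech^*) \leq 4\,\revfp + \revfp = 5\,\revfp$ as claimed.
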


\begin{proof}
An immediate consequence of \autoref{thm:bk} is that for~$n\geq 2$
risk-neutral, regular, i.i.d.\@ bidders, the second-price auction
extracts a revenue that is at least half the optimal revenue; hence,
by \autoref{cor:revbound}, the optimal revenue for capacitated bidders
by any BIC mechanism is at most four times the second-price revenue
plus the capacitated second-price revenue.  Since the first-price
auction revenue in BNE for capacitated agents is at least the
capacitated second-price revenue and the second-price revenue, the
first-price revenue is a 5-approximation to the optimal
revenue.\footnote{In fact, the \citet{BK96} result shows that the
  second-price auction is asymptotically optimal so for large $n$ this
  bound can be asymptotically improved to three.}
\end{proof}

\subsection{Approximate Optimality of One-Price Auctions}
\label{sec:one-v-two}

We now consider the case of asymmetric value distributions and
capacities.  In such settings the highest-bid-wins first-price auction
does not have a symmetric equilbria and arguing revenue bounds for it
is difficult.  Nonetheless, we can give asymmetric one-priced
revelation mechanisms that are BIC and approximately optimal.
With respect to \autoref{ex:v<C} in \autoref{sec:pricebound} which
shows that the option to charge two possible prices from a given type
may be necessary for optimal revenue extraction, this result shows
that charging two prices over charging one price does not confer a
significant advantage.

\begin{theorem}
\label{thm:dc-3apx}
For $n$ (non-identical) agents, their capacities
$\capa_1,\ldots,\capa_n$, and regular value distributions
$\dist_1,\ldots,\dist_n$, there is a one-priced BIC mechanism whose revenue is at least one third of the optimal
(two-priced) revenue.
\end{theorem}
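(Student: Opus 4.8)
The plan is to build a one-priced direct-revelation mechanism from the three natural mechanisms that arise in the payment decomposition of \autoref{thm:pricebound}, and then argue that the best of the three (or a convex combination / randomization over them) loses at most a factor of $3$. Recall that \autoref{thm:pricebound} bounds each agent's expected two-priced payment by three terms: $\Ex[\val]{\virtplus(\val)\qq(\val)}$, $\Ex[\val]{\virtplus(\val)\qc(\val)}$, and $\Ex[\val]{\pricevc\qc(\val)}$. Summing over agents, the optimal two-priced revenue is at most the sum over the three corresponding ``aggregate'' optimization problems, each of which asks to maximize an expected (monotone) value-function times a feasible service probability. So it suffices to exhibit, for each of the three terms, a one-priced BIC mechanism whose revenue is at least the value of that aggregate optimization problem; then randomizing among the three with equal probability (or, deterministically, running the one with the largest of the three bounds) yields a one-priced BIC mechanism with revenue at least one third of the optimal two-priced revenue.

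The first step is to identify the three mechanisms. For the first term, the relevant quantity is (risk-neutral) virtual surplus with virtual value $\virtplus$: the risk-neutral revenue-optimal auction (which, by regularity and \autoref{thm:myerson}, maximizes expected virtual value of the served agent over all feasible allocations, and is itself one-priced --- it charges deterministic critical-value prices) extracts exactly $\sum_i \Ex[\val]{\virtplus_i(\val)\alloc_i(\val)}$ for its own monotone allocation, which upper-bounds $\sum_i \Ex[\val]{\virtplus_i(\val)\qq_i(\val)}$. For the third term, the relevant monotone function is $\pricevc_i = (\val-\capa_i)^+$: the auction that greedily serves the agent maximizing $(\val-\capa)^+$ and charges her the larger of her Clarke/critical-value payment and $\val-\capa_i$ is one-priced and BIC (it is exactly the ``VC'' mechanism $\mechvc$ already discussed after \autoref{cor:revbound}), and its revenue is at least $\sum_i \Ex[\val]{\pricevc_i\cdot \qc_i(\val)}$ for the same reason. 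The second term, $\sum_i\Ex[\val]{\virtplus_i(\val)\qc_i(\val)}$, is dominated by the first term's optimization problem (since $\qc_i\le\qq_i$ pointwise and $\virtplus\ge 0$), so it too is upper-bounded by the risk-neutral optimal revenue --- meaning we really only need two distinct mechanisms, but accounting them as three terms still gives the clean factor of $3$.

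The main obstacle is the \emph{one-priced BIC} requirement: I must be sure each candidate mechanism remains incentive-compatible \emph{for capacitated agents}, not just for risk-neutral ones. Here \autoref{thm:payid} and \autoref{remark:payid} do the work: given any monotone allocation rule $\alloc_i$, there is a one-priced payment rule $\pricecap_i$ (the one defined by \eqref{eq:payid-zero}--\eqref{eq:payid-max}) that is BIC for the capacitated agent, with $\pricecap_i(\val)\ge\max(\pricern_i(\val),\pvc_i(\val))$ and with $\pricecap_i(\val)/\alloc_i(\val)$ monotone. So for each of the two underlying allocation rules (the risk-neutral-optimal allocation, and the $\mechvc$ allocation), I replace the risk-neutral price by $\pricecap_i$; this only \emph{raises} revenue relative to $\pricern_i$ while preserving BIC and one-pricedness, and it still serves each agent with the same probability, so all the upper-bound comparisons above go through verbatim. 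One has to be slightly careful that the allocation rules in question are genuinely monotone (the risk-neutral optimal allocation is monotone by regularity; the greedy-$(\val-\capa)^+$ allocation is monotone in each agent's own value since $(\val-\capa)^+$ is), and that the mechanism is well-defined on non-identical capacities --- but these are routine. Finally I assemble: the optimal two-priced revenue $\le$ (risk-neutral-opt revenue) $+$ (risk-neutral-opt revenue) $+$ ($\mechvc$ revenue) $\le 3\max(\text{the two})$, and the larger of these two one-priced mechanisms is the desired $3$-approximation.
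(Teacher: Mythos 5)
Your proposal is correct and follows essentially the same route as the paper: bound the optimal two-priced revenue by the three terms of \autoref{thm:pricebound}, observe that two of them are dominated by the risk-neutral optimal revenue and the third by the greedy $(\val-\capa)^+$ mechanism's objective, then apply \autoref{thm:payid} and \autoref{remark:payid} to the two monotone allocation rules to obtain one-priced BIC payment rules with $\pricecap_i \geq \max(\pricern_i, \pvc_i)$, and take the better of the two mechanisms. The extra care you take about monotonicity of the allocation rules and about BIC holding for capacitated (not merely risk-neutral) agents is exactly the role \autoref{thm:payid} plays in the paper's argument.
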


\begin{proof}
Recall from \autoref{thm:pricebound} that either the risk-neutral optimal revenue or $\Ex[\val_1, \ldots, \val_n]{\max
\{(\val_i - \capa_i)_+\}}$ is at least one third of the optimal revenue.  We apply \autoref{thm:payid} to two monotone
allocation rules:
\begin{enumerate}
\item the interim allocation rule of the risk-neutral optimal auction, and


\item the interim allocation rule specified by: serve agent~$i$ that
  maximizes $\val_i - \capa_i$, if positive; otherwise, serve nobody.
\end{enumerate}

As both allocations are monotone, we apply \autoref{thm:payid} to obtain two single-priced BIC mechanisms. 
By \autoref{remark:payid}, the expected revenue of the first mechanism is at least the risk neutral optimal revenue, and the expected
revenue of the second mechanism is at least $\Ex[\val_1, \ldots, \val_n]{\max \{(\val_i - \capa_i)_+\}}$.  The theorem
immediately follows for the auction with the higher expected revenue.
\end{proof}

Although \autoref{thm:dc-3apx} is stated as an existential result, the
two one-priced mechanisms in the proof can be described analytically
using the algorithm following \autoref{thm:payid} for calculating the capacitated BIC payment rule.  The interim allocation rules are straightforward
(the first: $\alloc_i(\val_i) = \prod_{j \neq i}
\dist_j(\virt_j^{-1}(\virt_i(\vali)))$, and the second:
$\alloc_i(\val_i) = \prod_{j \neq i} \dist_j(\val_i - \capa_i +
\capa_j)$), and from these we can solve for $\price^{\capa_i}_i(\val)$.
\section{Conclusions}
\label{s:conc}

%
%
For the purpose of keeping the exposition simple, we have applied our
analysis only to single-item auctions.  Our techniques, however, as
they focus on analyzing and bounding revenue of a single agent for a
given allocation rule, generalize easily to structurally rich
environments.  Notice that the main theorems of
Sections~\ref{sec:optimal}, \ref{sec:pricebound}, and the first part
of \autoref{sec:payid} do not rely on any assumptions on the
feasibility constraint except for downward closure, i.e., that it is
feasible to change an allocation by withholding service to an agent who was
formerly being served.

%
%
For example, our prior-independent 5-approximation result generalizes
to symmetric feasibility constraints such as position auctions.  A
position auction environment is given by a decreasing sequence of
weights $\alpha_1,\ldots,\alpha_n$ and the first-price position
auction assigns the agents to these positions greedily by bid.  With
probability $\alpha_i$ the agent in position $i$ receives an item and
is charged her bid; otherwise she is not charged.  (These position
auctions have been used to model pay-per-click auctions for selling
advertisements on search engines where $\alpha_i$ is the probability
that an advertiser gets clicked on when her ad is shown in the $i$th
position on the search results page.)  For agents with identical
capacities and value distributions, the first-price position auction
where the bottom half of the agents are always rejected is a
5-approximation to the revenue-optimal position auction (that may potentially
match all the agents to slots).

%
%
Our one- versus two-price result generalizes to asymmetric
capacities, asymmetric distributions, and asymmetric downward-closed
feasibility constraints.  A downward-closed feasibility constraint is
given by a set system which specifies which subset of agents can be
simultaneously served.  Downward-closure requires that any subset of a
feasible set is feasible.  A simple one-priced mechanism is a
3-approximation to the optimal mechanism in such an environment.  The
mechanism is whichever has higher revenue of the standard (risk
neutral) revenue-optimal mechanism (which serves the subset of agents
with the highest virtual surplus, i.e., sum of virtual values) and the
one-priced revelation mechanism that serves the set of agents $S$ that
maximizes $\sum_{i \in S} (\vali - \capa_i)^+$ subject to feasibility.

%
%
A main direction for future work is to relax some of the assumptions
of our model.  Our approach to optimizing over mechanisms for
risk-averse agents relies on (a) the simple model of risk aversion
given by capacitated utilities and (b) that losers neither make (i.e.,
ex post individual rationality) nor receive payments (i.e., no
bribes).  These restrictions are fundamental for obtaining linear
incentive compatibility constraints.  Of great interest in future
study is relaxation of these assumptions.

%
%
There is a relatively well-behaved class of risk attitudes known as
{\em constant absolute risk aversion} where the utility function is
parameterized by risk parameter $R$ as $\util_R(w) = \frac{1}{R} (1 -
e^{-Rw})$. These model the setting in which a bidder's risk aversion
is independent of wealth, and hence bidders view a lottery over
payments for an item the same no matter their valuations. \citet{M84}
exploits this and derives the optimal auction for such risk
attitudes. A first step in extending our results to more interesting
risk attitudes would be to consider such risk preferences.

%
%
Our analytical (and computational) solution to the optimal auction
problem for agents with capacitated utilities requires an {\em ex post
  individual rationality} constraint on the mechanism that is standard
in algorithmic mechanism design.  This constraint requires that an
agent who loses the auction cannot be charged.  While such a
constraint is natural in many settings, it is with loss and, in fact,
ill motivated for settings with risk-averse agents.  One of the most
standard mechanisms for agents with risk-averse preferences is the
``insurance mechanism'' where an agent who may face some large
liability with small probability will prefer to pay a constant
insurance fee so that the insurance agency will cover the large
liability in the event that it is incurred.  This mechanism is not ex
post individually rational.  Does the first-price
auction (which is ex post individual rational) approximate the
optimal interim individually rational mechanism?

\bibliographystyle{apalike}

\appendix
\section{Proofs from Section\IFSTOCELSE{ 3}{~\ref{sec:optimal}}}
\label{sec:opt-app}

\Xcomment{
\autoref{thm:optimal-two-price} (Restatement): For any auction for
capacitated agents there is a two-priced auction with no lower
revenue.

\begin{proof}[\NOTSTOC{Proof }of \autoref{thm:optimal-two-price}\NOTSTOC{.}]
As explained in the text, the proof consists of two steps.  We first
formalize the first step, then give the second in full.
\paragraph{Step~1\NOTSTOC{.}} \emph{There is a BIC optimal mechanism in which the wealth of any outcome is no more
than~$\capa$.}

Given any BIC mechanism~$\mech_0$, we construct the following
mechanism~$\mech_1$: $\mech_1$ does everything exactly the same way
as~$\mech_0$, except that whenever $\mech_0$ makes a truthful bidder's
wealth more than~$\capa$, $\mech_1$ makes it equal to~$\capa$.  To be
more precise, whenever $\mech_0$ sells the item and charges a bidder
reporting~$\val$ less than~$\val - \capa$, $\mech_1$ charges her $\val
- \capa$ instead.  $\mech_1$ has obviously at least as much revenue
as~$\mech_0$, since the payment in any outcome weakly increases.
$\mech_1$ is also BIC as long as~$\mech_0$ is: the utility of truthful
reporting is kept the same, but the utility for any type~$\val$ to
misreport another type~$\val'$ weakly decreases compared with
in~$\mech_0$.  

\paragraph{Step~2\NOTSTOC{.}} \emph{There is a BIC optimal mechanism in which the utility of any bidder is either
$\capa$ or~$0$ in any outcome.}

We take a BIC mechanism~$\mech_1$ in which a bidder in every outcome
has a nonnegative wealth no larger than~$\capa$.  We convert $\mech_1$
to a BIC mechanism~$\mech_2$ which has no less revenue but a bidder in
any outcome has wealth either $0$ or~$\capa$.  It is important that
in~$\mech_1$, only the linear part of the utility function is binding for
any truthful bidder.  Let $\ralloc$ and $\rprice_{\mech_1}$ be the
allocation and payment rule in~$\mech_1$.  We first define a parameter
in~$\mech_2$.  Fixing a type~$\val$, conditioning on $\mech_1$ selling
the item, the bidder's expected utility
is $\Ex[\rand]{\ucapa(\val - \rprice_{\mech_1}(\val)) \mid \ralloc(\val) = 1} \leq \capa$.  Define 
\begin{align*}
\beta(\val) = \frac{\Ex{\ucapa(\val - \rprice_{\mech_1}(\val)) \mid \ralloc(\val) = 1}}{\capa} \leq 1.
\end{align*}
By the property of~$\mech_1$, $\val - \rprice_{\mech_1}(\val) \leq \capa$ for any $\rand$, and
$\ucapa(\val - \rprice_{\mech_1}(\val)) = \val - \rprice_{\mech_1}(\val)$.  We therefore have
\begin{equation}
\label{eq:beta}
\Ex[\rand]{\rprice_{\mech_1}(\val) \given \ralloc(\val) = 1} = \val - \beta(\val) \capa.
\end{equation}

Now we construct mechanism~$\mech_2$.  On any reported value~$\val$,
$\mech_2$ sells the item if and only if~$\mech_1$ does; in addition,
conditioning on selling, $\mech_2$ charges the bidder $\val - \capa$
with probability~$\beta(\val)$, and with the remaining probability
charges her~$\val$.

The allocation rule of $\mech_1$ and~$\mech_2$ are the same, and we
denote both as $\ralloc$.  We use $\rprice_{\mech_2}$ to denote the
payment rules of~$\mech_2$.  Note that $\mech_2$ is so constructed
that any truthful bidder has the same expected utility in~$\mech_2$ as
in~$\mech_1$:
\begin{align*}
\Ex[\rand]{\ucapa(\val \ralloc(\val) - \rprice_{\mech_1}(\val))} = \Ex[\rand]{\ucapa(\val \ralloc(\val) -
\rprice_{\mech_2}(\val))}, \STOC{\\
&} \forall \val.
\end{align*}

We now consider:
\NOTSTOC{\begin{description}}

\IFSTOCELSE{\textbf{Revenue:}}{\item[Revenue:]} 
By definition, the expected revenues of the two mechanisms are the same.

\IFSTOCELSE{\textbf{Incentives:}}{\item[Incentives:]} 
It suffices to show that any agent's utility from misreporting (weakly) decreases.  I.e.,
$$\Ex[\rand]{\ucapa(\val \ralloc(\val') - \rprice_{\mech_2}(\val'))} \leq
\Ex[\rand]{\ucapa(\val \ralloc(\val') - \rprice_{\mech_1}(\val'))},$$ 
because then by the \eqref{eq:IC} condition
of~$\mech_1$ we have

Conditioning on not selling, the expected utility of type~$\val$ deviating to~$\val'$ is~$0$ in both.  We therefore only need to consider the expected utility conditioning on selling.  
In both mechanisms, the wealth of type~$\val$ misreporting~$\val'$ lies in $[\val - \val', \val - \val' + \capa]$
in any outcome.  Let $\mu$ be the linear function connecting
$(\val-\val', \ucapa(\val -\val'))$ and $(\val - \val' + \capa, \ucapa(\val - \val' + \capa))$, i.e.,
\begin{align*}
\mu(z) &= \ucapa(\val - \val')  \STOC{\\
& \qquad} + \frac{\ucapa(\val - \val' + \capa) - \ucapa(\val - \val')}{\capa} \cdot (z - \val + \val').
\end{align*}
We have
\begin{align*}
&\Ex[\rand]{\ucapa(\val - \rprice_{\mech_2}(\val')) \given \ralloc(\val') = 1} \\
&\qquad=  \beta(\val') \ucapa(\val - \val' + \capa) + (1 - \beta(\val')) \ucapa(\val - \val') \\
&\qquad=  \beta(\val') \mu(\val - \val' + \capa) + (1 - \beta(\val')) \ucapa(\val - \val') \\
&\qquad=  \mu(\val - \val' + \beta(\val') \capa) \\
&\qquad=  \Ex[\rand]{\mu(\val - \rprice_{\mech_1}(\val')) \given \ralloc(\val') = 1} \\
&\qquad\leq  \Ex[\rand]{\ucapa(\val - \rprice_{\mech_1}(\val')) \given \ralloc(\val') = 1}.
\end{align*}
The first equality is by construction of~$\mech_2$; the second from the definition of $\ucapa$ and~$\mu$; the third by
the linearity of~$\mu$; the fourth by~\eqref{eq:beta}.  The last inequality is because, on the interval $[\val - \val', \val
- \val' + \capa]$, $\mu$ is pointwise weakly dominated by~$\ucapa$ by the concavity of~$\ucapa$.

This shows that in~$\mech_2$, any bidder of any type has weakly less incentive to misreport his valuation than in~$\mech_1$.
\NOTSTOC{\end{description}}

This finishes the proof of \autoref{thm:optimal-two-price}.
\end{proof}

}

\noindent{\bf \autoref{lem:two-price-BIC} (Restatement).} {\em A mechanism
with two-price allocation rule $\qq = \qv - \qc$ is BIC if and only if
for all $\val$ and $\highval$ such that $\val < \highval \leq \val +
\capa$,
\begin{equation}
\frac{\qv(\val)}{\capa} \leq \frac{ \qc(\highval) - \qc(\val)}{\highval - \val} \leq \frac{\qq(\highval)}{\capa}.\tag{\ref{eq:near-bic}}
\end{equation}}

\begin{proof}[\NOTSTOC{Proof }of \autoref{lem:two-price-BIC}\NOTSTOC{.}]
%

Consider an agent and fix two possible values of the agent $\val \leq
\highval \leq \val + \capa$.  The utility for truthtelling with
value~$\val$ is $\capa \cdot \qc(\val)$ in a two-price auction.  The
utility for misreporting $\highval$ from value~$\val$ is
$\qv(\highval) \cdot (\val - \highval) + \qc(\highval) \cdot (\capa +
\val - \highval)$: when the mechanism sells and charges $\highval$,
the agent's utility is~$\val - \highval$; when the mechanism sells and
charges $\highval - \capa$, her utility is $\ucapa(\capa + \val -
\highval) = \capa + \val - \highval$ (since $\val < \highval$).
Likewise, the utility for misreporting $\val$ from true value
$\highval$ is $\qv(\val) \cdot (\highval - \val) + \qc(\val) \cdot
\capa$.  Note that here when the mechanism charges $\val - \capa$, the
utility of the agent is~$\capa$ because the wealth $\capa - \val +
\highval$ is more than~$\capa$; when the mechanism charges~$\val$, her
utility is $\highval - \val$ because we assumed $\highval \leq \val +
\capa$.

An agent with valuation~$\val$ (or $\highval$) would not misreport
$\highval$ (or $\val$) if and only if
\begin{align}
\qc(\val) \cdot \capa \geq &\  \qv(\highval) \cdot (\val - \highval) \STOC{\nonumber\\
						& \quad} + \qc(\highval) \cdot (\capa + \val - \highval);  \label{eq:v-dev-v'}\\
\qc(\highval) \cdot \capa \geq &\ \qv(\val) \cdot (\highval - \val) + \qc(\val)  \cdot \capa. \label{eq:v'-dev-v}
\end{align}
Now the right side of \eqref{eq:near-bic} follows
from~\eqref{eq:v-dev-v'} and the left side follows
from~\eqref{eq:v'-dev-v}.

When $\highval > \val + \capa$, the agent with value~$\val$ certainly
has no incentive to misreport~$\highval$, since any outcome results in
non-positive utility. Alternatively, the agent with value $\highval$
will derive utility $\capa \cdot \qq(\val)$ from misreporting $\val$
and thus will misreport if and only if $\qq(\val) >
\qc(\highval)$. Substituting $\val+\capa$ for $\highval$ in
equation~\eqref{eq:near-bic} gives $\qq(\val)\leq \qc(\val + \capa)$,
and taking this for intermediate points between $\val + \capa$ and
$\highval$ gives monotonicity of $\qc(\val)$ over $[\val + \capa,
  \highval]$. Combining these gives $\qq(\val)\leq \qc(\val + \capa)
\leq \qc(\val)$ and hence $\highval$ will not misreport $\val$.
%
\end{proof}

\noindent{\bf \autoref{cor:int-bic} (Restatement):}
{\em The allocation rules $\qc$ and $\qv$ of a BIC two-priced mechanism satisfies that for all $\val < \highval$, 
\begin{align}
 \int_{\val}^{\highval} \frac{\qv(z)}{\capa} \: \dd z \leq \qc(\highval) - \qc(\val) \leq \int_{\val}^{\highval}
\frac{\qq(z)}{\capa} \: \dd z.\tag{\ref{eq:int-bic}}
\end{align}}

\begin{proof}[\NOTSTOC{Proof }of \autoref{cor:int-bic}\NOTSTOC{.}]
Without loss of generality, suppose $\highval \leq \val + \capa$ (the
statement then follows for higher $\highval$ by induction).  Define
function
\begin{align*}
\qcbar(z) = \qc(\val) + \int_{\val}^{z} \frac{\sup_{y' \in [\val, y]} \qv(y')}{\capa} \: \dd y, \quad \forall z \in
[\val, \highval],
\end{align*}
then $\qcbar(z) \geq \qc(\val) + \int_{\val}^z \frac{\qv(y)}{\capa} \: \dd y$ and hence

\begin{align*}
\int_{\val}^z \frac{\qv(y)}{\capa} \: \dd y \leq \qcbar(z) - \qc(\val).
\end{align*}
By the argument in the proof of \autoref{thm:mechbar},\autoref{lem:qcbar}, we have $\qcbar(z) \leq \qc(z)$, for all~$z$.  This gives the left
side of \eqref{eq:int-bic}.  The other side is proven similarly.
\end{proof}

\section{Proofs from \IFSTOCELSE{Section~4}{\autoref{sec:pricebound}}}
\label{sec:pricebound-app}

\noindent{\bf \autoref{def:mechbar} (Restatement).}
We define $\qqbar = \qcbar + \qvbar$ as follows:
\begin{enumerate}
\item $\qvbar(\val) = \qv(\val)$;
\item Let $\crate(\val)$ be $\tfrac{1}{\capa} \sup_{z \leq \val} \qv(z)$, and let
\begin{align}
\tag{\ref{eq:qcbar}}
\qcbar(\val) = \left\{ \begin{array}{ll}
\int_0^{\val} \crate(y) \: \dd y, & \val \in [0, \capa]; \\
\qc(\val), & \val > \capa.
\end{array}
\right.
\end{align}
\end{enumerate}

\noindent{\bf \autoref{thm:mechbar} (Restatement).} {\em \begin{enumerate}
\item 
On $\val \in [0,\capa]$, $\qcbar(\cdot)$ is a convex, monotone
increasing function.

\item 
On all $\val$, $\qcbar(\val) \leq \qc(\val)$.

\item 
The incentive constraint from the left-hand side of \eqref{eq:int-bic} holds for $\qcbar$:
$\int_\val^{\highval} \qvbar(z)\: \dd z \leq \qcbar(\highval) - \qcbar(\val)$ for all $\val < \highval$.

\item 
On all $\val$, $\qcbar(\val) \leq \qc(\val)$, $\qqbar(\val) \leq \qq(\val)$, and $\pricebar(\val) \geq \price(\val)$.
\end{enumerate}}

\begin{proof}[\stoccom{Proof} of \autoref{thm:mechbar}\stoccom{.}]
\ 
\begin{enumerate}
\item On $[0,\capa]$, $\qcbar(\val)$ is the integral of a monotone,
non-negative function.

\item The statement holds directly from the definition for $\val >
  \capa$; therefore, fix $\val \leq \capa$ in the argument below.

Since $\crate(\val)$ is an increasing function of~$\val$, it is
Riemann integrable (and not only Lebesgue integrable).

Fixing $\val$, we show that, given any $\eps \leq 0$, $\qcbar(\val) \leq \qc(\val) + \eps$.  Fix an integer $\npart >
\val / \eps$, and let $\Delta$ be $\val / \npart < \eps$.  Consider Riemann sum $\riemsum = \sum_{j = 1}^{\npart} \Delta \cdot \crate(\xi_j)$,
where each $\xi_j$ is an arbitrary point in $[(j-1)\Delta, j \Delta]$.\footnote{Obviously $\riemsum$ depends both on
$\Delta$ and the choice of $\xi_j$'s.  For cleanness of notation we omit this dependence and do not write
$\riemsum_{\Delta, \xi}$.}  We will also denote by $\riemsum(k) = \sum_{j = 1}^k \Delta \cdot
\crate(\xi_j)$, $k \leq \npart$, the partial sum of the first $k$ terms.  Since $\qcbar(\val) = \lim_{\Delta \to 0}
\riemsum$, it suffices to show that for all $\Delta < \eps$, $\riemsum \leq \qc(\val)
+ \eps$.  In order to show this, we define a piecewise linear function $\lowqc$.  On $[0, \Delta]$, $\lowqc$ is~$0$, and
then on interval $[j \Delta, (j + 1)\Delta]$, $\lowqc$ grows at a rate $\crate((j-1)\Delta)$.  Intuitively, $\lowqc$
``lags behind'' $\qc$ by an interval $\Delta$ and we will show it lower bounds $\qc$ and upper bounds $\riemsum + \eps$.
Note that since $\crate$ is an increasing function, $\lowqc$ is convex.



We first show $\lowqc(\val) \leq \qc(\val)$.  We will show by induction on~$j$ that $\lowqc(z) \leq \qc(z)$ for
all $z \in [0, j\Delta]$.  Since $\lowqc$ is $0$ on $[0, \Delta]$, the base case $j = 1$ is trivial.  Suppose we have
shown $\lowqc(z) \leq \qc(z)$ for all $z \in [0, (j-1)\Delta]$, let us consider the interval $[(j-1)\Delta, j\Delta]$.
Let $z^*$ be $\argmax_{z \leq (j-1)\Delta} \qv(z)$.\footnote{Here we assumed that $\sup_{z < (j-1)\Delta} \qv(z)$
can be attained by~$z^*$, which is certainly the case when $\qv$ is continuous.  It is straightforward to see though
that we do not need such an assumption.  It suffices to choose $z^*$ such that $\qv(z^*)$ is close enough to $\crate((j-1)\Delta)$.  The proof goes almost without change,
except with an even smaller choice of~$\Delta$.}  By the induction hypothesis, $\lowqc(z^*) \leq \qc(z^*)$.  Recall that
$z^* \leq z \leq \capa$.  By the BIC
condition \eqref{eq:bic-underbidding}, for all $z \geq z^*$, 
\begin{align*}
\qc(z) \geq \qc(z^*) + \frac{\qv(z^*)}{\capa} (z - z^*).
\end{align*}
On the other hand, 
by definition, $\crate$ is constant on $[z^*, z]$, and the derivative of $\lowqc$ is no larger than $\crate(z^*)$ on
$[z^*, z]$.  Hence for all $z \leq j\Delta$, 
\begin{align*}
\lowqc(z) & \leq \lowqc(z^*) + \frac{\qv(z^*)}{\capa}(z - z^*) \\
& \leq \qc(z^*) + \frac{\qv(z^*)}{\capa} (z - z^*)
\leq \qc(z).
\end{align*}
This completes the induction and shows $\lowqc(z) \leq \qc(\val)$ for all $z \in [0, \val]$.

Now we show $\riemsum\leq \lowqc(\val) + \eps$.  Note that since $\crate(z) \leq 1$ for all~$z$, $\riemsum
\leq \riemsum(\npart - 1) + \Delta < \riemsum(\npart - 1) + \eps$.  We will show by induction that $\riemsum(\npart -
1) \leq \lowqc(\val)$.  Our induction hypothesis is $\riemsum(j - 1) \leq \lowqc(j \Delta)$.  The base case
for $j = 1$ is obvious as $\riemsum(0) = \lowqc(\Delta) = 0$.
\begin{align*}
\riemsum(j) & = \riemsum(j-1) + \Delta \cdot \crate(\xi_j) \\
& \leq \lowqc(j) + \Delta \cdot \crate(j \Delta) \\
& = \lowqc(j+1).
\end{align*}
In the inequality we used the induction hypothesis and the monotonicity of~$\crate$.  The last equality is by definition
of~$\lowqc$.

This completes the proof of \autoref{lem:qcbar}.

\item 
For $\val\leq\highval \leq \capa$, by definition of $\qcbar$,
\begin{align*}
\qcbar(\highval) - \qcbar(\val) = \int_{\val}^{\highval} \crate(z) \: \dd z \geq \int_{\val}^{\highval}
\frac{\qv(z)}{\capa} \: \dd z.
\end{align*}
For $\capa \leq \val\leq \highval$, $\qcbar$ and $\qvbar$ are equal to $\qcbar$ and $\qvbar$ on $[\val, \highval]$, and the
inequality follows from \autoref{cor:int-bic}.
For $\val \leq C$ and $\highval \geq C$, we have
\begin{align*}
\qcbar(\highval) - \qcbar(\val) & = [\qcbar(\highval) - \qcbar(\capa)] + [\qcbar(\capa) - \qcbar(\val)]\\
& \geq \int_{\capa}^{\highval} \frac{\qv(z)}{\capa} \: \dd z + \int_{\val}^{\capa} \frac{\qv(z)}{\capa} \: \dd z \\
& = \int_{\val}^{\highval} \frac{\qv(z)}{\capa} \: \dd z.
\end{align*}


\item The first part, $\qcbar(\val) \leq \qc(\val)$, is from
  \autoref{lem:qcbar} of the lemma and the definition of $\qcbar(\val)
  = \qc(\val)$ on $\val > C$.  The second part, $\qqbar(\val) \leq
  \qq(\val)$, follows from the definition of $\qvbar(\val) =
  \qv(\val)$, the first part, and the definition of $\qq(\val) =
  \qv(\val) + \qc(\val)$.  The third part, $\pricebar(\val) \geq
  \price(\val)$, follows because lowering $\qc(\val)$ to
  $\qcbar(\val)$ on $\val \in [0,\capa]$ foregoes payment of
  $\val-\capa$ which is non-positive (for $\val \in [0,\capa]$). \qedhere
\end{enumerate}
\end{proof}

%
%
%
%

\section{Proofs from \IFSTOCELSE{Section 5}{\autoref{sec:payid}}}
\label{sec:payid-app}
\label{sec:first-app}

\noindent{\bf \autoref{thm:payid} (Restatement).}
{\em An allocation rule $\alloc$ and payment rule $\price$ are the BNE of a one-priced mechanism if and only if 
(a) $\alloc$ is 
monotone non-decreasing and (b) if $\price(\val) \geq \pvc(\val)$ for all $\val$ then $\price = \pricecap$ is defined as
\begin{align}
\tag{\ref{eq:payid-zero}}
\pricecap(0) & = 0,\\
\tag{\ref{eq:payid-max}}
\pricecap(\val) & = \max 
	\left( \pvc(\val),\ 
		\sup_{\lowval < \val} \left\{ \pricecap(\lowval) + 
		(\pricern(\val) - \pricern(\lowval))\right\} 
	\right).								
\end{align}
Moreover, if $\alloc$ is strictly increasing then $\price(\val) \geq
\pvc(\val)$ for all $\val$ and $\price = \pricecap$ is the unique
equilibrium payment rule.}

\vspace{.3cm}

The proof follows from a few basic conditions. First, with strictly
monotone allocation rule $\alloc$, the payment upon winning must be at
least $\val-\capa$; otherwise, a bidder would wish to overbid and see
a higher chance of winning, with no decrease in utility on
winning. Second, when the payment on winning is strictly greater than
$\val-\capa$, the bidder is effectively risk-neutral and the
risk-neutral payment identity must hold locally. Third, when an agent
is paying exactly $\val-\capa$ on winning, they are capacitated when
considering underbidding, but risk-neutral when considering
overbidding. As a result, at such a point, $\pricecap$ must be at
least as steep as $\pricern$, i.e., if $\dpricern(\val) >
\dpvc(\val)$, $\pricecap$ will increase above $\pvc$, at which point
it must follow the behavior of $\pricern$.


\autoref{thm:payid} follows from the following three lemmas which show
the necessity of monotonicity, the (partial) necessity of the payment
identity, and then the sufficiency of monotonicity and the payment
identity.

\begin{lemma} 
\label{lem:bne=>mono}
If $\alloc$ and $\price$ are the BNE of a one-priced mechanism, then
$\alloc$ is monotone non-decreasing.
\end{lemma}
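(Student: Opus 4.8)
The plan is a proof by contradiction that uses only two of the incentive constraints. By the revelation principle it suffices to work with the induced direct mechanism: reporting $z$ when the true value is $\val$ yields expected utility $\alloc(z)\,\ucapa(\val - \beta(z))$, where $\beta(z) := \price(z)/\alloc(z)$ is the deterministic amount paid upon winning after report~$z$ (well defined when $\alloc(z) > 0$), and ex post IR gives $\beta(z) \le z$. Assume, for contradiction, that $\val_1 < \val_2$ but $\alloc(\val_1) > \alloc(\val_2)$.

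First I would dispose of the degenerate case $\alloc(\val_2) = 0$: then type~$\val_2$ obtains utility $0$ from truthtelling but $\alloc(\val_1)\,\ucapa(\val_2 - \beta(\val_1)) \ge \alloc(\val_1)\,\ucapa(\val_2 - \val_1) > 0$ from misreporting $\val_1$ (using $\beta(\val_1) \le \val_1 < \val_2$ and $\alloc(\val_1) > 0$), a profitable deviation. Hence $a := \alloc(\val_1) > b := \alloc(\val_2) > 0$, and I would invoke only the two (IC) inequalities saying that type $\val_1$ does not report $\val_2$ and type $\val_2$ does not report $\val_1$:
\begin{align*}
a\,\ucapa(\val_1 - \beta_1) &\ge b\,\ucapa(\val_1 - \beta_2), \\
b\,\ucapa(\val_2 - \beta_2) &\ge a\,\ucapa(\val_2 - \beta_1),
\end{align*}
where $\beta_i := \beta(\val_i)$. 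From the second inequality, $\ucapa(\val_2 - \beta_1) \ge \ucapa(\val_2 - \val_1) > 0$ together with $a > b$ forces $\ucapa(\val_2 - \beta_2) > \ucapa(\val_2 - \beta_1)$; since $\ucapa = \min(\cdot,\capa)$ is nondecreasing and constant on $[\capa,\infty)$, this in turn forces $\beta_1 > \beta_2$ and, crucially, $R := \val_2 - \beta_1 \in (0,\capa)$. Setting $\delta := \beta_1 - \beta_2 > 0$, $d := \val_2 - \val_1 > 0$, and $P := \val_1 - \beta_1 = R - d$, ex post IR gives $P \ge 0$ and plainly $P < R < \capa$, so $\ucapa(P) = P$; feeding this back into the first inequality (whose right-hand side $\ucapa(P + \delta) > 0$) then yields $P > 0$.

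With the relevant wealth levels located strictly inside the linear region of $\ucapa$, the two inequalities simplify to $aP \ge b\,\ucapa(P+\delta)$ and $b\,\ucapa(R+\delta) \ge aR$, that is
$$\frac{\ucapa(P+\delta)}{P} \;\le\; \frac{a}{b} \;\le\; \frac{\ucapa(R+\delta)}{R}.$$
To finish I would show that, for the fixed $\delta > 0$, the function $t \mapsto \ucapa(t+\delta)/t = \min(t+\delta,\capa)/t$ is strictly decreasing on $(0,\infty)$: it equals $1 + \delta/t$ while $t + \delta \le \capa$ and $\capa/t$ afterwards, each strictly decreasing and agreeing at the crossover. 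Since $P < R$, this gives the reverse strict inequality $\ucapa(P+\delta)/P > \ucapa(R+\delta)/R$, contradicting the display above.

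I expect the main obstacle to be bookkeeping rather than a deep idea: the crux is to extract from the constraints that $P$ and $R$ sit strictly below $\capa$ — this is exactly what the plateau of $\ucapa$ buys us in the second (IC) inequality — so that $\ucapa$ behaves linearly there and the truncation matters only through the numerators $\ucapa(P+\delta)$, $\ucapa(R+\delta)$. One must also be careful that every inequality invoked — those relying on $\ucapa(0)=0$, on nonnegativity of $\val_i - \beta_i$ from IR, and on monotonicity of $\ucapa$ — is genuinely strict where strictness is used. Once the wealth levels are correctly situated, the monotone-ratio observation closes the argument in one line.
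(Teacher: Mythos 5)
Your proof is correct, and it takes a genuinely different route from the paper's. The paper argues by contradiction through a three-way case analysis (with subcases) according to whether each of the two types is capacitated or strictly risk-neutral upon winning with a truthful report, handling each configuration with a bespoke argument (the fully risk-neutral case falling back on the standard Myerson swap). You instead extract from the two crossed (IC) constraints, plus ex post IR ($\beta(z)\le z$, which the paper assumes throughout) and $a>b$, that both wealth levels $P=\val_1-\beta_1$ and $R=\val_2-\beta_1$ lie strictly in $(0,\capa)$, and then collapse the contradiction into the single observation that $t\mapsto \min(t+\delta,\capa)/t$ is strictly decreasing on $(0,\infty)$. Every step checks out: the degenerate case $\alloc(\val_2)=0$ is correctly disposed of, the second IC inequality really does force $R<\capa$ via the plateau of $\ucapa$, and the first forces $P>0$. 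What your approach buys is a unified, mechanically verifiable argument with no case split on which branch of $\ucapa$ each type occupies; what the paper's buys is a more transparent accounting of exactly which incentive (over- vs.\ under-reporting, capacitated vs.\ linear) is violated in each configuration, which it then reuses as intuition elsewhere in Section~\ref{sec:payid}. The only point worth flagging is that your argument leans on the direct-revelation form of ex post IR, $\rprice(\val)\le\val$ upon winning; this is exactly the ``Formulaically'' version stated in the paper's preliminaries, so you are within the paper's hypotheses, but it is worth saying explicitly that this is where the lower bounds $R\ge\val_2-\val_1>0$ and $P\ge 0$ come from.
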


\autoref{lem:bne=>mono} shows that monotonicity of the allocation rule
is necessary for BNE in a one-priced mechanism.  Compare this to
\autoref{ex:nonmono} where we exhibited a non-one-priced mechanisms
that was not monotone.  Because the utilities may be capacitated, the
standard risk-neutral monotonicity argument; which involves writing
the IC constraints for a high-valued agent reporting low and a
low-valued agent reporting high, adding, and canceling payments; does
not work.

\begin{lemma} 
\label{lem:bne=>payid}
If $\alloc$ and $\price$ are the BNE of a one-priced mechanism and
$\price(\val) \geq \pvc(\val)$ for all $\val$, then $\price =
\pricecap$ (as defined in \autoref{thm:payid}); moreover, if $\alloc$ is
strictly monotone then $\price(\val) \geq \pvc(\val)$ for all $\val$.
\end{lemma}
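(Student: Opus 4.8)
The plan is to prove the two assertions in turn, writing $b(\val)=\price(\val)/\alloc(\val)$ for the per-win payment the agent faces in the induced one-priced mechanism (when $\alloc(\val)=0$ we have $\price(\val)=0$ and $b(\val)$ is irrelevant). The only tool is the BNE condition in the form
\[
\alloc(\val)\,\ucapa\!\big(\val-b(\val)\big)\ \ge\ \alloc(z)\,\ucapa\!\big(\val-b(z)\big)\qquad\text{for all }\val,z,
\]
split each time according to whether the wealth $\val-b(z)$ exceeds the capacity; monotonicity of $\alloc$ is borrowed from \autoref{lem:bne=>mono}.

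\textit{The moreover clause ($\alloc$ strictly monotone $\Rightarrow$ $\price\ge\pvc$).} Suppose for contradiction $\price(\val_0)<\pvc(\val_0)$ for some $\val_0$, i.e.\ $b(\val_0)<\val_0-\capa$; since a winner pays at most her bid, $b(\val_0)\ge0$, so $\val_-:=b(\val_0)+\capa\in[\capa,\val_0)$. For small $\delta>0$ an agent of value $\val_--\delta$ who reports $\val_0$ wins with probability $\alloc(\val_0)$ and then has wealth $(\val_--\delta)-b(\val_0)=\capa-\delta\in(0,\capa)$, so this deviation yields utility exactly $\alloc(\val_0)(\capa-\delta)$, while her truthful utility is at most $\alloc(\val_--\delta)\,\capa$ (no single outcome gives per-win utility above $\capa$). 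Hence $\alloc(\val_--\delta)\,\capa\ge\alloc(\val_0)(\capa-\delta)$ for all small $\delta$; letting $\delta\downarrow0$ forces $\lim_{\delta\downarrow0}\alloc(\val_--\delta)\ge\alloc(\val_0)$, contradicting $\alloc(\val_--\delta)\le\alloc(\val_-)<\alloc(\val_0)$ (strict monotonicity, $\val_-<\val_0$).

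\textit{The main claim, lower bound.} Assume now $\price(\val)\ge\pvc(\val)$ for all $\val$, equivalently $\val-b(\val)\le\capa$, so a truthful agent always lies in the linear part of $\ucapa$ and her utility equals $\val\alloc(\val)-\price(\val)$. Let $\Psi$ be the operator $(\Psi g)(\val)=\max\big(\pvc(\val),\ \sup_{\lowval<\val}\{g(\lowval)+\pricern(\val)-\pricern(\lowval)\}\big)$; the iterative construction following \autoref{thm:payid} computes the unique $g$ with $g(0)=0$ and $g=\Psi g$, namely $\pricecap$, so it suffices to show $\price(0)=0$ (immediate, since an agent of value $0$ must get nonnegative utility) and $\price=\Psi\price$. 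For $\price\ge\Psi\price$: the ``no overbidding'' constraint for values $w<w'$ is $w\alloc(w)-\price(w)\ge\alloc(w')\,\ucapa(w-b(w'))$; here $w-b(w')\le w-w'+\capa<\capa$ (using $\price(w')\ge\pvc(w')$), so the cap is slack and the constraint reads $\price(w')-\price(w)\ge w(\alloc(w')-\alloc(w))$. Summing over a fine partition of $[\lowval,\val]$ and passing to the limit gives $\price(\val)-\price(\lowval)\ge\int_\lowval^\val y\,\dd\alloc(y)=\pricern(\val)-\pricern(\lowval)$; together with $\price\ge\pvc$ this is $\price\ge\Psi\price$.

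\textit{The main claim, upper bound — the main obstacle.} It remains to show $\price\le\Psi\price$. Fix $\val$ with $\price(\val)>\pvc(\val)$ (otherwise $\price(\val)=\pvc(\val)\le(\Psi\price)(\val)$) and set $\lowval^{\star}=\sup\{z\le\val:\price(z)=\pvc(z)\}$, possibly $0$; on $(\lowval^{\star},\val]$ we have $\price>\pvc$ strictly, hence $\alloc>0$ there (else $\price=0=\pvc$), so $b$ is defined and $b(z)>z-\capa$ on that interval. The idea is to chain ``no underbidding'' constraints along a fine partition $\lowval^{\star}<z_1<\dots<z_m=\val$: the constraint for value $z_{i+1}$ reporting $z_i$ is $z_{i+1}\alloc(z_{i+1})-\price(z_{i+1})\ge\alloc(z_i)\,\ucapa(z_{i+1}-b(z_i))$, and \emph{when the cap is slack} this becomes $\price(z_{i+1})-\price(z_i)\le z_{i+1}(\alloc(z_{i+1})-\alloc(z_i))$; summing and taking limits (and then $z_1\downarrow\lowval^{\star}$) yields $\price(\val)\le\pvc(\lowval^{\star})+\int_{\lowval^{\star}}^{\val}y\,\dd\alloc(y)=\price(\lowval^{\star})+\pricern(\val)-\pricern(\lowval^{\star})\le(\Psi\price)(\val)$, where $\price(\lowval^{\star})=\pvc(\lowval^{\star})$ (or $\lowval^{\star}=0$), and a standard induction over the successive regime boundaries promotes $\price=\Psi\price$ to $\price=\pricecap$. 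The delicate step, and where the bulk of the work lies, is justifying that the cap is slack along the partition, i.e.\ that $b(z_i)\ge z_{i+1}-\capa$ for sufficiently fine partitions of $(\lowval^{\star},\val]$: this needs $z\mapsto b(z)-z$ to be well-behaved (continuous, or at least with enough one-sided regularity) on $(\lowval^{\star},\val]$ so that $b(z)-z$ is bounded below by $-\capa$ plus a positive slack on compact subsets. Establishing this regularity from the BNE conditions — exploiting monotonicity of $\alloc$ and of $b$ and treating jumps of $\alloc$ by hand — is the technical heart of the proof, and is also the point where the continuity and positive-density assumptions on $\dist$ are used in the first-price application.
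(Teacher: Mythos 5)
Your ``moreover'' clause and your lower-bound direction ($\price(\val)-\price(\lowval)\geq\pricern(\val)-\pricern(\lowval)$ via chained no-overbidding constraints, where the cap really is slack because $w-\price(w')/\alloc(w')\leq w-w'+\capa<\capa$) are correct and match the paper's argument in substance. The genuine gap is in the upper-bound direction, and it sits exactly where you place ``the technical heart'': your chain of no-underbidding constraints over a fine partition of $(\lowval^{\star},\val]$ needs the cap to be \emph{exactly} slack at every link, i.e.\ $\price(z_i)/\alloc(z_i)\geq z_{i+1}-\capa$, and you propose to extract this from regularity of $z\mapsto\price(z)/\alloc(z)-z$. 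That regularity is not available: the lemma concerns an arbitrary monotone interim allocation rule, $\alloc$ (and hence the per-win price) may jump, and no hypothesis on $\dist$ appears in the statement --- the continuity and positive-density assumptions you invoke belong to the first-price application, not to this lemma, whose proof the paper explicitly designs to ``deal with discontinuity.'' Nor does \emph{approximate} slackness rescue the chain: from $\price\geq\pvc$ one only gets $z_{i+1}-\price(z_i)/\alloc(z_i)\leq\capa+(z_{i+1}-z_i)$, so each link incurs a truncation error of up to $\alloc(z_i)\,(z_{i+1}-z_i)$, and these errors sum to roughly $\int\alloc(z)\,\dd z$ over the interval --- they do not vanish as the mesh shrinks.

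The paper avoids the chain entirely by exploiting the $\sup_{\lowval<\val}$ already present in the definition of $\pricecap$: to prove $\price(\val)\leq\sup_{\lowval<\val}\{\price(\lowval)+\pricern(\val)-\pricern(\lowval)\}$ whenever $\price(\val)>\pvc(\val)$, it suffices to exhibit, for each $\eps>0$, a \emph{single} $\lowval$ with $\price(\val)<\price(\lowval)+\pricern(\val)-\pricern(\lowval)+\eps$. Taking any $\lowval\in(\val-\eps/2,\val)$, the wealth of type $\val$ upon winning with report $\lowval$ is $\val-\price(\lowval)/\alloc(\lowval)<\capa+\eps/2$, so the capacity truncation costs at most $\tfrac{\eps}{2}\alloc(\lowval)$ relative to the risk-neutral deviation utility; feeding this into the one BIC constraint for $\val$ versus $\lowval$, together with the two-point risk-neutral payment identity and $(\val-\lowval)\alloc(\val)\leq\eps/2$, yields the bound. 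This one-step local argument requires no continuity of $\alloc$ or of the per-win price. Replacing your chaining step for the upper bound with this local argument closes the gap; the remainder of your proposal can stand.
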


From \autoref{lem:bne=>payid} we see that one-priced mechanisms almost
have a payment identity.  It is obvious that a payment identity does
not generically hold as a capacitated agent with value $\val$ is
indifferent between payments less than $\val - \capa$; therefore, the
agent's incentives does not pin down the payment rule if the payment
rule ever results in a wealth for the agent of more than $\capa$.
Nonetheless, the lemma shows that this is the only thing that could
lead to a multiplicity of payment rules.  Additionally, the lemma
shows that if $\alloc$ is strictly monotone, then these sorts of
payment rules cannot arise.

\begin{lemma} 
\label{lem:mono+payid=>bne}
If allocation rule $\alloc$ is monotone non-decreasing and payment
rule $\price = \pricecap$ (as defined in \autoref{thm:payid}), then
they are the Bayes-Nash equilibrium of a one-priced mechanism.
\end{lemma}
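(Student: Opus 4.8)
The plan is to realize $(\alloc,\pricecap)$ by the one-priced mechanism that, on reported value $\val$, serves the agent with probability $\alloc(\val)$ and, if served, charges the deterministic bid $\bcap(\val)=\pricecap(\val)/\alloc(\val)$ (reports $\val$ with $\alloc(\val)=0$ are irrelevant, as they yield utility $0$). Since $\ucapa(0)=0$, the interim utility of reporting $z$ at true value $\val$ is $g(\val,z):=\alloc(z)\,\ucapa\!\left(\val-\bcap(z)\right)$, so it suffices to prove $g(\val,\val)\ge g(\val,z)$ for all $\val,z$.

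The backbone is the closed form
\[
\pricecap(\val)=\sup_{0\le\lowval\le\val}\Big\{\pvc(\lowval)+\pricern(\val)-\pricern(\lowval)\Big\},
\]
where we adopt the convention $\pvc(0^-)=0$, so that $\lowval=0$ contributes $\pricern(\val)$, matching $\pricecap(0)=0$. First I would verify that this solves \eqref{eq:payid-zero}--\eqref{eq:payid-max}: using $\pricern(\val)-\pricern(\lowval)=\val\alloc(\val)-\lowval\alloc(\lowval)-\int_\lowval^\val\alloc$ and $\pvc(\lowval)=(\lowval-\capa)\alloc(\lowval)$, the recursive witnesses $\pricecap(\lowval)+\pricern(\val)-\pricern(\lowval)$ appearing in \eqref{eq:payid-max} can be ``driven down'' (replacing $\lowval$ by a smaller witness whenever $\pricecap(\lowval)$ is itself attained recursively) until only the non-recursive witnesses on the right of the display survive. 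Writing $\ucapareport(\val):=\val\alloc(\val)-\pricecap(\val)=\inf_{0\le\lowval\le\val}\{\capa\alloc(\lowval)+\int_\lowval^\val\alloc\}$ (again with $\alloc(0^-)=0$), it follows immediately that $0\le\ucapareport(\val)\le\capa\alloc(\val)$ (upper bound from the witness $\lowval=\val$), that $\ucapareport$ is monotone non-decreasing with $\ucapareport(\highval)-\ucapareport(\val)\le(\highval-\val)\alloc(\highval)$ for $\val<\highval$ (compare the two infima using monotonicity of $\alloc$), hence that $\bcap(\val)=\val-\ucapareport(\val)/\alloc(\val)$ lies in $[\val-\capa,\val]$ and is monotone non-decreasing (and continuous wherever $\alloc$ is) --- which also establishes \autoref{thmpart:px-mon} of \autoref{remark:payid}. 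In particular a truthful agent's wealth on winning lies in $[0,\capa]$, so $g(\val,\val)=\ucapareport(\val)$; and in general $g(\val,z)=\val\alloc(z)-\pricecap(z)$ when $\bcap(z)\ge\val-\capa$ and $g(\val,z)=\capa\alloc(z)$ otherwise.

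It then remains to prove $\pricecap(\val)-\pricecap(z)\le\val\big(\alloc(\val)-\alloc(z)\big)$ whenever $\bcap(z)\ge\val-\capa$ (this is exactly $g(\val,\val)\ge g(\val,z)$ in that regime), and to handle ``capped'' deviations separately. For $z\ge\val$ one automatically has $\bcap(z)\ge z-\capa\ge\val-\capa$, and \eqref{eq:payid-max} with witness $\lowval=\val$ gives $\pricecap(z)-\pricecap(\val)\ge\pricern(z)-\pricern(\val)\ge\val(\alloc(z)-\alloc(\val))$, the last step being $\int_\val^z\alloc\le(z-\val)\alloc(z)$. For $z<\val$ with $\bcap(z)\ge\val-\capa$, equivalently $\pricecap(z)\ge(\val-\capa)\alloc(z)$, I would bound $\pricecap(\val)$ from above using the closed form, i.e.\ check $\pvc(\lowval)+\pricern(\val)-\pricern(\lowval)\le\pricecap(z)+\val(\alloc(\val)-\alloc(z))$ witness by witness: for $0\le\lowval\le z$ this reduces, via $\pricecap(z)\ge\pvc(\lowval)+\pricern(z)-\pricern(\lowval)$, to $\pricern(\val)-\pricern(z)\le\val(\alloc(\val)-\alloc(z))$ (i.e.\ $\int_z^\val\alloc\ge(\val-z)\alloc(z)$); for $z<\lowval\le\val$ it reduces, using $\pricecap(z)\ge(\val-\capa)\alloc(z)$, to $\capa\alloc(z)\le\capa\alloc(\lowval)+\int_\lowval^\val\alloc$, which holds because $\alloc(\lowval)\ge\alloc(z)$. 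Finally, for a capped deviation ($z<\val$, $\bcap(z)<\val-\capa$) I would use monotonicity and continuity of $\bcap$ together with $\bcap(z)<\val-\capa\le\bcap(\val)$ to pick $z'\in[z,\val]$ with $\bcap(z')=\val-\capa$; then $g(\val,z)=\capa\alloc(z)\le\capa\alloc(z')=g(\val,z')$, and $z'$ falls under the (boundary) non-capped case just treated.

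The step I expect to be the real obstacle is regularity. Verifying the closed form and the ``driving-down'' reduction, and deriving monotonicity of $\bcap$, all have to be carried out with the integral form and one-sided increments rather than derivatives, since $\alloc$ is only assumed monotone; and when $\alloc$ has atoms the bid function $\bcap$ can jump, so the intermediate-value step in the capped case must be replaced by a limiting argument (bidding just above the gap), in the spirit of the point-mass footnote used elsewhere in the paper for the first-price auction.
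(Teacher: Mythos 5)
Your proof takes a genuinely different route from the paper's. The paper works directly with the recursive definition \eqref{eq:payid-max}: it first rules out overbidding (any overbid keeps the agent in the linear part of her utility, and $\pricecap$ increases at least as fast as $\pricern$), then proves monotonicity of $\pricecap/\alloc$, and finally rules out underbidding by locating a threshold $\valzero$ below which the agent is capacitated and above which she is effectively risk-neutral. You instead unroll the recursion into the closed form $\pricecap(\val)=\sup_{\lowval\le\val}\{\pvc(\lowval)+\pricern(\val)-\pricern(\lowval)\}$, equivalently $\val\alloc(\val)-\pricecap(\val)=\inf_{\lowval\le\val}\{\capa\alloc(\lowval)+\int_\lowval^\val\alloc(y)\,\dd y\}$, and verify each incentive constraint witness-by-witness. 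This buys a self-contained derivation of $\bcap(\val)\in[\val-\capa,\val]$ and of the monotonicity of $\bcap=\pricecap/\alloc$ (i.e.\ \autoref{thmpart:px-mon} of \autoref{remark:payid}), and it replaces the paper's regime-splitting with purely algebraic comparisons; the cost is that the ``driving-down'' equivalence between the recursive and closed forms still needs a careful well-foundedness/limiting argument that you only sketch.

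One step does not work as written: the capped downward deviation. When $\alloc$ (hence $\bcap$) jumps strictly over the level $\val-\capa$, no $z'$ with $\bcap(z')=\val-\capa$ exists, and the limiting version (``bid just above the gap'') compares $\capa\alloc(z)$ against $\lim_{z'\downarrow z^*}\alloc(z')\bigl(\val-\bcap(z')\bigr)$, whose second factor stays strictly below $\capa$; the needed inequality does not follow from monotonicity of $\alloc$ alone. Fortunately your own closed form closes this without any continuity. Splitting the infimum defining $\val\alloc(\val)-\pricecap(\val)$ at the deviation point $z<\val$ gives
\[
\val\alloc(\val)-\pricecap(\val)\;\ge\;\min\Bigl(z\alloc(z)-\pricecap(z)+\int_z^\val\alloc(y)\,\dd y,\;\;\capa\alloc(z)\Bigr),
\]
where the first branch uses witnesses $\lowval\le z$ and the second uses $\lowval\in(z,\val]$ together with $\alloc(\lowval)\ge\alloc(z)$. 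The deviation utility equals $\min\bigl(\val\alloc(z)-\pricecap(z),\,\capa\alloc(z)\bigr)=\min\bigl(z\alloc(z)-\pricecap(z)+(\val-z)\alloc(z),\,\capa\alloc(z)\bigr)$, which is dominated by the displayed bound because $(\val-z)\alloc(z)\le\int_z^\val\alloc(y)\,\dd y$. This handles capped and uncapped underbids uniformly and should replace the intermediate-value step.
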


The following claim and notational definition will be used throughout the proofs below.


\begin{claim}
\label{c:misreports}
Compared to the wealth of type $\val$ on truthtelling, when type
$\highval > \val$ misreports $\val$ she obtains strictly more wealth
(and is more capacity constrained) and when type $\lowval < \val$
misreports $\val$ she obtains strictly less wealth (and is less
capacity constrained) and if $\price(\val) \geq \pvc(\val)$ then
type $\lowval$ is strictly risk neutral on reporting $\val$.
\end{claim}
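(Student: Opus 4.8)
The plan is to unwind the definition of a one-priced mechanism and read the three assertions off a one-line computation of wealth. Fix the agent together with the induced allocation rule $\alloc$ and payment rule $\price$, and assume $\alloc(\val)>0$ (if $\alloc(\val)=0$ then ex post IR forces $\price(\val)=0$, reporting $\val$ yields the zero outcome with probability one, and every assertion below is vacuous). In a one-priced mechanism the report $\val$ wins with probability $\alloc(\val)$ and, conditioned on winning, charges the deterministic amount $\price(\val)/\alloc(\val)$. Hence a bidder of true value $t$ who reports $\val$ faces the lottery: wealth $t-\price(\val)/\alloc(\val)$ with probability $\alloc(\val)$, and wealth $0$ with the remaining probability. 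In particular the truthful type $\val$ has winning-wealth $\val-\price(\val)/\alloc(\val)$, while type $t$ reporting $\val$ has winning-wealth $t-\price(\val)/\alloc(\val)=\bigl(\val-\price(\val)/\alloc(\val)\bigr)+(t-\val)$, and both have the same losing-wealth $0$.

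I would then treat the overbidder first: for $\highval>\val$ the identity above shows that type $\highval$'s winning-wealth exceeds that of the truthful $\val$ by $\highval-\val>0$; since $\ucapa(z)=\min(z,\capa)$ is concave, with right-derivative $1$ for $z<\capa$ and $0$ for $z\ge\capa$, a strictly larger wealth means the marginal utility of a price reduction is weakly smaller there, i.e.\ she is ``more capacity constrained.'' Symmetrically, for $\lowval<\val$ the winning-wealth is smaller by $\val-\lowval>0$, so she is ``less capacity constrained.''

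For the risk-neutrality statement, suppose $\price(\val)\ge\pvc(\val)=(\val-\capa)\alloc(\val)$. Dividing by $\alloc(\val)>0$ gives $\price(\val)/\alloc(\val)\ge\val-\capa$, hence $\val-\price(\val)/\alloc(\val)\le\capa$; therefore for $\lowval<\val$ we get $\lowval-\price(\val)/\alloc(\val)<\val-\price(\val)/\alloc(\val)\le\capa$. Thus every outcome of reporting $\val$ for type $\lowval$ --- winning-wealth $\lowval-\price(\val)/\alloc(\val)$, which is strictly below $\capa$, and losing-wealth $0\le\capa$ --- lies in the region where $\ucapa$ coincides with the identity map, so type $\lowval$'s expected utility from reporting $\val$ is an affine (hence risk-neutral) function of the charged price. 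This is exactly what ``strictly risk neutral on reporting $\val$'' means.

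The argument is entirely elementary; the only points needing a little care are the degenerate case $\alloc(\val)=0$ (dispatched above) and pinning down the informal phrases ``more/less capacity constrained'' and ``strictly risk neutral'' as statements about which linear piece of $\ucapa$ the possible wealths occupy. I anticipate no substantive obstacle.
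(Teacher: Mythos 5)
Your proof is correct: the paper states \autoref{c:misreports} without proof (treating it as immediate from the definition of a one-priced mechanism), and your computation of the per-win price $\price(\val)/\alloc(\val)$ and the resulting winning-wealth $t-\price(\val)/\alloc(\val)$ is exactly the unwinding the paper leaves implicit. Your handling of the degenerate case $\alloc(\val)=0$ and your reading of ``more/less capacity constrained'' and ``strictly risk neutral'' as statements about which piece of $\ucapa$ the winning-wealth occupies are both consistent with how the claim is used in the proofs of Lemmas~\ref{lem:bne=>mono}--\ref{lem:mono+payid=>bne}.
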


\begin{definition}
Denote the utility for type $\val$ misreporting $\val'$ for the same implicit
allocation and payment rules by $\ucapareport(\val, \val')$ and
$\urnreport(\val, \val')$ for risk-averse and risk-neutral agents,
respectively.
\end{definition}

\begin{proof}[\stoccom{Proof} of \autoref{lem:bne=>mono}\stoccom{.}]
We prove via contradiction. Assume that $\alloc$ is not monotone, and
hence there is a pair of values, $\lowval < \highval$, for which $\alloc(\lowval)> \alloc(\highval)$. We will
consider this in three cases: (1) when a type of $\lowval$ is
capacitated upon truthfully reporting and winning, and when a type of
$\lowval$ is strictly in the risk-neutral section of her utility upon
winning and a type of $\highval$ is either in the (2) capacitated or
(3) strictly risk-neutral section of her utility upon winning.

\begin{enumerate}
\item ($\lowval$ capacitated).  If $\lowval$ is capacitated upon
  winning, then $\highval$ will also be capacitated upon winning and
  misreporting $\lowval$ (\autoref{c:misreports}).  A capacitated agent is already receiving
  the highest utility possible upon winning.  Therefore, $\highval$
  strictly prefers misreporting $\lowval$ as such a report (strictly)
  increases probability of winning and (weakly) increases utility from
  winning.

\item ($\lowval$ risk-neutral, $\highval$ capacitated).  We split this case into two subcases depending on whether the agent with type $\lowval$ is capacitated with misreport $\highval$.
\begin{enumerate}
\item ($\lowval$ capacitated when misreporting $\highval$). As the
  truthtelling $\highval$ type is also capacitated (by assumption of
  this case), the utilities of these two scenarios are the same,
  i.e.,
\begin{align}
\label{eq:mono1}
\ucapareport(\lowval, \highval) &=
\ucapareport(\highval, \highval).\\
\intertext{Since type $\lowval$ truthfully reporting
$\lowval$ is strictly uncapacitated, if her value was increased she
would feel a change in utility (for the same report); therefore,
type $\highval$ reporting $\lowval$ has strictly more utility (\autoref{c:misreports}), i.e.,}
\label{eq:mono2}
\ucapareport(\highval, \lowval) &> \ucapareport(\lowval, \lowval).\\
\intertext{Combining \eqref{eq:mono1} and \eqref{eq:mono2} we arrive at the contradiction that type $\highval$ strictly prefers to report $\lowval$, i.e.,}
\notag
\ucapareport(\highval, \lowval) &> \ucapareport(\highval, \highval).
\end{align}

\item ($\lowval$ risk-neutral when misreporting $\highval$).  First,
  it cannot be that the bidder of type $\highval$ is capacitated for
  both reports $\highval$ and $\lowval$ as, otherwise, misreporting
  $\lowval$ gives the same utility upon winning but strictly higher
  probability of winning.  Therefore, both types are risk neutral when
  reporting $\lowval$. Type $\lowval$ is risk-neutral for both reports
  so she feels the discount in payment from reporting $\highval$
  instead of $\lowval$ linearly; type
  $\highval$ feels the discount less as she is capacitated at
  $\highval$.  On the other hand, $\highval$ has a higher value for
  service and therefore feels the higher service probability from
  reporting $\lowval$ over $\highval$ more than $\lowval$.
  Consequently, if $\lowval$ prefers reporting $\lowval$ to
  $\highval$, then so must $\highval$ (strictly).
\end{enumerate}


%
%

\item ($\lowval$ risk-neutral, $\highval$ risk-neutral).  First, note
  that the price upon winning must be higher when reporting $\lowval$
  than $\highval$, i.e., $\price(\lowval)/\alloc(\lowval) >
  \price(\highval)/\alloc(\highval)$; otherwise a bidder of type
  $\highval$ would always prefer to report $\lowval$ for the higher
  utility upon winning and higher chance of winning.  Thus, a bidder
  of type $\highval$ must be risk-neutral upon underreporting
  $\lowval$ and winning; furthermore, risk-neutrality of $\highval$
  for reporting $\highval$ implies the risk-neutrality of $\lowval$
  for reporting $\highval$ (\autoref{c:misreports}).  As both
  $\highval$ and $\lowval$ are risk-neutral for reporting either of
  $\lowval$ or $\highval$, the standard monotonicity argument for
  risk-neutral agents applies.

\end{enumerate}

Thus, for $\alloc$ to be in BNE it must be monotone non-decreasing.
\end{proof}

\begin{proof}[\stoccom{Proof} of \autoref{lem:bne=>payid}\stoccom{.}]
First we show that if $\alloc$ is strictly monotone then
$\price(\val) \geq \pvc(\val)$ for all $\val$.  If $\price(\val) <
\pvc(\val)$ then type $\val$ on truthtelling obtains a wealth
$w$ strictly larger than $\capa$.  Type $\lowval = \val -
\eps$, for $\eps \in (0,w-\capa)$, would also be capacitated
when reporting $\val$; therefore, by strict monotonicity of $\alloc$
such a overreport strictly increases her utility and BIC is violated.

The following two claims give the necessary condition.
\begin{align}
\label{eq:payid-low}
\pricecap(\val) & \geq \pricecap(\lowval) + (\pricern(\val) - \pricern(\lowval)), \quad \forall \lowval < \val  \\
\label{eq:payid-high}
\pricecap(\val) & \leq \sup_{\lowval < \val} \left\{ \pricecap(\lowval) + (\pricern(\val) - \pricern(\lowval)) \right\}, \quad \forall \val {\textrm \ s.t.\ }
\pricecap(\val) > \pvc(\val).
\end{align}

Equation \eqref{eq:payid-low} is easy to show.  Since $\pricecap(\val)
\geq \pvc(\val)$, the wealth of any type~$\lowval$ when winning is at
most~$\capa$, and strictly smaller than $\capa$ if overbidding.  In
other words, when overbidding, a bidder only uses the linear part of
her utility function and therefore can be seen as risk neutral.
Equation \eqref{eq:payid-low} then follows directly from the standard
argument for risk neutral agents.\footnote{For a risk neutral agent,
  the risk neutral payment maintains the least difference in payment
  to prevent all types from overbidding.}

Equation \eqref{eq:payid-high} would be easy to show if $\pricecap$ is
continuous: for all $\val$ where $\pricecap(\val) > \pvc(\val)$, there
is a neighborhood $(\val - \eps, \val]$ such that deviating on this
  interval only incurs the linear part of the utility function and the
  agent is effectively risk neutral.  We give the following general
  proof that deals with discontinuity and includes continuous cases as
  well.

To show \eqref{eq:payid-high}, it suffices to show that, for each $\val$ where $\pricecap(\val)
> \pvc(\val)$, for any $\eps > 0$, $\pricecap(\val) < \pricecap(\lowval) + (\pricern(\val) - \pricern(\lowval) + \eps$
for some $\lowval < \val$.  Consider any $\lowval > \val - \tfrac \eps 2$.  Since $\pricecap(\lowval) \geq \pvc(\lowval) =
(\lowval - \capa) \alloc(\lowval) > (\val - \tfrac \eps 2 - \capa) \alloc(\lowval)$, the utility for $\val$ to misreport
$\lowval$, i.e., $\ucapareport(\val, \lowval)$ is not much smaller than if the agent is risk neutral:
\begin{align*}
\urnreport(\val, \lowval) - \ucapareport(\val, \lowval) < \frac \eps 2 \alloc(\lowval).
\end{align*}
The following derivation, starting with the BIC condition, gives the desired bound:
\begin{align*}
0 \leq \ucapareport(\val, \val) - \ucapa(\val, \lowval) & < \ucapa(\val, \val) - \urnreport(\val, \lowval) + \frac \eps
2 \alloc(\lowval) \\
& = (\alloc(\val) \val - \pricecap(\val)) - (\alloc(\lowval) \val - \pricecap(\lowval)) + \frac \eps 2 \alloc(\lowval) \\
& = (\alloc(\val) - \alloc(\lowval))\val - (\pricecap(\val) - \pricecap(\lowval)) + \frac \eps 2 \alloc(\lowval) \\
& \leq \pricern(\val) - \pricern(\lowval) + (\val - \lowval) \alloc(\val) - (\pricecap(\val) - \pricecap(\lowval)) +
\frac \eps 2 \alloc(\lowval) \\
& \leq \pricern(\val) - \pricern(\lowval) - (\pricecap(\val) - \pricecap(\lowval)) + \eps.
\end{align*}
The first equality holds because $\pricecap(\val) > \pvc(\val)$; the second to last inequality uses the definition of risk
neutral payments (\autoref{thm:myerson}, \autoref{thmpart:payment}), and the last holds because $\alloc(\lowval)
< \alloc(\val) \leq 1$.
\end{proof}

\begin{proof}[\stoccom{Proof} of \autoref{lem:mono+payid=>bne}\stoccom{.}]
The proof proceeds in three steps.  First, we show that an agent with
value $\val$ does not want to misreport a higher value $\highval$.
Second, we show that the expected payment on winning, i.e.,
$\pricecap(\val)/\alloc(\val)$ is monotone in $\val$.  Finally, we
show that the agent with value $\val$ does not want to misreport a
lower value $\lowval$. Recall in the subsequent discussion that
$\pricern$ is the risk-neutral expected payment for allocation
rule~$\alloc$ (from \autoref{thm:myerson}, \autoref{thmpart:payment}).


\begin{enumerate} 
\item \label{step:misreporting-highval} (Type $\val$ misreporting $\highval$.)  This argument pieces
  together two simple observations.  First, \autoref{c:misreports}
  and the fact that $\pricecap \geq \pvc$ imply that $\val$ is
  risk-neutral upon reporting $\highval$.  
Second, by definition of $\pricecap$, the difference in a capacitated agent's
  payments given by $\pricecap(\highval) - \pricecap(\val)$ is at least that for a risk neutral agent given by
  $\pricern(\highval) - \pricern(\val)$.  The risk-neutral agent's
  utility is linear and she prefers reporting $\val$ to $\highval$.
  As the risk-averse agent's utility is also linear for payments in
  the given range and because the difference in payments is only
  increased, then the risk-averse agent must also prefer reporting
  $\val$ to $\highval$.

\item (Monotonicity of $\pricecap / \alloc$.)  The monotonicity of
  $\tfrac{\pricecap}{\alloc}$, which is \autoref{thmpart:px-mon} of
  \autoref{remark:payid}, will be used in the next case (and some
  applications of \autoref{thm:payid}).  We consider $\val$ and
  $\highval$ and argue that $\tfrac{\pricecap(\val)}{\alloc(\val)}
  \leq \tfrac{\pricecap(\highval)}{\alloc(\highval)}$.  First, suppose
  that the wealth upon winning of an agent with value $\val$ is
  $\capa$, i.e., $\pricecap(\val) = \pvc(\val)$.  If
  $\pricecap(\highval) = \pvc(\highval)$ as well, then by definition of $\pvc$
  (by $\tfrac{\pvc(\val)}{\alloc(\val)} = \val - \capa$) monotonicity
  of $\pricecap/\alloc$ holds for these points.  If $\pricecap$ is
  higher than $\pvc$ at $\highval$ then this only improves
  $\pricecap/\alloc$ at $\highval$.  Second, suppose that the wealth
  of an agent with value $\val$ is strictly larger than $\capa$,
  meaning this agent's utility increases with wealth.  The allocation
  rule $\alloc(\cdot)$ is weakly monotone (\autoref{lem:bne=>mono}), suppose for a contradiction
  that $\tfrac{\pricecap(\val)}{\alloc(\val)} >
  \tfrac{\pricecap(\highval)}{\alloc(\highval)}$ on $\val < \highval$.
  Then the agent with value $\val$ can pretend to have value
  $\highval$, obtain at least the same probability of winning, and
  obtain strictly lower payment.  This increase in wealth is strictly
  desired, and therefore, this agent strictly prefers misreport
  $\highval$.  Combined with \autoref{step:misreporting-highval},
  above, which argued that a low valued agent would not prefer to
  pretend to have a higher value, this is a contradiction.


\item (Type $\val$ misreporting $\lowval$.)  If $\pricecap(\val) =
  \pvc(\val)$, then paying less on winning does not translate into
  extra utility, and hence by the monotonicity of $\pricecap/\alloc$,
  the agent would never misreport.

We thus focus then on the case that $\pricecap(\val) > \pvc(\val)$. By
the monotonicity of $\pricecap / \alloc$, there is a point $\valzero <
\val$ such that for every value $\lowval$ between $\valzero$ and $\val$, if
an agent with value $\val$ reported $\lowval$, she would still be in
the risk-neutral section of her utility function. Specifically, this
entails that $\forall \lowval$ such that $\valzero < \lowval < \val$,
$\pricecap(\lowval)/\alloc(\lowval) \geq \val - \capa$. Consider such
a $\valzero$ and any such $\lowval$. For any such point,
$\pricecap(\lowval)/\alloc(\lowval) > \lowval - \capa$, and hence a
bidder with value $\lowval$ would also be strictly in the risk-neutral
part of her utility function upon winning.

For every such point, by our formulation in \eqref{eq:payid-max},
$\pricecap(\val) - \pricecap(\lowval) = \pricern(\val) -
\pricern(\val)$. As a result, since she is effectively risk-neutral in
this situation, she cannot wish to misreport $\lowval$; otherwise, the
combination of $\alloc$ and $\pricern$ would not be BIC for
risk-neutral agents.

For any $\lowval \leq \valzero$, the wealth on winning for a bidder
with value $\val$ would increase, but only into the capacitated
section of her utility function, hence gaining no utility on winning,
but losing out on a chance of winning thanks to the weak monotonicity
of $\alloc$. Hence, she would never prefer to bid $\lowval$ over
bidding $\valzero$.  Combining this argument with the above argument,
our agent with value $\val$ does not prefer to misreport any $\lowval
< \val$. \qedhere
\end{enumerate}
\end{proof}


\Xcomment{

We then argue that $\pricecap(\val) \geq \pricecap(\lowval) +
(\pricern(\val) - \pricern(\lowval))\ \ \forall \lowval < \val$. These
together give $\pricecap(\val) \geq \max(\pvc(\val), \sup \left(
\pricecap(\lowval) + (\pricern(\val) - \pricern(\lowval))\ |\ \lowval
< \val \right)$.





We assume this for the rest of the proof, that $\pricecap \geq
\pvc$. As a result, every bidder when truthfully reporting is in the
risk-neutral part of their utility function upon winning, either at
the cusp or strictly below the cusp. Given this, any such bidder is
risk-neutral when considering overbidding and paying a potentially
higher price upon winning. And as a result, if $\pricecap$ is ever
less steep than $\pricern$ at a point $\val$ - specifically, if ever
$\dpricecapright(\val) < \dpricern(\val)$ - a bidder with value $\val$
would prefer to overreport, violating BIC.

As $\pricecap$ must then always be as steep as $\pricern$, we have
$\pricecap(\val) - \pricecap(\lowval) \geq (\pricern(\val) -
\pricern(\lowval))\qquad \forall \lowval \leq \val$ and hence
$$\pricecap(\val) \geq \max(\pvc(\val), \sup \left( \pricecap(\lowval) + (\pricern(\val) - \pricern(\lowval))\ |\ \lowval < \val \right).$$





\paragraph{$\left(\pricecap(\val) \leq \max(\pvc(\val), \ldots)\right)$.} 
We consider here the case that $\pricecap(\val) > \pvc(\val)$; if
$\pricecap(\val) = \pvc(\val)$, then clearly the inequality we desire
holds. Begin by assuming for contradiction's sake that
$\pricecap(\val) > \sup \left(\pricecap(\lowval) + (\pricern(\val) -
\pricern(\lowval)) | \lowval < \val \right)$. We will show that this
must entail that the marginal cost that a bidder with value $\val$ is
paying for allocation is strictly above $\val$, and hence such a
bidder would prefer to underbid.

So, in this case, we have $\pricecap(\val) > \lim_{\lowval \to \val}
\pricecap(\lowval) + (\pricern(\val) - \pricern(\lowval))$, and as
$\alloc$ is strictly monotone, we have

\begin{align*}
\lim_{\lowval \to \val} \frac{\pricecap(\val) - \pricecap(\lowval)}{\alloc(\val) - \alloc(\lowval)} > \lim_{\lowval \to \val} \frac{\pricern(\val) - \pricern(\lowval)}{\alloc(\val) - \alloc(\lowval)}.
\end{align*} 

By the risk-neutral payment identity, we know that $\lim_{\lowval \to \val} \frac{\pricern(\val) - \pricern(\lowval)}{\alloc(\val) - \alloc(\lowval)} = \val$, that the marginal cost of an increase in allocation probability at $\val$ is $\val$. Hence, 
$\lim_{\lowval \to \val} \frac{\pricecap(\val) - \pricecap(\lowval)}{\alloc(\val) - \alloc(\lowval)} > \val$ and there exists a $\delta>0$ such that there exist arbitrarily small values $\epsilon>0$ s.t. 
\begin{align*}
\frac{\pricecap(\val) - \pricecap(\val-\epsilon)}{\alloc(\val) - \alloc(\val-\epsilon)} &\geq \val + \delta.
\end{align*}
Then, for any such $\epsilon$, we have $\alloc(\val-\epsilon)\val - \pricecap(\val-\epsilon) \geq \alloc(\val)\val - \pricecap(\val) + \delta(\alloc(\val) - \alloc(\val-\epsilon))$, and since $\alloc$ is strictly monotonic,  
\begin{align*}
\alloc(\val-\epsilon)\val - \pricecap(\val-\epsilon) &> \alloc(\val)\val - \pricecap(\val).
\end{align*} 

If the bidder with value $\val$ is in the risk-neutral part of their
utility function upon winning and reporting $\val-\epsilon$, this
states that they will strictly prefer to underreport $\val-\epsilon$
rather than truthfully reporting $\val$, violating BIC.

If $\alloc$ is continuous immediately below $\val$, then $\pricecap$
must also be continuous, and hence given that $\pricecap(\val) >
\pvc(\val)$ by assumption, there is a $\gamma >0$ s.t. for all points
$z\in [\val-\gamma, \val]$, $\pricecap(z) > (\val-\capa)\alloc(z)$ -
hence at every such point, the bidder with value $\val$ is in the
risk-neutral part of their utility function upon underbidding
$z$. Then choose an $\epsilon$ to be smaller than $\gamma$, and the
above conditions hold, giving that the bidder with value $\val$
desires to report $\val-\epsilon$, violating BIC.

If $\alloc$ is not continuous immediately below $\val$, then
$\pricecap$ can also be discontinuous immediately below $\val$. In
such a case though, if the marginal cost of the increased allocation
is higher than $\val$, then the bidder will still want to misreport
lower. In particular, for any lower point $\val-\epsilon$, we know
from above that $\pricecap(\val-\epsilon) \geq
(\val-\epsilon-\capa)\alloc(\val)$ - so on misreporting, a bidder with
value $\val$ will be really close to being risk-neutral. Thus, we can
choose an $\epsilon$ such that the utility loss from being capacitated
-
$\alloc(\val-\epsilon)(\val-\frac{\pricecap(\val-\epsilon)}{\alloc(\val-\epsilon)}
- \capa)$ is less than $\delta(\alloc(\val) - \alloc(\val-\epsilon))$,
the guaranteed difference between misreporting and truthfully
reporting. By our assumption here that $\alloc$ is not continuous
immediately below $\val$, this is feasible. Hence a bidder with value
$\val$ would still wish to misreport $\val-\epsilon$, violating BIC.
\end{proof}

}

\end{document}